%
%
%
\documentclass[11 pt]{article}                  
\usepackage{graphicx}
\usepackage[all]{xy}
\usepackage{color}
\usepackage{latexsym}
\usepackage{amsmath,amstext}
\usepackage{amssymb}
\usepackage{epsfig}
\usepackage{graphics}
\usepackage{euscript}
\usepackage{ifthen}
\usepackage{enumerate}
\usepackage{wrapfig}
\usepackage{amsthm}
\usepackage{multicol}
\usepackage{paralist}
\usepackage{verbatim}
\usepackage[colorlinks=true, urlcolor=blue, linkcolor=blue, citecolor=blue]{hyperref}
\usepackage[margin=1in]{geometry}
\usepackage{placeins}
\usepackage[linesnumbered,lined,commentsnumbered,ruled]{algorithm2e}
\usepackage{caption}
\usepackage{subcaption}
\usepackage{longtable}
\usepackage{lscape}
\usepackage{fancyhdr}
\usepackage{listings}
\usepackage{mathtools}
\usepackage{appendix} 
\usepackage{float}
\usepackage{textcomp}
\usepackage[inline]{enumitem}
\usepackage{mathabx}

\usepackage{fontenc}
\usepackage{imakeidx}
\makeindex

\usepackage[utf8]{inputenc}
\usepackage{csquotes}
\usepackage{xparse}
\usepackage[dvipsnames]{xcolor}
\definecolor{VIOLET}{rgb}{0.2, 0.5, 0.478}

\usepackage[normalem]{ulem}
%
\theoremstyle{plain}
\newtheorem{theorem}{Theorem}[section]

\newtheorem{lemma}[theorem]{Lemma}
\newtheorem{proposition}[theorem]{Proposition}

\theoremstyle{definition}
\newtheorem{definition}[theorem]{Definition}
\newtheorem{remark}[theorem]{Remark}

\numberwithin{equation}{section}
\numberwithin{equation}{section}

\newcommand{\R}{\mathbb{R}}

\newcommand{\C}{\mathbb{C}}


\begin{document}

\title{Embeddability of centrosymmetric matrices capturing
the double-helix structure in natural and synthetic DNA}
%

\author{Muhammad Ardiyansyah        \and
        Dimitra Kosta \and 
        Jordi Roca-Lacostena}




\maketitle
\begin{abstract}
 \noindent In this paper, we discuss the embedding problem for centrosymmetric matrices, which are higher order generalizations of the matrices occurring in strand symmetric models. These models capture the substitution symmetries arising from the double helix structure of the DNA. Deciding whether a transition matrix is embeddable or not enables us to know if the observed substitution probabilities are consistent with a homogeneous continuous time substitution model, such as the Kimura models, the Jukes-Cantor model or the general time-reversible model. On the other hand, the generalization to higher order matrices is motivated by the setting of synthetic biology, which works with different sizes of genetic alphabets.
\end{abstract}

\smallskip

\smallskip

\section{Introduction}

Phylogenetics is the study of evolutionary relationships among species that aims to infer the evolutionary history among them. In order to model evolution, we consider a phylogenetic tree, that is a directed acyclic graph depicting the evolutionary relationships amongst a selected set of taxa. Phylogenetic trees consist of vertices and edges. Vertices represent biological entities, while edges between vertices represent the evolutionary processes between the taxa.

In order to describe the real evolutionary process along an edge of a phylogenetic tree, one often assumes that the evolutionary data occurred following a Markov process. A Markov process is a random process in which the future is independent of the past, given the present. Under this Markov process, transitions between $n$ states given by conditional probabilities are presented in a $n\times n$ Markov matrix $M$, namely a square matrix whose entries are nonnegative and rows sum to one. A well-known problem in probability theory is the so-called embedding problem which was initially posed by Elfving \cite{Elfving}. The embedding problem asks whether given a Markov matrix $M$, one can find a real square matrix $Q$ with rows summing to zero and non-negative off-diagonal entries, such that $M=\exp(Q)$. The matrix $Q$ is  called a Markov generator.

 In the complex setting, the embedding problem is completely solved by \cite{Higham}; a complex matrix $A$ is embeddable if and only if $A$ is invertible. However, as our motivation arises from molecular models of evolution we are interested in the embedding problem over the real numbers, so from now on we will denote by $M$ a real Markov matrix. It was shown by Kingman \cite{kingman1962imbedding} that if an $n \times n$ real Markov matrix $M$ is embeddable, then the matrix $M$ has $\det{M}>0$.  Moreover, in the same work by Kingman it was shown that $\det{M}>0$ is a necessary and sufficient condition for a $2\times 2$ Markov matrix $M$ to be embeddable. For $3\times 3$ Markov matrices a complete solution of the embedding problem is provided in a series of papers \cite{Cuthbert, Johansen, Carette1995CharacterizationsOE, ChenChen}, where the characterisation of embeddable matrices depends on the Jordan decomposition of the Markov matrix. For $4 \times 4$ Markov matrices the embedding problem is completely settled in a series of papers \cite{casanellas2020embeddability, casanellas2020embedding,RocaFernandez}, where similarly to the $3\times 3$ case the full characterisation of embeddable matrices is distinguished into cases depending on the Jordan form of the Markov matrices. 

For the general case of $n \times n$ Markov matrices, there are several results; some presenting necessary conditions \cite{Elfving, kingman1962imbedding, Runnenberg}, while others sufficient conditions \cite{Cuthbert, Fuglede, Goodman, davies2010embeddable} for embeddability of Markov matrices. Moreover, the embedding problem has been solved for special $n\times n$ matrices with a biological interest such as equal-input and circulant matrices \cite{BaakeSumner}, group-based models \cite{ardiyansyah2021model} and time-reversible models \cite{Jia}. Despite the fact that there is no theoretical explicit solution for the embeddability of general $n\times n$ Markov matrices, there are results \cite{casanellas2020embedding} that enable us to decide whether a $n\times n$ Markov matrix with distinct eigenvalues is embeddable or not. This is achieved by providing an algorithm that outputs all Markov generators of such a Markov matrix \cite{casanellas2020embedding, Roca}.

In this paper, we focus on the embedding problem for $n \times n$ matrices that are symmetric about their center and are called centrosymmetric matrices (see Definition~\ref{Definition: cs matrix}). We also study a variation of the famous embedding problem called model embeddability\index{model embeddability}, where apart from the requirement that the Markov matrix is the matrix exponential of a rate matrix, we additionally ask that the rate matrix follows the model structure. For instance, for centrosymmetric matrices, model embeddability means that the rate matrix is also centrosymmetric.

The motivation for studying centrosymmetric matrices comes from evolutionary biology, as the most general nucleotide substitution model when considering both DNA strands admits any $n\times n$ centrosymmetric Markov matrix as a transition matrix, where $n$ is the even number of nucleotides. For instance, by considering the four natural nucleotides \texttt{A-T, C-G} we arrive at the strand symmetric model, a well-known phylogenetic model whose substitution probabilities reflect the symmetry arising from the complementarity between the two strands that the DNA is composed of (see \cite{CasanellasSullivant}). 
In particular, a strand
symmetric model for DNA must have the following equalities of probabilities
in the root distribution:
\begin{eqnarray}
\label{symmetries}
\pi_{\texttt{A}} = \pi_{\texttt{T}} \text{ and }
\pi_{\texttt{C}} = \pi_{\texttt{G}}
\end{eqnarray}
and the following equalities of probabilities in the transition matrices $(\theta_{ij})$
\begin{eqnarray*}
\theta_{\texttt{A} \texttt{A}} = \theta_{\texttt{T} \texttt{T}}, \theta_{\texttt{A} \texttt{C}} = \theta_{\texttt{T} \texttt{G}}, \theta_{\texttt{A} \texttt{G}} = \theta_{\texttt{T} \texttt{C}}, \theta_{\texttt{A} \texttt{T}} = \theta_{\texttt{T} \texttt{A}},\\
\theta_{\texttt{C} \texttt{A}} = \theta_{\texttt{G} \texttt{T}} , \theta_{\texttt{C} \texttt{C}} = \theta_{\texttt{G} \texttt{G}}, \theta_{\texttt{C} \texttt{G}} = \theta_{\texttt{G} \texttt{C}}, \theta_{\texttt{C} \texttt{T}} = \theta_{\texttt{G} \texttt{A}}.
\end{eqnarray*}

Therefore, the corresponding transition matrices of this model are $4\times 4$ centrosymmetric matrices, usually called strand symmetric Markov matrices in this context. In the strand symmetric model there are less restrictions on the way genes mutate from ancestor to
child compared to other widely known molecular models of evolution. In fact, special cases of the strand symmetric model are the group-based phylogenetic models such as the Jukes-Cantor (JC)\index{Jukes-Cantor (JC) model }
model, the Kimura 2-parameter (K2P)\index{Kimura 2-parameter model (K2P)} and Kimura 3-parameter (K3P)\index{Kimura 3-parameter model (K3P)} models. The algebraic structure of strand symmetric models was initially studied in \cite{CasanellasSullivant}, where it was argued that strand symmetric models capture more biologically meaningful features of real DNA sequences than the commonly used group-based models, as for instance, in any group-based model, the stationary distribution of bases for a single species is always the uniform distribution, while computational evidence in \cite{YapPachter} suggests that the
stationary distribution of bases for a single species is rarely uniform, but must
always satisfy the symmetries (\ref{symmetries}) arising from nucleotide complementarity, as assumed by the strand symmetric model. 

In this article, we also explore higher order centrosymmetric matrices  for which $n > 4$, which is justified by the use of synthetic nucleotides.  
One of main goals of synthetic biology is to expand the genetic alphabet to include an unnatural or synthetic base pair. The more letters in a genetic system could possibly lead to an increased potential for retrievable information storage and bar-coding and combinatorial tagging \cite{benner2005synthetic}. Naturally the four-letter genetic alphabet consists of just two pairs, \texttt{A-T} and \texttt{G-C}. 
In 2012, a genetic system comprising of three base pairs was introduced in \cite{malyshev2012efficient}. In addition to the natural base pairs, the third, unnatural or synthetic base pair \texttt{5SICS-MMO2} was proven to be functionally equivalent to a natural base pair. Moreover, when it is combined with the natural base pairs,  \texttt{5SICS-MMO2} provides a fully functional six-letter genetic alphabet. Namely, six-letter genetic alphabets can be copied \cite{yang2007enzymatic}, polymerase chain reaction (PCR)-amplified and sequenced \cite{sismour2004pcr,yang2011amplification}, transcribed to six-letter RNA and back to six-letter DNA \cite{leal2015transcription}, and used to encode proteins with added amino acids \cite{bain1992ribosome}. This biological importance and relevance of the above six-letter genetic alphabets motivates us to particularly study the $6\times 6$ Markov matrices describing the probabilities of changing base pairs in the six-letter genetic system in Section~\ref{sect:6x6}. When considering both DNA strands, each substitution is observed twice due to the complementarity between both strands, and hence the resulting transition matrix is centrosymmetric.

Moreover there are other synthetic analogs to natural DNA which justify studying centrosymmetric matrices for $n>6$. For instance, hachimoji DNA is a synthetic DNA that uses four synthetic nucleotides \texttt{B, Z, P, S}  in addition to the four natural ones \texttt{A,C, G, T}. With the additional four synthetic ones, hachimoji DNA forms four types of base pairs, two of which are unnatural: \texttt{P} binds with \texttt{Z} and \texttt{B} binds with \texttt{S}. The complementarity between both strands of the DNA implies that the transition matrix is centrosymmetric. Moreover, the research group responsible for the hachimoji DNA system had also studied a synthetic DNA analog system that used twelve different nucleotides, including the four found in DNA (see \cite{Benner12nuclei}).
Although the biological models which motivate the study of centrosymmetric matrices in this paper require $n$ to be an even number due to the double-helix structure of DNA, in Section~\ref{sect:centrosymmetric}, we include the case of $n$ being odd for completeness.

 Apart from embeddability, namely existence of Markov generators, it is also natural to ask about uniqueness of a Markov generator which is called the rate identifiability problem. Identifiability is a property which a model must satisfy in order for precise statistical inference to be possible.
A class of phylogenetic models is identifiable if any two models in the class produce different data distributions. In this article, we further develop the results on rate identifiability of the Kimura two parameter model 
\cite{casanellas2020embeddability} to study rate identifiability for strand symmetric models. We also show that there are embeddable strand symmetric Markov matrices with non identifiable rates, namely the Markov generator is not unique. {\color{black} Moreover, we show that {\color{black}strand symmetric} Markov matrices are not generically identifiable, that is, there exists a positive measure subset of {\color{black}strand symmetric} Markov matrices containing embeddable matrices whose rates are not identifiable.}

This paper is organised as following. In Section~\ref{sect:prelim}, we introduce the basic definitions and results on embeddability. {\color{black} In Section~\ref{sect:embed}, we give a characterisation for a $4 \times 4$ centrosymmetric Markov matrix $M$ (also known as a strand symmetric Markov matrix) with four distinct real nonnegative eigenvalues to be embeddable providing necessary and sufficient conditions in Theorem~\ref{theorem:criteria SSM-embeddability}, while we also discuss their rate
identifiability property in Proposition~\ref{prop:rate_idendifiable}. }    Moreover in Section~\ref{sect:volume}, {\color{black}using the conditions of our main result Theorem~\ref{theorem:criteria SSM-embeddability}}, we compute the relative volume of all {\color{black} strand symmetric} Markov matrices relative to the {\color{black} strand symmetric} Markov matrices with positive eigenvalues and $\Delta >0$, as well as the relative volume of all {\color{black} strand symmetric} Markov matrices relative to the {\color{black} strand symmetric} Markov matrices with four distinct eigenvalues and $\Delta >0$. {\color{black}We also compare the results on relative volumes obtained using our method with the algorithm suggested in \cite{casanellas2020embedding} to showcase the advantages of our method.} 
In Section~\ref{sect:centrosymmetric}, we study higher order centrosymmetric matrices and motivate their use in Section~\ref{sect:6x6} by exploring the case of synthetic nucleotides where the phylogenetic models admit $6 \times 6$ centrosymmetric mutation matrices. Finally, Section~\ref{discussion} discusses implications and possibilities for future work.

\section{Preliminaries}
\label{sect:prelim}

In this section we will introduce the definitions and results that will be required throughout the paper. We will denote by $M_n(\mathbb{K})$ the set of $n \times n$ square matrices with entries in the field $\mathbb{K}= \mathbb{R} \text{ or } \mathbb{C}$. The subset of non-singular matrices in $M_n(\mathbb{K})$ will be denoted by $GL_{n}(\mathbb{K})$. 

\begin{definition}
{\color{black}
We will call \textit{Markov\index{Markov matrix} (or transition\index{transition matrix}) matrices} the non-negative real square matrices with rows summing to one}.  \textit{Rate matrices\index{rate matrix}} are real square matrices with rows summing to zero and non-negative off-diagonal entries.
    
\end{definition}

{\color{black} In this paper, we are focusing on a subset of Markov matrices called centrosymmetric Markov matrices\index{centrosymmetric (CS) Markov matrix}.}
\begin{definition}\label{Definition: cs matrix}
{\color{black}A real  $n\times n$ matrix $A=(a_{i,j})$} is said to be \textit{centrosymmetric (CS)} if $$a_{i,j}=a_{n+1-i,n+1-j}$$
for every $1\leq i,j\leq n$.
\end{definition}

 {\color{black}Definition \ref{Definition: cs matrix} reveals} that a CS matrix is nothing more than a square matrix which is symmetric about its center. This class of matrices has been previously studied, for instance, in \cite[page 124]{aitken2017determinants} and \cite{weaver1985centrosymmetric}. Examples of CS matrices for $n=5$ and $n=6$, {\color{black} are the following two matrices respectively:}
    \begin{align*}
    \begin{pmatrix}
    a_{11}&a_{12}&a_{13}&a_{14}& a_{15}\\
    a_{21}&a_{22}&a_{23}&a_{24}&a_{25}\\
    a_{31}&a_{32}&a_{33}&a_{32}&a_{31}\\
    a_{25}&a_{24}&a_{23}&a_{22}&a_{21}\\
    a_{15}&a_{14}&a_{13}&a_{12}&a_{11}\\
    \end{pmatrix} \qquad  \text{ and } \qquad  \begin{pmatrix}
    a_{11}&a_{12}&a_{13}&a_{14}& a_{15}&a_{16}\\
    a_{21}&a_{22}&a_{23}&a_{24}&a_{25}&a_{26}\\
    a_{31}&a_{32}&a_{33}&a_{34}&a_{35}&a_{36}\\
    a_{36}&a_{35}&a_{34}&a_{33}&a_{32}&a_{31}\\        a_{26}&a_{25}&a_{24}&a_{23}&a_{22}&a_{21}\\
    a_{16}&a_{15}&a_{14}&a_{13}&a_{12}&a_{11}\\
    \end{pmatrix}.
\end{align*}
{\color{black}The class of CS matrices} plays an important role in the study of Markov processes since they are indeed transition matrices for some processes {\color{black}in evolutionary biology.} {\color{black}For instance, in \cite{kimura1957some}, centrosymmetric matrices are used to study the random assortment phenomena of subunits in chromosome division. Furthermore, in \cite{ schensted1958appendix}, the same centrosymmetric matrices appear as the transition matrices in the model of subnuclear segregation in the macronucleus of ciliates. Finally, the work \cite{iosifescu2014finite} examines a special case of random genetic drift phenomenon, which consists of a population
consisting of individuals that are able to produce a single type of gamete and the transition matrices of the associated Markov chain are given by centrosymmetric matrices. \\
}

{\color{black}The embedding problem is directly related to  the notions of matrix exponential and logarithm which we introduce for completeness below.}
\begin{definition}
We define the exponential\index{exponential $\exp(A)$ of matrix $A$} $\exp(A)$ of a matrix $A$, using the Taylor power series of the function $f(x)=e^x$, as
$$
\exp(A) = \sum_{k=0}^{\infty} \frac{A^k}{k!},$$
where $A^0= I_n$ and $I_n$ denotes the $n\times n$ identity matrix. If $A=P \;diag(\lambda_1,\dots,\lambda_n)\;P^{-1}$ is an eigendecomposition of $A$, then $\exp(A) = P \;diag(e^{\lambda_1},\dots,e^{\lambda_n})\;P^{-1}$.
Given a matrix $A\in M_n(\mathbb{K})$, a matrix $B\in M_n(\mathbb{K})$ is said to be a \textit{logarithm of}\index{logarithm of a matrix} $A$ if $\exp(B)=A$. If $v$ is an eigenvector corresponding to the eigenvalue $\lambda$ of $A$, then $v$ is an eigenvector corresponding to the eigenvalue $e^{\lambda}$ of $\exp(A)$.
\end{definition}

A Markov matrix $M$ is called \textit{embeddable\index{embeddable matrix}} if it can be written as the exponential of a rate matrix $Q$, namely $M= \exp(Q)$. Then any rate matrix $Q$ satisfying the equation $M= \exp(Q)$ is called a \textit{Markov generator\index{Markov generator} of }$M$.  \\

{\color{black} 
\begin{remark}
We should note here that embeddable Markov matrices occur when we assume a continuous time Markov chain, in which case the Markov matrices have the form 
$$
M= \exp(t Q),
$$
where $t \geq 0$ represents time and $Q$ is a rate matrix. However, in the rest of the paper, we assume that $t$ is
incorporated in the rate matrix $Q$.
\end{remark}
}

The existence of multiple logarithms is a direct consequence of the distinct branches of the logarithmic function in the complex field. 

\begin{definition}
Given $z\in \mathbb{C}\setminus \mathbb{R}_{\leq0}$ and $k\in \mathbb{Z}$, the $k$-th branch of the logarithm of $z$ is $ \log_k(z):= \log|z| + (Arg(z)+2\pi k)i$, where {\color{black} $\log$ is the logarithmic function on the real field} and $Arg(z)\in (-\pi,\pi)$ denotes the principal argument of $z$. The logarithmic function arising from the branch $\log_0(z)$ is called the principal logarithm of $z$ and is denoted as $\log(z)$.
\end{definition}

 It is known that if $A$ is a matrix with no negative eigenvalues, then there is a unique logarithm of $A$ all of whose eigenvalues are given by the principal logarithm of the eigenvalues of $A$ \cite[Theorem 1.31]{Higham}. We refer to this unique logarithm as the \textit{principal logarithm\index{principal logarithm $Log(A)$ of a matrix $A$} of} $A$, denoted by $Log(A)$.

{\color{black}By definition, the Markov generators of a Markov matrix $M$ are those logarithms of $M$ that are rate matrices. In particular they are real logarithms of $M$.} The following result enumerates all the real logarithms with rows summing to zero of any given Markov matrix with positive determinant and {\color{black}distinct} eigenvalues. Therefore, all Markov generators of such a matrix are necessarily of this form.

\begin{proposition}[{\cite[Proposition 4.3]{casanellas2020embedding}}] \label{Prop:Logenum}
Let $M=P\; diag \big( 1,\lambda_1,\dots,\lambda_t, \mu_1,\overline{\mu_1},\dots,\mu_s,\overline{\mu_s} \big) \; P^{-1}$ be an $n\times n$ Markov matrix with $P\in GL_n(\mathbb{C})$ and {\color{black} distinct} eigenvalues $\lambda_i\in \mathbb{R}_{>0}$ for $i=1,\dots,t$ and $\mu_j \in \{z\in \C: Im(z) > 0\}$ for $j=1,\dots,s$, all of them pairwise distinct. Then, a matrix $Q$ is a real logarithm of $M$ with rows summing to zero if and only if  $$Q=P\; diag \Big( 0, \log(\lambda_1),\dots,\log(\lambda_t), \log_{k_1}(\mu_1),\overline{\log_{k_1}(\mu_1)},\dots,\log_{k_s}(\mu_s),\overline{\log_{k_s}(\mu_s)} \Big) \; P^{-1}$$ for some $k_1,\dots,k_j\in \mathbb{Z}$.
\end{proposition}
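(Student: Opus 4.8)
The plan is to separate the statement into a purely linear-algebraic description of \emph{all} complex logarithms of $M$, and then to impose the two extra requirements — real entries and rows summing to zero — one at a time, reading off on which eigenvalues the branch of the logarithm gets pinned down. First I would use that $M$ has $n$ distinct eigenvalues, so it is diagonalisable and $P$ is determined up to scaling and reordering of its columns. If $\exp(Q)=M$, then $Q$ commutes with $M$, since $M=\exp(Q)$ is a limit of polynomials in $Q$; and any matrix commuting with a matrix that has distinct eigenvalues is a polynomial in it, hence is diagonalised by the same $P$. Thus $Q=P\,diag(d_0,d_1,\dots)\,P^{-1}$ for some diagonal matrix, and the identity $\exp(Q)=P\,diag(e^{d_0},e^{d_1},\dots)\,P^{-1}=M$ forces $e^{d}$ to equal the matching eigenvalue of $M$ in each slot. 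Since none of the eigenvalues $1,\lambda_i,\mu_j,\overline{\mu_j}$ is zero or a negative real, each $d$ is a well-defined branch $\log_k$ of the corresponding eigenvalue. This already exhibits every complex logarithm of $M$ in the displayed shape, with a priori arbitrary integer branch choices; the two conditions then cut this set down.

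Next I would impose that the rows of $Q$ sum to zero, that is $Q\mathbf{1}=0$ with $\mathbf{1}$ the all-ones vector. Because $M\mathbf{1}=\mathbf{1}$, the vector $\mathbf{1}$ is the eigenvector of $M$ for the eigenvalue $1$, hence (up to scaling) the first column of $P$; as $Q$ shares this eigenvector we get $Q\mathbf{1}=d_0\mathbf{1}$, so the condition is exactly $d_0=0$. Since $d_0$ must be a branch $\log_k(1)=2\pi k i$, this forces $k=0$, i.e. the principal branch, yielding the entry $\log(1)=0$ in the first slot.

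Finally I would impose reality, which I expect to be the main obstacle because of the bookkeeping with column orderings. For a simple real eigenvalue $\lambda_i>0$ the eigenspace is one-dimensional and defined over $\R$, so its eigenvector may be taken real; applying the real matrix $Q$ to it shows the corresponding $d$ is real, and the only real branch of $\log\lambda_i$ is $\log(\lambda_i)$. For a conjugate pair $\mu_j,\overline{\mu_j}$ with eigenvectors $v_j$ and $\overline{v_j}$, conjugating $Qv_j=d_jv_j$ and using $\overline{Q}=Q$ gives $Q\overline{v_j}=\overline{d_j}\,\overline{v_j}$, so the branch attached to $\overline{\mu_j}$ is forced to be $\overline{d_j}=\overline{\log_{k_j}(\mu_j)}$ — exactly the pairing in the statement. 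For the converse I would order the columns of $P$ so that $\overline{P}=P\Pi$, where $\Pi$ is the involution swapping each conjugate pair of columns and fixing the real ones, and then check that the paired branch choice satisfies $\Pi\,\overline{D}\,\Pi=D$, whence $\overline{Q}=P(\Pi\,\overline{D}\,\Pi)P^{-1}=PDP^{-1}=Q$ is genuinely real. Tracking this permutation consistently in both directions, and confirming that real eigenvectors exist for the simple real eigenvalues, is the only place where real care is needed; the row-sum and real-eigenvalue conditions are otherwise immediate once the commuting/diagonalisation reduction of the first paragraph is in hand.
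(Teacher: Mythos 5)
Your proposal is correct; note that the paper itself gives no proof of this proposition --- it is imported verbatim as Proposition 4.3 of \cite{casanellas2020embedding} --- so the only comparison available is with the standard argument, which is exactly what you give. The key reduction (any $Q$ with $\exp(Q)=M$ commutes with $M$, hence, because the eigenvalues of $M$ are distinct, preserves each one-dimensional eigenspace and is diagonalised by the same $P$, with $e^{d}$ matching the eigenvalue slot by slot) is the heart of the matter, and your handling of the two extra constraints is right: the row-sum condition is $Q\mathbf{1}=0$, which pins $d_0=\log_k(1)=2\pi k i$ to $k=0$; reality forces the principal branch on the real slots and conjugate branches on the paired slots. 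The one place to be slightly more careful than your phrasing suggests is the converse reality check: you cannot in general arrange $\overline{P}=P\Pi$ by reordering alone, only $\overline{P}=P\Pi S$ for a diagonal scaling $S$ (the columns of $P$ are fixed only up to scale). But since $S$ is diagonal it commutes with $\overline{D}$ and cancels, or equivalently one may rescale the columns of $P$ without changing $Q=PDP^{-1}$; you flag this issue yourself, and it does not affect the conclusion.
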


\begin{remark}
\rm In particular, the principal logarithm of $M$ can be computed as $$Log(M) = P\; diag \Big( 0, \log(\lambda_1),\dots,\log(\lambda_t), \log(\mu_1),\log(\overline{\mu_1}),\dots,\log(\mu_s),\log (\overline{\mu_s}) \Big) \; P^{-1}.$$
\end{remark}

In this paper, we focus on the embedding problem for the class of centrosymmetric matrices. In Section~\ref{sect:embed}, we will first study the embeddability of $4\times 4$ centrosymmetric Markov matrices, which include the K3P, K2P and JC Markov matrices. In Section~\ref{sect:centrosymmetric} and Section~\ref{sect:6x6}, we will further study the embeddability of higher order centrosymmetric Markov matrices.

\section{Embeddability of {\color{black}$4\times 4$ centrosymmetric matrices}}
\label{sect:embed}
In this section, we begin our study by analyzing the embeddability of $4\times 4$ centrosymmetric matrices {\color{black} also known as strand symmetric matrices}. We will provide necessary and sufficient conditions for $4\times 4$ CS matrices to be embeddable. Moreover, we will discuss their rate identifiability problem as well.

Phylogenetic evolutionary models whose mutation matrices are $4\times 4$ centrosymmetric matrices are also called {\color{black}s}trand {\color{black}s}ymmetric Markov models. The transition matrices {\color{black}in the strand symmetric model} are assumed to have the form
\begin{align*}M=
    \begin{pmatrix}
    m_{11}&m_{12}&m_{13}&m_{14}\\
    m_{21}&m_{22}&m_{23}&m_{24}\\
    m_{24}&m_{23}&m_{22}&m_{21}\\
    m_{14}&m_{13}&m_{12}&m_{11}\\
    \end{pmatrix},
\end{align*}
where $$m_{11}+m_{12}+m_{13}+m_{14}=1=m_{21}+m_{22}+m_{23}+m_{24}\mbox{ and }m_{ij}\geq 0.$$ In biology, $4\times 4$ centrosymmetric Markov (rate) matrices are often referred {\color{black} to  as \textit{strand symmetric} Markov (rate) matrices. In this article, we will use $4\times 4$ centrosymmetric and  strand symmetric interchangeably. Recall that the K3P matrices are assumed to have the form
\begin{align*}
M=
    \begin{pmatrix}
   m_{11}&m_{12}&m_{13}&m_{14}\\
   m_{12}&m_{11}&m_{14}&m_{13}\\
   m_{13}&m_{14}&m_{11}&m_{12}\\
   m_{14}&m_{13}&m_{12}&m_{11}\\
    \end{pmatrix}.
\end{align*}
In the case of the K2P matrices, we additionally have $m_{12} = m_{13}$, while in the case of JC matrices,  $m_{12} = m_{13} = m_{14}$.
It can be easily seen that K3P, K2P, and JC Markov (rate) matrices are centrosymmetric.}

Let us define the following matrix
\begin{align} \label{eq:S}
    {\color{black}S=\begin{pmatrix}
1&0&0&1\\
0&1&1&0\\
0&1&-1&0\\
1&0&0&-1
\end{pmatrix}};
\end{align} 
{\color{black}compare \cite[Section 6]{casanellas2013generating}. For a {\color{black}$4\times 4$ CS} Markov matrix $M$, we define $F(M):=S^{-1}M S$}. By direct computation, it can be checked that $F(M)$ is a block diagonal matrix
\begin{align}\label{Fourier transform of SSM MArkov matrices}
    F(M)=
    \begin{pmatrix}
    \lambda&1-\lambda&0&0\\
    1-\mu&\mu&0&0\\
    0&0&\alpha&\alpha'\\
    0&0&\beta'&\beta\\
    \end{pmatrix},
\end{align}
where

\begin{eqnarray}\label{eq:alpha,beta}
\begin{split}
    \lambda=m_{11}+m_{14},\ & \qquad &\mu=m_{22}+m_{23},\\
\alpha=m_{22}-m_{23},\ & \qquad &{\color{black}\alpha'=m_{21}-m_{24}},\\
\beta=m_{11}-m_{14},\ & \qquad &{\color{black}\beta'=m_{12}-m_{13}}.
\end{split}
\end{eqnarray}
{\color{black}We will then define two matrices,   $M_1:=\begin{pmatrix}\lambda&1-\lambda\\1-\mu&\mu\\\end{pmatrix}$ and $M_2:=\begin{pmatrix}\alpha&\alpha'\\\beta'&\beta\\\end{pmatrix}$, which are the upper and lower block matrices in (\ref{Fourier transform of SSM MArkov matrices}), respectively}.

Similarly, the rate matrices in strand symmetric models are assumed to have the {\color{black} $4 \times 4$ centrosymmetric} form
\begin{align*}Q=
    \begin{pmatrix}
    q_{11}&q_{12}&q_{13}&q_{14}\\
    q_{21}&q_{22}&q_{23}&q_{24}\\
    q_{24}&q_{23}&q_{22}&q_{21}\\
    q_{14}&q_{13}&q_{12}&q_{11}\\
    \end{pmatrix},
\end{align*}
where $$q_{11}+q_{12}+q_{13}+q_{14}=0=q_{21}+q_{22}+q_{23}+q_{24}\mbox{ and }q_{ij}\geq 0\mbox{ for } i\neq j.$$
{\color{black}So}, for a {\color{black} $4\times 4$ CS} rate matrix $Q$, we can also define $F(Q):=S^{-1}Q S$. By direct computation, it can be checked that
\begin{align}\label{eq:F(Q)}
    F(Q)=
    \begin{pmatrix}
    -\rho&\rho&0&0\\
    \sigma&-\sigma&0&0\\
    0&0&\delta&\delta'\\
    0&0&\gamma'&\gamma\\
    \end{pmatrix},
\end{align}
where
\begin{eqnarray*}
\rho=q_{12}+q_{13}, & \sigma=q_{21}+q_{24},\\
\delta=q_{22}-q_{23}, &{\color{black}\delta'=q_{21}-q_{24}},\\
\gamma=q_{11}-q_{14}, &{\color{black}\gamma'=q_{12}-q_{13}}.
\end{eqnarray*}
{\color{black}We will then define two matrices,  $Q_1:=\begin{pmatrix}-\rho&\rho\\\sigma&-\sigma\\\end{pmatrix}$ and $Q_2:=\begin{pmatrix}\delta&\delta'\\\gamma'&\gamma\\\end{pmatrix}$, which are the upper and lower block matrices in (\ref{eq:F(Q)}), respectively}.

The following results provide necessary conditions for a {\color{black}$4\times 4$ CS} Markov matrix to be embeddable.

\begin{lemma}\label{lemma:necessary condition 4 x 4}
Let $M=(m_{ij})$ be a $4\times 4$  CS Markov matrix and $M=\exp(Q)$ for some CS rate matrix $Q$. Then 
\begin{enumerate}
    \item $m_{11}+m_{14}+m_{22}+m_{23}>1$ and
    \item $(m_{22}-m_{23})(m_{11}-m_{14})>(m_{24}-m_{21})(m_{13}-m_{12}).$
\end{enumerate}

\end{lemma}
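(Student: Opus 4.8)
The key observation is that conjugation by the invertible matrix $S$ intertwines the exponential map: from $M=\exp(Q)$ we obtain
$$F(M)=S^{-1}MS=S^{-1}\exp(Q)S=\exp(S^{-1}QS)=\exp(F(Q)).$$
Since both $F(M)$ and $F(Q)$ are block diagonal (see (\ref{Fourier transform of SSM MArkov matrices}) and (\ref{eq:F(Q)})) and $\exp$ respects a block decomposition, this identity splits into the two $2\times 2$ relations $M_1=\exp(Q_1)$ and $M_2=\exp(Q_2)$. My plan is to read off each of the two inequalities from one of these blocks, so that everything reduces to two elementary $2\times 2$ facts.

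For the first inequality I would analyse $Q_1=\begin{pmatrix}-\rho&\rho\\\sigma&-\sigma\end{pmatrix}$. Its trace is $-(\rho+\sigma)$ and its determinant is $0$, so its eigenvalues are $0$ and $-(\rho+\sigma)$, both real. Hence the eigenvalues of $M_1=\exp(Q_1)$ are $1$ and $e^{-(\rho+\sigma)}$, and comparing traces gives
$$m_{11}+m_{14}+m_{22}+m_{23}=\lambda+\mu=\operatorname{tr}(M_1)=1+e^{-(\rho+\sigma)}.$$
Because $\rho+\sigma$ is real we have $e^{-(\rho+\sigma)}>0$, which yields $m_{11}+m_{14}+m_{22}+m_{23}>1$. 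Note that this step does not even require the sign constraints on the off-diagonal entries of $Q$; it only uses that $Q_1$ has a zero eigenvalue.

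For the second inequality I would first translate it into the parameters of (\ref{eq:alpha,beta}). Using $m_{24}-m_{21}=-\alpha'$ and $m_{13}-m_{12}=-\beta'$, the claim $(m_{22}-m_{23})(m_{11}-m_{14})>(m_{24}-m_{21})(m_{13}-m_{12})$ becomes $\alpha\beta>\alpha'\beta'$, that is,
$$\det(M_2)=\alpha\beta-\alpha'\beta'>0.$$
Since $M_2=\exp(Q_2)$, the identity $\det\exp=\exp\operatorname{tr}$ gives $\det(M_2)=e^{\operatorname{tr}(Q_2)}$ with $\operatorname{tr}(Q_2)=\delta+\gamma\in\mathbb{R}$; hence $\det(M_2)>0$ and the inequality follows.

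The argument is almost entirely bookkeeping: the one conceptual point is that $S$ simultaneously block-diagonalises $M$ and $Q$ and commutes with $\exp$, reducing the problem to two $2\times 2$ blocks. The only place where care is genuinely needed is matching the sign conventions in (\ref{eq:alpha,beta}), so that the target inequality is correctly identified with $\det(M_2)>0$ rather than its reverse; once that is pinned down, both parts are immediate consequences of the facts that $\exp$ of a real number is positive and that $\det\exp = \exp\operatorname{tr}$.
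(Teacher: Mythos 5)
Your proof is correct and follows essentially the same route as the paper: conjugation by $S$ splits $M=\exp(Q)$ into the block relations $M_1=\exp(Q_1)$ and $M_2=\exp(Q_2)$, and the second inequality is exactly $\det(M_2)=e^{\operatorname{tr}(Q_2)}>0$. The only (harmless) difference is that for the first inequality the paper invokes Kingman's $2\times 2$ embeddability criterion for $M_1$, whereas you compute the eigenvalues of $Q_1$ directly to obtain $\operatorname{tr}(M_1)=1+e^{-(\rho+\sigma)}>1$.
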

\begin{proof}
We have that
$$F(M)=S^{-1}MS=S^{-1} \exp{(Q)} S=\exp(S^{-1}QS)=\exp{(F(Q))}.$$
Then 
$$\begin{pmatrix}
M_1&0\\0&M_2\\
\end{pmatrix}
=\exp(F(Q))=\begin{pmatrix}
\exp(Q_1)&0\\0& \exp(Q_2)\\
\end{pmatrix}.$$ Thus, $M_1$ is an embeddable $2\times 2$ Markov matrix.
Using the embeddability criteria of $2\times 2$ Markov matrices in \cite{kingman1962imbedding}, we have that $1<tr(M_1)=\lambda+\mu$, which is the desired inequality. {\color{black}Additionally, since $M_2=\exp(Q_2)$, $det(M_2)>0$ as desired.}
\end{proof}

\begin{lemma}
Let $M=(m_{ij})$ be a $4\times 4$ CS Markov matrix and $M=\exp(Q)$ for some CS rate matrix $Q=(q_{ij})$. {\color{black}If $\lambda+\mu\neq 2$}, then
$$q_{12}+q_{13}=\frac{-\lambda+1}{\lambda+\mu-2}\ln (\lambda+\mu-1)$$
and
$$q_{21}+q_{24}=\frac{-\mu+1}{\lambda+\mu-2}\ln (\lambda+\mu-1).$$
\end{lemma}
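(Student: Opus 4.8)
The plan is to reduce the statement to a $2\times 2$ computation using the block-diagonalization already exploited in the proof of Lemma~\ref{lemma:necessary condition 4 x 4}. Since $M=\exp(Q)$, conjugating by $S$ gives $F(M)=\exp(F(Q))$, and because both $F(M)$ and $F(Q)$ are block diagonal, the upper blocks must satisfy $M_1=\exp(Q_1)$, where $Q_1=\begin{pmatrix}-\rho&\rho\\\sigma&-\sigma\end{pmatrix}$ with $\rho=q_{12}+q_{13}$ and $\sigma=q_{21}+q_{24}$. Hence it suffices to solve for $\rho$ and $\sigma$ from the single matrix equation $\exp(Q_1)=M_1$.

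First I would compute $\exp(Q_1)$ in closed form. Writing $a:=\rho+\sigma$, the matrix $Q_1$ has trace $-a$ and determinant $0$, so by the Cayley--Hamilton theorem $Q_1^2=-a\,Q_1$ and therefore $Q_1^k=(-a)^{k-1}Q_1$ for all $k\geq 1$. Summing the exponential series gives, for $a\neq 0$,
\begin{equation*}
\exp(Q_1)=I_2+\frac{1-e^{-a}}{a}\,Q_1 .
\end{equation*}

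Next I would pin down $a$. Taking determinants in $M_1=\exp(Q_1)$ yields $\det M_1=e^{\operatorname{tr}(Q_1)}=e^{-a}$; since $\det M_1=\lambda\mu-(1-\lambda)(1-\mu)=\lambda+\mu-1$, we get $e^{-a}=\lambda+\mu-1$. By part (1) of Lemma~\ref{lemma:necessary condition 4 x 4} we have $\lambda+\mu>1$, so $\lambda+\mu-1>0$ and the real logarithm is defined, giving $a=-\ln(\lambda+\mu-1)$. The hypothesis $\lambda+\mu\neq 2$ is exactly what guarantees $a\neq 0$ (equivalently $e^{-a}\neq 1$), so the closed form above is valid and its scalar prefactor is nonzero.

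Finally I would read off the off-diagonal entries. Comparing the $(1,2)$ and $(2,1)$ entries of $\exp(Q_1)=M_1$ gives $\frac{1-e^{-a}}{a}\rho=1-\lambda$ and $\frac{1-e^{-a}}{a}\sigma=1-\mu$. Substituting $1-e^{-a}=2-\lambda-\mu$ and $a=-\ln(\lambda+\mu-1)$ rewrites the common factor as $\frac{\lambda+\mu-2}{\ln(\lambda+\mu-1)}$, and solving for $\rho$ and $\sigma$ produces precisely the claimed formulas for $q_{12}+q_{13}$ and $q_{21}+q_{24}$. There is no serious obstacle in this argument; the only steps that require genuine care are invoking Lemma~\ref{lemma:necessary condition 4 x 4} to ensure $\lambda+\mu-1>0$ (so that $\ln(\lambda+\mu-1)$ is a real number) and using $\lambda+\mu\neq 2$ to divide by the nonvanishing scalar prefactor.
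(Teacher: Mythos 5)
Your proof is correct and follows essentially the same route as the paper: reduce to the upper block $M_1=\exp(Q_1)$, exponentiate the $2\times 2$ rate matrix in closed form, and solve for $\rho=q_{12}+q_{13}$ and $\sigma=q_{21}+q_{24}$. The only cosmetic difference is that you pin down $\rho+\sigma$ via $\det M_1=e^{\operatorname{tr}Q_1}$ whereas the paper sums the two diagonal equations (i.e.\ uses the trace), which amounts to the same identity $e^{-(\rho+\sigma)}=\lambda+\mu-1$.
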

\begin{proof}
By direct computations and the proof of Lemma~\ref{lemma:necessary condition 4 x 4},
$$M_1=\exp(Q_1)=\frac{1}{\rho+\sigma}\begin{pmatrix}
e^{-\rho-\sigma}\rho+\sigma&-e^{-\rho-\sigma}\rho+\rho\\
{\color{black}-e^{-\rho-\sigma}\sigma+\sigma}&e^{-\rho-\sigma}\sigma+\rho\\
\end{pmatrix}.$$
We then have the following system of equations:
\begin{equation}\label{lambdamu}
\lambda=\frac{e^{-\rho-\sigma}\rho+\sigma}{\rho+\sigma}\mbox{ and }\mu=\frac{e^{-\rho-\sigma}\sigma+\rho}{\rho+\sigma}.
\end{equation}
Summing the two equations, we get 
$$\lambda+\mu=e^{-\rho-\sigma}+1.$$
Note that by Lemma~\ref{lemma:necessary condition 4 x 4}, $\lambda+\mu>1.$ Therefore, 
\begin{equation}\label{rhosigma}
\rho+\sigma=-\ln (\lambda+\mu-1).
\end{equation}
Using Equation~(\ref{lambdamu}) and (\ref{rhosigma}), we obtain
$$\rho=\frac{-\lambda+1}{\lambda+\mu-2}\ln (\lambda+\mu-1)\qquad\mbox{and}\qquad\sigma=\frac{-\mu+1}{\lambda+\mu-2}\ln (\lambda+\mu-1).$$
{\color{black}The proof is now complete. }
\end{proof}

\smallskip

\smallskip

\begin{proposition}
\label{prop:block diagonal}
Given two matrices $A=(a_{ij}),B=(b_{ij}) \in M_2(\R)$, consider the block-diagonal matrix {\color{black}$C= diag(A,B)$}. Then the following statements hold: 
\begin{itemize}
    \item[i)]  {\color{black}$F^{-1}(C) := S C S^{-1}$} is a CS matrix. 
    \item[ii)]  {\color{black}$F^{-1}(C)$} is a Markov matrix if and only if $A$ is a Markov matrix and
    $$|b_{2 2}|\leq a_{1 1}, \qquad |b_{2 1}|\leq a_{1 2}, \qquad |b_{1 2}|\leq a_{ 2 1}, \qquad |b_{1 1}|\leq a_{2 2}.$$ 

    \item[iii)]  {\color{black}$F^{-1}(C)$} is a rate matrix if and only if $A$ is a rate matrix and 
    $$b_{2 2}\leq a_{1 1}(\leq0), \qquad |b_{2 1}|\leq a_{1 2} (=-a_{1 1}), \qquad |b_{1 2}|\leq a_{2 1}(=-a_{2 2}), \qquad b_{1 1}\leq a_{2 2}(\leq0).$$
    
\end{itemize}

\end{proposition}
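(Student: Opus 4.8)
The plan is to collapse the whole proposition onto a single explicit matrix computation, after first pinning down $S^{-1}$. The key structural observation is that $S$ is symmetric and satisfies $S^2 = 2I_4$: the four columns of $S$ are pairwise orthogonal, each of squared norm $2$, so $S^T S = 2 I_4$, and since $S = S^T$ this gives $S^2 = 2 I_4$. Hence $S^{-1} = \tfrac12 S$, and the conjugation defining $F^{-1}$ becomes the wholly concrete expression $F^{-1}(C) = S C S^{-1} = \tfrac12\, S C S$, which can be multiplied out by hand.

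First I would carry out this product for $C = diag(A,B)$ with $A=(a_{ij})$ and $B=(b_{ij})$. A direct multiplication yields
\begin{align*}
F^{-1}(C) = \frac{1}{2}\begin{pmatrix}
a_{11}+b_{22} & a_{12}+b_{21} & a_{12}-b_{21} & a_{11}-b_{22}\\
a_{21}+b_{12} & a_{22}+b_{11} & a_{22}-b_{11} & a_{21}-b_{12}\\
a_{21}-b_{12} & a_{22}-b_{11} & a_{22}+b_{11} & a_{21}+b_{12}\\
a_{11}-b_{22} & a_{12}-b_{21} & a_{12}+b_{21} & a_{11}+b_{22}
\end{pmatrix}.
\end{align*}
Part (i) is then immediate: comparing entry $(i,j)$ with entry $(5-i,5-j)$ shows the relation $m_{i,j}=m_{5-i,5-j}$ holds for every pair, so $F^{-1}(C)$ is CS regardless of $A$ and $B$. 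This display also serves as the dictionary translating the four entries of each block into the eight free entries of the CS matrix, and everything that follows is read off from it.

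For part (ii) I would first note that the two distinct row sums collapse to $a_{11}+a_{12}$ (rows $1,4$) and $a_{21}+a_{22}$ (rows $2,3$), so the rows of $F^{-1}(C)$ sum to one exactly when $A$ has rows summing to one. Nonnegativity of all sixteen entries is then governed by the elementary equivalence that, for real $x,y$, one has $x+y\ge 0$ and $x-y\ge 0$ simultaneously if and only if $|y|\le x$ (which in particular forces $x\ge 0$). Applying this to the four conjugate pairs $a_{11}\pm b_{22}$, $a_{12}\pm b_{21}$, $a_{21}\pm b_{12}$, $a_{22}\pm b_{11}$ produces precisely the four stated inequalities $|b_{22}|\le a_{11}$, $|b_{21}|\le a_{12}$, $|b_{12}|\le a_{21}$, $|b_{11}|\le a_{22}$; since these force $a_{ij}\ge 0$, together with the row-sum condition they are equivalent to ``$A$ Markov plus the inequalities,'' and both implications follow at once.

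Part (iii) runs in parallel, with ``rows sum to zero'' replacing ``rows sum to one'' so that the row-sum condition makes $A$ have rows summing to zero; the genuine difference is that only the off-diagonal entries of $F^{-1}(C)$ must now be nonnegative. Inspecting the display, the ``$+$'' combinations $\tfrac12(a_{11}+b_{22})$ and $\tfrac12(a_{22}+b_{11})$ land on the diagonal (positions $(1,1),(4,4)$ and $(2,2),(3,3)$) and are therefore unconstrained, so only their ``$-$'' partners $\tfrac12(a_{11}-b_{22})$ and $\tfrac12(a_{22}-b_{11})$ survive as off-diagonal constraints, giving the one-sided bounds $b_{22}\le a_{11}$ and $b_{11}\le a_{22}$. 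The pairs $a_{12}\pm b_{21}$ and $a_{21}\pm b_{12}$ are both off-diagonal, so the two-sided bounds $|b_{21}|\le a_{12}$ and $|b_{12}|\le a_{21}$ remain; and since a rate matrix $A$ satisfies $a_{12}=-a_{11}\ge0$, $a_{21}=-a_{22}\ge0$ with $a_{11},a_{22}\le 0$, substituting these identities recasts the bounds into the signed form displayed in the statement. I expect no real difficulty in the algebra: the one point demanding care is the bookkeeping that distinguishes diagonal from off-diagonal positions in the centrosymmetric output, because it is exactly this distinction that separates the two-sided Markov bounds of (ii) from the one-sided rate bounds of (iii).
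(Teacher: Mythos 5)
Your proposal is correct and follows essentially the same route as the paper: compute $F^{-1}(C)=\tfrac12 SCS$ explicitly (your displayed matrix agrees with the paper's), then read off (i) from the centrosymmetry of the entries and (ii)--(iii) from the row sums and the sign conditions on the appropriate entries. The only addition is the explicit observation $S^{-1}=\tfrac12 S$ and the fuller bookkeeping of which $\pm$ combinations sit on versus off the diagonal, which the paper leaves implicit.
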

\begin{proof}

To prove i), by direct computation we obtain that
 {\color{black}$$F^{-1}(C)= S C S^{-1} = \frac{1}{2}\begin{pmatrix}
        a_{1 1}+b_{2 2}  & a_{1 2}+b_{2 1}  & a_{1 2} - b_{2 1} & a_{1 1} - b_{2 2}\\
        a_{2 1}+b_{1 2}  & a_{2 2} + b_{1 1}  & a_{2 2} - b_{1 1} & a_{2 1} -b_{1 2}\\
        a_{2 1} - b_{1 2} & a_{2 2} - b_{1 1}  & a_{2 2} + b_{1 1} &  a_{2 1}+b_{1 2} \\
         a_{1 1}-b_{2 2}  & a_{1 2} -b_{2 1}  & a_{1 2} +b_{2 1} & a_{1 1} + b_{2 2}\\
    \end{pmatrix}.$$}
Then ii) follows from the above expression of $F^{-1}(Q)$ and the fact that rows of Markov matrices add to 1 and the entries are non-negative, while iii) similarly follows  from the fact that the rows of rate matrices add to zero and the off-diagonal entries are non-negative.{\color{black}}
\end{proof}

For any $4\times 4$ CS Markov matrix $M=(m_{ij})$, let us recall that by (\ref{Fourier transform of SSM MArkov matrices}), $M$ is block-diagonalizable via the matrix $S$. In the rest of this section, we will study both the upper and the lower block matrices of $F(M)$ more closely. {\color{black}Studying the upper and lower blocks allows
us to establish the main result of the embeddability criteria for $4\times 4$
 CS Markov matrices. This block-diagonalization reduces our analysis to studying the logarithms of both the upper and the lower block matrices which have size $2\times 2$. This result will be  presented in Theorem \ref{theorem:criteria SSM-embeddability}.}

\subsubsection*{Upper block}\label{section:upper block}

As we have seen in~(\ref{Fourier transform of SSM MArkov matrices}), the upper block of $F(M)$ is given by the $2\times 2$ matrix $M_1=\begin{pmatrix}
\lambda & 1-\lambda \\
1-\mu&\mu\\
\end{pmatrix}$, which is a Markov matrix. If $P_1=\begin{pmatrix}
1&1-\lambda\\
1&\mu-1\\
\end{pmatrix}$, then 
$$P_1^{-1}M_1P_1=\begin{pmatrix}
1&0\\
0&\lambda+\mu-1\\
\end{pmatrix}.$$ Hence, by Proposition~\ref{Prop:Logenum}, any  logarithm of $M_1$ can be written as
$$L^{M_1}_{k_1,k_2}:=P_1\begin{pmatrix}
2k_1\pi i&0\\
0&\log(\lambda+\mu-1)+2k_2\pi i\\
\end{pmatrix}P_1^{-1},$$
for some integers $k_1$ and $k_2$.
Let $p=\log(\lambda+\mu-1)$, $q=1-\lambda$, and $r=1-\mu$. Then
\begin{eqnarray}\label{eqn:upperblock}
{\color{black}L^{M_1}_{k_1,k_2}}=\frac{1}{2-\lambda-\mu}\begin{pmatrix}
qp+2\pi(rk_1+qk_2)i&{\color{black}-qp+2\pi q(k_1-k_2)i}\\
{\color{black}-rp+2\pi r(k_1-k_2)i}&rp+2\pi(qk_1+rk_2)i\\
\end{pmatrix}.
\end{eqnarray}

\begin{lemma}\label{lemma:real logarithm of upper block}
If $\lambda+\mu\neq 2$, then $L^{M_1}_{k_1,k_2}$ is a real matrix if and only if $k_1=k_2=0$ and $\lambda+\mu>1$. In this case, the only real logarithm of $M_1$ is the principal logarithm $$\frac{1}{2-\lambda-\mu}\begin{pmatrix}
qp&-qp\\
-rp&rp\\
\end{pmatrix}.$$
\end{lemma}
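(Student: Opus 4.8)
The plan is to exploit the fact that the diagonalising matrix $P_1=\begin{pmatrix}1&1-\lambda\\1&\mu-1\end{pmatrix}$ is \emph{real}, so that the question of reality of $L^{M_1}_{k_1,k_2}$ collapses to a question about its two eigenvalues rather than about the four entries of the explicit matrix in~(\ref{eqn:upperblock}). First I would record that $\det P_1=\lambda+\mu-2\neq 0$ precisely under our hypothesis, so $P_1$ is invertible with a real inverse. Writing $L^{M_1}_{k_1,k_2}=P_1\,D_{k_1,k_2}\,P_1^{-1}$ with $D_{k_1,k_2}=\mathrm{diag}\big(2k_1\pi i,\ \log(\lambda+\mu-1)+2k_2\pi i\big)$, complex conjugation gives $\overline{L^{M_1}_{k_1,k_2}}=P_1\,\overline{D_{k_1,k_2}}\,P_1^{-1}$ because $\overline{P_1}=P_1$ and $\overline{P_1^{-1}}=P_1^{-1}$. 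Multiplying on the left by $P_1^{-1}$ and on the right by $P_1$ then shows that $L^{M_1}_{k_1,k_2}$ is real if and only if $D_{k_1,k_2}=\overline{D_{k_1,k_2}}$, i.e.\ if and only if the diagonal matrix itself is real.

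Next I would analyse the two diagonal entries separately. The $(1,1)$ entry $2k_1\pi i$ is real exactly when $k_1=0$. For the $(2,2)$ entry I would split on the sign of the eigenvalue $\nu:=\lambda+\mu-1$ of $M_1$, which is nonzero (otherwise $M_1$ is singular and has no logarithm) and distinct from $1$ since $\lambda+\mu\neq 2$. If $\nu>0$, equivalently $\lambda+\mu>1$, then $\log(\nu)$ is a genuine real number and $\log(\nu)+2k_2\pi i$ is real exactly when $k_2=0$; this produces the asserted condition. If instead $\nu<0$, equivalently $\lambda+\mu<1$, then $M_1$ has two distinct real eigenvalues of opposite sign, and the exponential of any real $2\times 2$ matrix cannot have such a spectrum (the eigenvalues of a real matrix exponential are either two positive reals or a conjugate pair, and a conjugate pair is real only when the two values coincide); hence $M_1$ has no real logarithm at all, which is exactly what forces the extra clause $\lambda+\mu>1$ into the statement.

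Finally, once $k_1=k_2=0$ is forced, I would substitute these values into~(\ref{eqn:upperblock}): every term carrying a factor $2\pi i$ vanishes and, with $p=\log(\lambda+\mu-1)$, $q=1-\lambda$, $r=1-\mu$, one reads off $\frac{1}{2-\lambda-\mu}\left(\begin{smallmatrix}qp&-qp\\-rp&rp\end{smallmatrix}\right)$, which is the principal logarithm by the remark following Proposition~\ref{Prop:Logenum}. As an alternative to the eigenvalue argument in the $\nu>0$ regime, one could keep $p$ real and simply set each of the four imaginary parts in~(\ref{eqn:upperblock}) to zero: adding the $(1,1)$ and $(2,2)$ equations gives $(q+r)(k_1+k_2)=0$ with $q+r=2-\lambda-\mu\neq 0$, while the off-diagonal equations give $q(k_1-k_2)=r(k_1-k_2)=0$, and since $q,r$ cannot both vanish (that would force $\lambda=\mu=1$, excluded by $\lambda+\mu\neq2$) one concludes $k_1=k_2=0$.

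The routine algebra is negligible, and the only point needing genuine care — the main obstacle — is the $\lambda+\mu<1$ case: the displayed formula for $L^{M_1}_{k_1,k_2}$ is written for a positive eigenvalue $\nu$, so the ``only if $\lambda+\mu>1$'' half cannot be read off that formula and must instead be justified from the eigenvalue description of the logarithm, namely the observation that a real $2\times2$ matrix with a simple negative eigenvalue is never a real matrix exponential. I would therefore take the $P_1$-real reduction as the principal line of argument, since it dispatches the sign condition cleanly and uniformly in both regimes.
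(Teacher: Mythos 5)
Your proof is correct and follows essentially the same route as the paper's: both reduce the reality of $L^{M_1}_{k_1,k_2}$ to the reality of its two eigenvalues $2k_1\pi i$ and $p+2k_2\pi i$ and then read off $k_1=k_2=0$ and $\lambda+\mu>1$. The only (harmless) difference is that you justify the reduction via the realness and invertibility of $P_1$, whereas the paper invokes the ``real or conjugate-pair'' spectral fact and uses $\lambda+\mu\neq 2$ to exclude the conjugate-pair case; your treatment of the $\lambda+\mu<1$ regime is more detailed than the paper's one-line assertion but reaches the same conclusion.
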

\begin{proof}
For fixed $k_1$ and $k_2$, the eigenvalues of $L^{M_1}_{k_1,k_2}$ are $\lambda_1=2k_1\pi i$ and $\lambda_2=p+2k_2\pi i.$ Then $L^{M_1}_{k_1,k_2}$ is a real matrix if and only if $\lambda_1,\lambda_2\in \mathbb{R}$ or $\lambda_2=\overline{\lambda_1}$. Since $\lambda+\mu\neq 2$, $\lambda_2\neq\overline{\lambda_1}$. Thus, $L^{M_1}_{k_1,k_2}$ is a real matrix if and only if $\lambda_1,\lambda_2\in \mathbb{R}$. {\color{black}Finally, $\lambda_1\in \mathbb{R}$ if and only if $k_1=0$ and $\lambda_2\in \mathbb{R}$ if and only if $k_2=0$ and $\lambda+\mu>1$.} {\color{black}}
\end{proof}

\subsubsection*{Lower block}\label{section:lower block}

The lower block of $F(M)$ is given by the matrix $M_2=\begin{pmatrix}
\alpha&\alpha'\\\beta'&\beta\\
\end{pmatrix}$. Unlike $M_1$, the matrix $M_2$ is generally not a Markov matrix. The discriminant\index{discriminant $\Delta$ of characteristic polynomial of $M_2$} of the characteristic polynomial of $M_2$ is given by 
\begin{equation}
\label{eq:discriminant}
    \Delta:=(\alpha-\beta)^2+4\alpha'\beta'
\end{equation} 
{\color{black} with  $\alpha,\beta,\alpha',\beta'$ defined as in \eqref{eq:alpha,beta}.} If $\Delta>0$, then $M_2$ has two distinct real eigenvalues and if $\Delta<0$, then $M_2$ has a pair of conjugated complex eigenvalues. Moreover, if $\Delta=0$, then $M_2$ has either $2\times 2$ Jordan block or a repeated real eigenvalue. We will assume that $\Delta \neq 0$ so that $M_2$ diagonalizes into two distinct eigenvalues.

Let $P_2=\begin{pmatrix}
\frac{\sqrt{\Delta}+(\alpha-\beta)}{2}&\frac{\sqrt{\Delta}-(\alpha-\beta)}{2}\\
\beta'&{\color{black}-\beta'}\\
\end{pmatrix}$. Then 
$$P_2^{-1}M_2P_2=\begin{pmatrix}
\frac{(\alpha+\beta)+\sqrt{\Delta}}{2}&0\\
0&\frac{(\alpha+\beta)-\sqrt{\Delta}}{2}\\
\end{pmatrix}.$$
Let us now define 
$$l_3:=\log(\frac{(\alpha+\beta)+\sqrt{\Delta}}{2})+2k_3\pi i\qquad \mbox{and}\qquad l_4:=\log(\frac{(\alpha+\beta)-\sqrt{\Delta}}{2})+2k_4\pi i,$$
where $k_3$ and $k_4$ are integers. Therefore, any logarithm of $M_2$ can be written as 

{\color{black}\begin{eqnarray}\label{eqn:lowerblock}
L^{M_2}_{k_3,k_4}:=\begin{pmatrix}
\varepsilon&\phi\\
\gamma&\eta\\
\end{pmatrix} 
\end{eqnarray}}
where
\begin{align*}
    \varepsilon & :=\frac{1}{2}((l_3+l_4)+(\alpha-\beta)\frac{(l_3-l_4)}{\sqrt{\Delta}}),\\
    \phi& :=\alpha'\frac{(l_3-l_4)}{\sqrt{\Delta}},\\
    \gamma&:=\beta'\frac{(l_3-l_4)}{\sqrt{\Delta}}\mbox{ and }\\
    \eta&:=\frac{1}{2}((l_3+l_4)-(\alpha-\beta)\frac{(l_3-l_4)}{\sqrt{\Delta}}).
\end{align*}
\begin{lemma}\label{lemma:real logarithm of lower block}
\begin{enumerate}
    \item If $\Delta>0$, then $L^{M_2}_{k_3,k_4}$ is a real matrix if and only if $\alpha+\beta>\sqrt{\Delta}$ and $k_3=k_4=0.$
    \item If $\Delta<0$, then $L^{M_2}_{k_3,k_4}$ is a real matrix if and only if $k_4=-k_3$. 
\end{enumerate}
\end{lemma}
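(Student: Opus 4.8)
The plan is to determine, for each sign of $\Delta$, exactly when all four entries $\varepsilon,\phi,\gamma,\eta$ of $L^{M_2}_{k_3,k_4}$ are simultaneously real. The starting observation is that these entries are affine in the two quantities $s:=l_3+l_4$ and $d:=\tfrac{l_3-l_4}{\sqrt{\Delta}}$: explicitly $\varepsilon=\tfrac12(s+(\alpha-\beta)d)$, $\eta=\tfrac12(s-(\alpha-\beta)d)$, $\phi=\alpha' d$ and $\gamma=\beta' d$. First I would record the elementary reduction that $L^{M_2}_{k_3,k_4}$ is real if and only if both $s$ and $d$ are real. The $(\Leftarrow)$ direction is immediate; for the converse, $s=\varepsilon+\eta\in\R$, and since $\Delta=(\alpha-\beta)^2+4\alpha'\beta'\neq0$ at least one of $\alpha-\beta,\alpha',\beta'$ is nonzero, so $d$ is recovered as a real ratio from $\varepsilon-\eta$, $\phi$ or $\gamma$. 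It then remains to analyse the reality of $s=l_3+l_4$ and $d=\tfrac{l_3-l_4}{\sqrt{\Delta}}$, where $l_3,l_4$ are the eigenvalues of $L^{M_2}_{k_3,k_4}$ with $e^{l_3}=a$, $e^{l_4}=b$, and $a,b=\tfrac{(\alpha+\beta)\pm\sqrt{\Delta}}{2}$ are the eigenvalues of $M_2$.

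For part (1), where $\Delta>0$, the eigenvalues $a,b$ are real and distinct, and I would argue through the eigenvalues $l_3,l_4$ exactly as in Lemma~\ref{lemma:real logarithm of upper block}. If $L^{M_2}_{k_3,k_4}$ is real then $l_3,l_4$ are either both real or a conjugate pair; the conjugate case is impossible, since $l_4=\overline{l_3}$ would give $b=e^{l_4}=\overline{e^{l_3}}=\overline{a}=a$ (using $a\in\R$), contradicting $a\neq b$. Hence $l_3,l_4\in\R$, which forces $a=e^{l_3}>0$ and $b=e^{l_4}>0$, i.e. $\alpha+\beta>\sqrt{\Delta}$; and then reality of $l_3=\log a+2k_3\pi i$ and $l_4=\log b+2k_4\pi i$ (with $\log a,\log b$ now genuine real logarithms) gives $k_3=k_4=0$. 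Conversely, when $\alpha+\beta>\sqrt{\Delta}$ and $k_3=k_4=0$, the matrix $P_2$ is real because $\sqrt{\Delta}\in\R$, and $l_3,l_4$ are real, so $L^{M_2}_{0,0}=P_2\,diag(l_3,l_4)\,P_2^{-1}$ is real.

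For part (2), where $\Delta<0$, we have $\sqrt{\Delta}=i\sqrt{|\Delta|}$ purely imaginary and $b=\overline{a}$ with $a\notin\R$, so $\log b=\overline{\log a}$. Substituting gives $s=2\,\mathrm{Re}(\log a)+2(k_3+k_4)\pi i$, which is real if and only if $k_3+k_4=0$, whereas $d=\tfrac{2i\,\mathrm{Im}(\log a)+2(k_3-k_4)\pi i}{i\sqrt{|\Delta|}}$ is real for every choice of $k_3,k_4$. By the reduction above, $L^{M_2}_{k_3,k_4}$ is therefore real exactly when $k_4=-k_3$. (Here $\Delta<0$ forces $\alpha'\beta'<0$, so both $\alpha',\beta'$ are nonzero, which is all the reduction requires.)

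The main obstacle is the bookkeeping of the logarithm branch for non-positive eigenvalues when $\Delta>0$: the symbol $\log a$ for negative $a$ is not covered by the principal-branch convention, and a brute-force entrywise computation would have to track an extra imaginary part $\pi$ with care. I avoid this entirely by routing the argument through the eigenvalues $l_3,l_4$ of $L^{M_2}_{k_3,k_4}$ together with $e^{l_3}=a$, $e^{l_4}=b$: the reality of these eigenvalues simultaneously excludes non-positive eigenvalues and pins down $k_3=k_4=0$, so no logarithm of a negative number is ever assigned a value. A secondary point to state cleanly is sufficiency in the converse of part (2): one checks that the columns of $P_2$ satisfy (second column)$=-\overline{(\text{first column})}$, so that with $l_4=\overline{l_3}$ the product $P_2\,diag(l_3,l_4)\,P_2^{-1}$ is genuinely real; equivalently, this is already guaranteed by the reduction to $s,d\in\R$ established at the outset.
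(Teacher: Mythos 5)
Your proof is correct and follows essentially the same route as the paper's: both arguments analyze the eigenvalues $l_3,l_4$ of $L^{M_2}_{k_3,k_4}$, rule out the conjugate-pair case when $\Delta>0$ and the both-real case when $\Delta<0$, and verify reality of the entries through $l_3+l_4$ and $(l_3-l_4)/\sqrt{\Delta}$. Your explicit if-and-only-if reduction to the reality of $s$ and $d$, and your care in avoiding the logarithm of a negative eigenvalue in part (1) (the paper tacitly writes $Im(l_3)=2k_3\pi$, which presumes positivity), are minor tightenings of the same argument rather than a different approach.
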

\begin{proof}
\begin{enumerate}
    \item If $\Delta>0$, then $Im(l_3)=2k_3\pi$ and $Im(l_4)=2k_4\pi$. Moreover, $Re(l_3)\neq Re(l_4).$ Since $l_3$ and $l_4$ are the eigenvalues of $L^{M_2}_{k_3,k_4}$, this implies that $l_3\neq \overline{l}_4$. In particular, $L^{M_2}_{k_3,k_4}$ is a real matrix if and only if both $l_3$ and $l_4$ are real.
    \item Let us assume $\Delta<0$ and take $z=\frac{(\alpha+\beta)+\sqrt{\Delta}}{2}$. Fixing $k_3,k_4\in\mathbb{Z}$, the eigenvalues of $L^{M_2}_{k_3,k_4}$ are $l_3=\log(z)+2k_3\pi i$ and $l_4=\log(\overline{z})+2k_4\pi i=\overline{Log(z)}+2k_4\pi i$, which are both complex numbers. Thus, $L^{M_2}_{k_3,k_4}$ is real if and only if $l_3=\overline{l}_4$. Hence, $k_4=-k_3.$ Conversely, $k_4=-k_3$ implies that
    $l_3+l_4=2 Re(l_3)\in\mathbb{R}$ and $\frac{l_3-l_4}{\sqrt{\Delta}}=\frac{2 Im(l_3)i}{\sqrt{\Delta}}\in \mathbb{R}$. Thus, all entries of  $L^{M_2}_{k_3,k_4}$ are real.{\color{black}}
\end{enumerate}
\end{proof}

\subsubsection*{Logarithms of $4\times 4$ CS Markov matrices}
 Let $M$ be a $4\times 4$ CS Markov matrix.  {\color{black} Using the values defined in \eqref{eq:alpha,beta} and \eqref{eq:discriminant},} we can now label up its four eigenvalues, namely, 
 \begin{equation}\label{eq:eigenvalues}
    1,\quad \lambda_1:=\lambda+\mu-1,\quad {\color{black}\lambda_2}:=\frac{(\alpha+\beta)+\sqrt{\Delta}}{2}\quad \mbox{ and }\quad{\color{black}\lambda_3}=\frac{(\alpha+\beta)-\sqrt{\Delta}}{2}. 
 \end{equation}

 We note that the subset of $4\times 4$ CS Markov matrix with repeated eigenvalues (diagonalizing matrix with repeated eigenvalues or a Jordan block of size greater than 1) have zero measure. {\color{black}  Therefore generic $4\times4$ Markov matrices have no repeated eigenvalues}, and hence we are going to assume the eigenvalues to be {\color{black} distinct} . In particular, we are assuming that $M$ diagonalizes. Furthermore, since we want $M$ to have real logarithms and have no repeated eigenvalues, we need the real eigenvalues to be positive.
 
 The following theorem characterizes the embeddability of a $4\times 4$ CS Markov matrix {\color{black}with positive and distinct eigenvalues}. Furthermore, the theorem guarantees that a $4\times 4$ CS Markov matrix is embeddable {\color{black}if and only if} it admits a CS Markov generator. In particular, the characterization of the embeddability of a CS matrix is equivalent when restricting to rate matrices satisfying the symmetries imposed by the model (model embeddability) than when restricting to all possible rate matrices (embedding problem).

\begin{theorem}\label{thm:embeddability SSM}
 Let $M$ be a diagonalizable $4\times 4$ CS Markov matrix with {\color{black}positive} and {\color{black} distinct} eigenvalues {\color{black}$\lambda_1, \lambda_2, \lambda_3$ defined as in \eqref{eq:eigenvalues}}. Let us define
 $$x=\log(\lambda_1),\qquad y_k=\log({\color{black}\lambda_2})+2k\pi i,\qquad z_k=\log({\color{black}\lambda_3})-2k\pi i,$$
 where $k=0$ if $\Delta>0$ and $k\in \mathbb{Z}$ if $\Delta<0.$ Then any real logarithm of $M$ is given by
 $$S\begin{pmatrix}
 \alpha_1&-\alpha_1&0&0\\
 -\beta_1&\beta_1&0&0\\
 0&0&\delta(k)&\varepsilon(k)\\
 0&0&\phi(k)&\gamma(k)\\
 \end{pmatrix}S^{-1},$$
 where 
 \begin{align*}
     \alpha_1&= \frac{1-\lambda}{2-\lambda-\mu}x,\qquad &\beta_1&=\frac{1-\mu}{2-\lambda-\mu}x,\\
     \delta(k)&=\frac{1}{2}(({\color{black}y_k+z_k})+(\alpha-\beta)\frac{({\color{black}y_k-z_k})}{\sqrt{\Delta}}),\qquad &\varepsilon(k)&=\alpha'\frac{({\color{black}y_k-z_k})}{\sqrt{\Delta}},\\
     \phi(k)&=\beta'\frac{({\color{black}y_k-z_k})}{\sqrt{\Delta}},\qquad &\gamma(k)&=\frac{1}{2}(({\color{black}y_k+z_k})-(\alpha-\beta)\frac{({\color{black}y_k-z_k})}{\sqrt{\Delta}}).
 \end{align*}
{\color{black} with $\lambda,\ \mu,\ \alpha,\ \beta,\ \alpha' \text{ and } \beta` $ defined as in \eqref{eq:alpha,beta} and $\Delta$ as in \eqref{eq:discriminant}.}\\ 
 
In particular, any real logarithm of $M$ is also a $4\times 4$ CS matrix {\color{black}whose entries} $q_{11},\dots,q_{24}$ are given by:
 \begin{align*}
     &q_{11}=\frac{\alpha_1+\gamma(k)}{2},\quad  &q_{12}=\frac{-\alpha_1+\phi(k)}{2},\quad
      &q_{13}=\frac{-\alpha_1-\phi(k)}{2},\quad &q_{14}=\frac{\alpha_1-\gamma(k)}{2},\\
      &q_{21}=\frac{-\beta_1+\varepsilon(k)}{2},\quad
      &q_{22}=\frac{\beta_1+\delta(k)}{2},\quad
      &q_{23}=\frac{\beta_1-\delta(k)}{2},\quad
      &q_{24}=\frac{-\beta_1-\varepsilon(k)}{2}.\\
 \end{align*}
\end{theorem}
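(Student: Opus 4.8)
The plan is to pull the whole problem through the change of basis $F(\cdot)=S^{-1}(\cdot)S$ from \eqref{Fourier transform of SSM MArkov matrices}, reducing the computation of real logarithms of $M$ to that of its two $2\times 2$ blocks, and then to invoke the block lemmas already established. Concretely, let $Q$ be any real logarithm of $M$ and set $L:=F(Q)=S^{-1}QS$, which is real because $S$ is real. Since conjugation commutes with the exponential, $\exp(L)=S^{-1}\exp(Q)S=S^{-1}MS=F(M)=\operatorname{diag}(M_1,M_2)$. Thus the task splits into three steps: (i) show that $L$ is block diagonal, $L=\operatorname{diag}(L_1,L_2)$ with $\exp(L_1)=M_1$ and $\exp(L_2)=M_2$; (ii) identify $L_1$ and $L_2$ via the upper- and lower-block analyses; and (iii) read off from Proposition~\ref{prop:block diagonal} that $Q=SLS^{-1}$ is CS with the claimed entries.

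The crux, and the step I expect to be the main obstacle, is (i). Here I would use that $M$, and hence $F(M)=\operatorname{diag}(M_1,M_2)$, has four distinct eigenvalues $1,\lambda_1,\lambda_2,\lambda_3$; this is exactly what the distinctness hypothesis buys us, since it forces $\lambda+\mu\neq 2$ (so $\lambda_1\neq 1$) and $\Delta\neq 0$ (so $\lambda_2\neq\lambda_3$). Any logarithm commutes with its own exponential, so $L$ commutes with $F(M)$; and a matrix commuting with a matrix having distinct eigenvalues is a polynomial in it. But every polynomial $p$ in the block-diagonal matrix $\operatorname{diag}(M_1,M_2)$ is itself block diagonal, namely $p(\operatorname{diag}(M_1,M_2))=\operatorname{diag}(p(M_1),p(M_2))$; hence so is $L$, and its diagonal blocks $L_1,L_2$ (real, as blocks of the real matrix $L$) satisfy $\exp(L_i)=M_i$.

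With block-diagonality in hand, step (ii) is a direct application of the two block lemmas. For the upper block, Lemma~\ref{lemma:real logarithm of upper block} applies since $\lambda+\mu\neq 2$ and $\lambda+\mu>1$ (the latter from positivity of $\lambda_1$), forcing $L_1$ to be the principal logarithm $\tfrac{1}{2-\lambda-\mu}\big(\begin{smallmatrix}qp&-qp\\-rp&rp\end{smallmatrix}\big)$ of \eqref{eqn:upperblock} with $p=\log\lambda_1$; rewriting $qp/(2-\lambda-\mu)=\alpha_1$ and $rp/(2-\lambda-\mu)=\beta_1$ gives $L_1=\big(\begin{smallmatrix}\alpha_1&-\alpha_1\\-\beta_1&\beta_1\end{smallmatrix}\big)$. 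For the lower block, since $\Delta\neq 0$, Lemma~\ref{lemma:real logarithm of lower block} forces $k_3=k_4=0$ when $\Delta>0$ and $k_4=-k_3$ when $\Delta<0$; setting $k:=k_3$ (so $k=0$ if $\Delta>0$) turns the eigenvalue branches into $l_3=y_k$ and $l_4=z_k$, and then the entries of $L^{M_2}_{k_3,k_4}$ in \eqref{eqn:lowerblock} become exactly $\delta(k),\varepsilon(k),\phi(k),\gamma(k)$. This produces the displayed block expression for $L$, hence for $Q=SLS^{-1}$.

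Finally, step (iii) follows from Proposition~\ref{prop:block diagonal}(i): taking $A=L_1$ and $B=L_2$, the matrix $Q=SLS^{-1}$ is automatically CS, and the explicit $\tfrac12$-scaled entry formula recorded there gives $q_{11}=\tfrac12(a_{11}+b_{22})=\tfrac{\alpha_1+\gamma(k)}{2}$, and likewise for the remaining seven entries after substituting $a_{11}=\alpha_1,\ a_{12}=-\alpha_1,\ a_{21}=-\beta_1,\ a_{22}=\beta_1$ and $b_{11}=\delta(k),\ b_{12}=\varepsilon(k),\ b_{21}=\phi(k),\ b_{22}=\gamma(k)$. I would close with the remark that this argument incidentally shows every real logarithm of such an $M$ already has rows summing to zero, since $L_1$ is forced to equal the (row-sum-zero) principal logarithm of the $2\times 2$ Markov matrix $M_1$.
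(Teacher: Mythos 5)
Your proof is correct and follows essentially the same route as the paper: conjugate by $S$, reduce to the two $2\times 2$ blocks, pin down the real logarithms of each block via Lemma~\ref{lemma:real logarithm of upper block} and Lemma~\ref{lemma:real logarithm of lower block}, and reassemble the entries through Proposition~\ref{prop:block diagonal}. The only (harmless) divergence is in your step (i): the paper obtains the block-diagonal form of every logarithm directly from Proposition~\ref{Prop:Logenum}, whereas you rederive it from the commutant argument for a matrix with distinct eigenvalues --- a valid, slightly more self-contained substitute that also makes explicit why the rows of any real logarithm automatically sum to zero.
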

\begin{proof}
Let us note that 
$$M=S\cdot \mbox{diag}(P_1,P_2)\cdot \mbox{diag}(1, \lambda_1, {\color{black}\lambda_2, \lambda_3})\cdot  \mbox{diag}(P_1^{-1},P_2^{-1})\cdot S^{-1}.$$
Since we assume that the eigenvalues of $M$ are {\color{black}distinct}, {\color{black}according to Proposition~\ref{Prop:Logenum}}, any logarithm of $M$ can be written as
\begin{align*}
   Q&=S\cdot \mbox{diag}(P_1,P_2)\cdot \mbox{diag}(\log_{k_1}(1), \log_{k_2}(\lambda_1),\log_{k_3}({\color{black}\lambda_2}),\log_{k_4}({\color{black}\lambda_3}))\cdot \mbox{diag}(P_1^{-1},P_2^{-1})\cdot S^{-1}\\
    &=S\cdot \mbox{diag}(L^{M_1}_{k_1,k_2},L^{M_2}_{k_3,k_4})\cdot S^{-1},
\end{align*}

{\color{black}  The last equation and the fact that $S$ and $S^{-1}$ are real matrices imply that} $Q$ will be real if and only if both $L^{M_1}_{k_1,k_2}$ and $L^{M_2}_{k_3,k_4}$ are real. {\color{black} Here $L^{M_1}_{k_1,k_2}$ is the upper block given in~(\ref{eqn:upperblock}) and $L^{M_2}_{k_3,k_4}$ is the lower block defined in~(\ref{eqn:lowerblock})}.
{\color{black} By Lemma~\ref{lemma:real logarithm of upper block}, $L^{M_1}_{k_1,k_2}$ being a real logarithm implies that $k_1=k_2=0$ and $\lambda+\mu>1$. Then $L^{M_2}_{k_3,k_4}$ being a real matrix, according to Lemma~\ref{lemma:real logarithm of lower block}, implies that $k_3=k_4=0$ if $\Delta>0$, while $k_4=-k_3$ if $\Delta<0.$ Therefore, the upper block is $L^{M_1}_{0,0}$ and the lower block will be $L^{M_2}_{k,-k},\text{ for } k=k_3$ completing the proof. }
{\color{black}}
\end{proof}

 {\color{black}Now we are interested in knowing when the real logarithm of a $4\times 4$ CS Markov matrix is a rate matrix. Using the same notation as in Theorem \ref{thm:embeddability SSM} we get the following result.}

 \begin{theorem}\label{theorem:criteria SSM-embeddability}
  A diagonalizable $4\times 4$ CS Markov matrix $M$ with {\color{black}distinct}  eigenvalues is embeddable if and only if the following conditions hold for $k=0$ if $\Delta>0$ or for some $k\in\mathbb{Z}$ if $\Delta<0$:
  \begin{align*}
      \lambda_1>0,\qquad (\alpha+\beta)^2>\Delta,\qquad  |\phi(k)|\leq -\alpha_1,\qquad |\varepsilon(k)|\leq -\beta_1,\qquad \gamma(k)\leq \alpha_1,\qquad \delta(k)\leq \beta_1.
  \end{align*}
 \end{theorem}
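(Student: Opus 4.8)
The plan is to read the criterion directly off the two structural results already in hand: the complete enumeration of real logarithms in Theorem~\ref{thm:embeddability SSM} and the rate-matrix test in Proposition~\ref{prop:block diagonal}. Since a Markov matrix is embeddable exactly when one of its real logarithms is a rate matrix (i.e. a Markov generator), the whole argument reduces to deciding which members of the parametrized family of real logarithms from Theorem~\ref{thm:embeddability SSM} satisfy the rate constraints, and then collecting the resulting inequalities.

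First I would recall from Theorem~\ref{thm:embeddability SSM} that, under the standing assumption that $M$ is diagonalizable with distinct eigenvalues, every real logarithm of $M$ has the block form $Q = S\,\mathrm{diag}(Q_1,Q_2)\,S^{-1}$ with upper block $Q_1 = \begin{pmatrix}\alpha_1 & -\alpha_1\\ -\beta_1 & \beta_1\end{pmatrix}$ and lower block $Q_2 = \begin{pmatrix}\delta(k) & \varepsilon(k)\\ \phi(k) & \gamma(k)\end{pmatrix}$, where $k=0$ is forced if $\Delta>0$ and $k\in\mathbb{Z}$ is free if $\Delta<0$. The mere existence of such a real logarithm already imposes two conditions: by Lemma~\ref{lemma:real logarithm of upper block} the upper block is real only if $\lambda_1=\lambda+\mu-1>0$, and by Lemma~\ref{lemma:real logarithm of lower block} the lower block is real only if $\alpha+\beta>\sqrt{\Delta}$ when $\Delta>0$ (and automatically when $\Delta<0$). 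Thus the first two listed conditions, $\lambda_1>0$ and $(\alpha+\beta)^2>\Delta$, are precisely what guarantees a real logarithm exists at all.

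Next I would apply Proposition~\ref{prop:block diagonal}(iii) to $Q=F^{-1}(\mathrm{diag}(Q_1,Q_2))$ with $A=Q_1$ and $B=Q_2$. That proposition characterizes when such a matrix is a rate matrix by requiring $A$ to be a rate matrix together with the four cross inequalities $b_{22}\le a_{11}$, $|b_{21}|\le -a_{11}$, $|b_{12}|\le -a_{22}$, $b_{11}\le a_{22}$. Reading off the entries $a_{11}=\alpha_1$, $a_{22}=\beta_1$, $b_{11}=\delta(k)$, $b_{12}=\varepsilon(k)$, $b_{21}=\phi(k)$, $b_{22}=\gamma(k)$ turns these into exactly $\gamma(k)\le\alpha_1$, $|\phi(k)|\le -\alpha_1$, $|\varepsilon(k)|\le -\beta_1$, $\delta(k)\le\beta_1$, which are the last four conditions of the theorem. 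For the remaining requirement that $A=Q_1$ be a rate matrix, I would observe that $Q_1$ is the principal logarithm of the $2\times2$ Markov matrix $M_1$; by Kingman's $2\times2$ criterion \cite{kingman1962imbedding} this principal logarithm is a rate matrix exactly when $\det M_1=\lambda_1>0$, in which case $\alpha_1,\beta_1\le0$ follow automatically from $\lambda_1\in(0,1]$ together with $\lambda,\mu\in[0,1]$. Hence ``$A$ is a rate matrix'' is already subsumed by the condition $\lambda_1>0$.

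Assembling these observations, $M$ is embeddable if and only if for an admissible $k$ the corresponding real logarithm $Q$ is a rate matrix, which is equivalent to the simultaneous validity of $\lambda_1>0$, $(\alpha+\beta)^2>\Delta$, and the four inequalities, as claimed. The main obstacle I anticipate is not conceptual but bookkeeping: one must keep straight the transposed-looking index correspondence between the generic block $B=(b_{ij})$ of Proposition~\ref{prop:block diagonal} and the concrete entries of $Q_2$ (so that $b_{22}\leftrightarrow\gamma(k)$, $b_{11}\leftrightarrow\delta(k)$), and treat the regimes $\Delta>0$ and $\Delta<0$ separately. In particular, in the case $\Delta>0$ one should check that the spurious branch $\alpha+\beta<-\sqrt{\Delta}$, which satisfies $(\alpha+\beta)^2>\Delta$ yet produces negative eigenvalues $\lambda_2,\lambda_3$ and hence no real logarithm, is correctly excluded: this is automatic, since in that branch the quantities $\delta(k),\gamma(k),\varepsilon(k),\phi(k)$ fail to be real, so the four real inequalities cannot hold.
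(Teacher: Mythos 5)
Your proposal is correct and follows essentially the same route as the paper: enumerate the real logarithms via Theorem~\ref{thm:embeddability SSM} (with Lemmas~\ref{lemma:real logarithm of upper block} and~\ref{lemma:real logarithm of lower block} supplying $\lambda_1>0$ and the positivity of $\lambda_2\lambda_3$), then apply Proposition~\ref{prop:block diagonal}(iii) to translate the rate-matrix condition into the four stated inequalities. Your extra remarks — the detour through Kingman's $2\times2$ criterion for the upper block (the paper just observes $\alpha_1,\beta_1\leq 0$ directly) and the explicit exclusion of the branch $\alpha+\beta<-\sqrt{\Delta}$ — are consistent with, and if anything slightly more careful than, the paper's argument.
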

 
 {\color{black}\begin{proof}
 The logarithm of a $4\times 4$ CS Markov matrix will depend on whether $\Delta>0$ or $\Delta<0$. In particular, it will depend on whether the eigenvalues $\lambda_2$ and $\lambda_3$ are real and positive or whether they are conjugated complex numbers. 
 \begin{enumerate}
    \item If $\Delta>0$, then both $\lambda_2$ and $\lambda_3$ are real and $\lambda_2>\lambda_3$. Hence, $z<y<0.$ Moreover, Lemma~\ref{lemma:real logarithm of lower block} implies that $\lambda_3>0$ and hence $\lambda_2\lambda_3>0$.
    \item If $\Delta<0$, then $\lambda_2,\lambda_3\in\mathbb{C}\setminus\mathbb{R}$ and $\lambda_2=\overline{\lambda_3}$. Hence, $y+z>0$ and $y-z=4\pi ki.$ Moreover, $\lambda_2\lambda_3=|\lambda_3|^2>0$ since $\lambda_3\neq 0$.
\end{enumerate}
 Thus, in both cases, $\alpha_1,\beta_1,\delta(k),\varepsilon(k), \phi(k),\gamma(k)\in \mathbb{R}.$ Moreover, $\alpha_1$ and $\beta_1$ are both non-positive. In particular, Theorem~\ref{thm:embeddability SSM} together with Proposition~\ref{prop:block diagonal} imply that a real logarithm of $M$ is a rate matrix if and only if 
$$|\phi(k)|\leq -\alpha_1,\qquad |\varepsilon(k)|\leq -\beta_1,\qquad \gamma(k)\leq \alpha_1,\qquad \delta(k)\leq \beta_1.$$
Furthermore, the conditions $\lambda_1>0$  comes from  Lemma~\ref{lemma:real logarithm of upper block}. The proof is now complete.
 \end{proof}}

{\color{black}
\begin{remark}\label{rk:k_bound}
According to Theorem \ref{theorem:criteria SSM-embeddability} the embeddability of a $4\times4$ CS Markov matrix $M$ with distinct positive eigenvalues can be decided by checking six inequalities depending on the entries of $M$. However, if $M$ has non-real eigenvalues then one has to check infinitely many groups of inequalities, one for each value of $k\in \mathbb{Z}$. It is enough that one of those systems is consistent to guarantee that $M$ is embeddable. Theorem 5.5 in \cite{casanellas2020embedding} provides boundaries for the values of $k$ for which the corresponding inequalities may hold.
\end{remark}
}

Let us take a look at the class of K3P matrices which is a special case of strand symmetric matrices. Indeed, for a K3P matrix $M=(m_{ij})$, we have that 
$$m_{11}=m_{22},\qquad m_{12}=m_{21},\qquad m_{13}=m_{24}\qquad \mbox{and}\qquad m_{14}=m_{23}.$$
Suppose that a K3P-Markov matrix $M=(m_{ij})$ is K3P-embeddable, i.e. $M=\exp(Q)$
for some K3P-rate matrix $Q.$ Recall that the eigenvalues of $M$ are 
 $$1, \quad p:=m_{11}+m_{12}-m_{13}-m_{14},\quad q:=m_{11}-m_{12}+m_{13}-m_{14}\quad\mbox{ and }\quad r:=m_{11}-m_{12}-m_{13}+m_{14}.$$
 In this case, we have that
 \begin{align*}
    &\lambda=\mu=m_{11}+m_{14},\quad
     \alpha=\beta=m_{11}-m_{14},\quad
     \alpha'=\beta'=m_{13}-m_{12},\quad\\
     &\lambda_1=r\quad
     \mbox{ and }\quad\Delta=4(m_{13}-m_{12})^2.\\
 \end{align*}
 In particular, we see that $\Delta> 0$ unless $m_{12}=m_{13}$. Moreover,
 \begin{align*}
      &x=\log r,\quad y=\log q,\quad z=\log p,\quad \alpha_1=\beta_1=\frac{1}{2}\log r,\\
      &\quad\delta(0)=\gamma(0)=\frac{1}{2}\log pq,\quad |\varepsilon(0)|=|\phi(0)|=\frac{1}{2}\log \frac{q}{p}.\\
 \end{align*}
 The inequalities in Theorem~\ref{theorem:criteria SSM-embeddability} can be spelled out as follows:
 $$r>0,\quad pq>0,\quad |\log \frac{q}{p}|\leq -\log r\quad \mbox{and}\quad \log pq\leq \log r.$$
These inequalities are equivalent to the K3P-embeddability criteria presented in \cite[Theorem 3.1]{roca2018embeddability} and \cite[Theorem 1]{ardiyansyah2021model}. {\color{black} Moreover, they are also equivalent to the restriction to centrosymmetric-matrices of the embeddability criteria for $4\times4$ Markov matrices with different eigenvalues given in \cite[Theorem 1.1]{casanellas2020embedding} }

In the last part of this section, we discuss the rate identifiability problem for $4\times 4$ centrosymmetric matrices. If a centrosymmetric Markov matrix arises from a continuous-time model, then we want {\color{black}to determine its corresponding substitution rates}. Namely, given an embeddable $4\times 4$ CS matrix, we want to know if we can uniquely identify its Markov generator. 

{ \color{black} 
It is worth noting that Markov matrices with repeated real eigenvalues may admit more than one Markov generator (e.g. examples 4.2 and 4.3 in \cite{casanellas2020embeddability} show embeddable K2P matrices with more than one Markov generator). Nonetheless, this is not possible if the Markov matrix has distinct eigenvalues, because in this case its only possible real logarithm would be the principal logarithm \cite{Culver}. 
As one considers less restrictions in a model, the measure of the set of matrices with repeated real eigenvalues decreases, eventually becoming a measure zero set. For example, this is the case within the K3P model, where both its submodels (the K2P model and the JC model) consist of matrices with repeated eigenvalues and have positive measure subsets of embeddable matrices with non-identifiable rates. However, when considering the whole set of K3P Markov matrices, the subset of embeddable matrices with more than one Markov generator has measure zero (see Chapter 4 in \cite{Roca}). Nevertheless, this behaviour only holds if the Markov matrices within the model have real eigenvalues.  }
\begin{proposition}
\label{prop:rate_idendifiable}
There is a positive measure subset of $4\times 4$ CS Markov matrices that are embeddable and whose rates are not identifiable. Moreover, all the Markov generators of the matrices in this set are also CS matrices.
\end{proposition}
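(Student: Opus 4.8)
The plan is to work entirely in the block-diagonalized coordinates of Section~\ref{sect:embed} and to exploit the case $\Delta<0$, where the real logarithms of $M$ form a genuinely infinite family indexed by $k\in\mathbb{Z}$. First, the second assertion is essentially free: by Theorem~\ref{thm:embeddability SSM}, \emph{every} real logarithm of a diagonalizable $4\times4$ CS Markov matrix with distinct eigenvalues is again CS. Since a Markov generator is by definition a real logarithm that happens to be a rate matrix, all Markov generators of any matrix in the set we construct are automatically CS. It therefore remains to produce a positive-measure set of such matrices admitting at least two distinct Markov generators, i.e. for which the inequalities of Theorem~\ref{theorem:criteria SSM-embeddability} hold for two different values of $k$.

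Next I would record how the relevant quantities vary with $k$ when $\Delta<0$. Writing the complex eigenvalues as $\lambda_2=\overline{\lambda_3}=\rho e^{i\theta}$ with $\rho=\sqrt{\lambda_2\lambda_3}$ and $\theta\in(0,\pi)$, one finds $y_k+z_k=2\log\rho$ and $(y_k-z_k)/\sqrt{\Delta}=2(\theta+2k\pi)/\sqrt{|\Delta|}=:2c_k$, so that $\phi(k),\varepsilon(k)$ are linear in $c_k$ while $\gamma(k)+\delta(k)=2\log\rho$ is independent of $k$; the upper-block quantities $\alpha_1,\beta_1$ do not depend on $k$ at all. The conditions $|\phi(k)|\le-\alpha_1$ and $|\varepsilon(k)|\le-\beta_1$ thus confine $c_k$ to a bounded window, whereas $\gamma(k)\le\alpha_1$ and $\delta(k)\le\beta_1$ force $2\log\rho\le\alpha_1+\beta_1$. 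The decisive lever is that $\alpha_1,\beta_1$ are proportional to $x=\log(\lambda_1)=\log(\lambda+\mu-1)$, so letting $\lambda_1\to0^{+}$ drives $-\alpha_1,-\beta_1\to+\infty$ and widens the $c_k$-window without bound; the price is that the determinant-type conditions then force $\rho\to0$.

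Concretely I would specialize to the symmetric subfamily $\lambda=\mu$, $\alpha=\beta$, $\alpha'=-\beta'=t>0$ (so $\Delta=-4t^2<0$), where everything decouples: $\alpha_1=\beta_1=\tfrac12\log(\lambda+\mu-1)$, $\gamma(k)=\delta(k)=\log\rho$, and a cancellation gives $|\phi(k)|=|\varepsilon(k)|=|\theta+2k\pi|$ with $\theta=\arg(\alpha+ti)$. Here the four conditions reduce to $|\theta+2k\pi|\le R$ and $\log\rho\le -R$, with $R:=-\alpha_1=\tfrac12\log\tfrac1{\lambda+\mu-1}$. Choosing $\lambda=\mu$ with $\lambda+\mu-1$ tiny makes $R$ as large as desired; then for $R>2\pi$ both $k=0$ (value $\theta$) and $k=-1$ (value $2\pi-\theta$) lie strictly inside the window, while $\rho\le e^{-R}$ is arranged by taking $\alpha,t$ of order $e^{-R}$, which also satisfies the Markov constraints $|\alpha|\le\min(\lambda,\mu)$ and $t\le\min(1-\lambda,1-\mu)$ of Proposition~\ref{prop:block diagonal}(ii) with room to spare. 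At such a matrix $M_0$ the four eigenvalues are distinct, $\Delta<0$, and the inequalities of Theorem~\ref{theorem:criteria SSM-embeddability} hold \emph{strictly} for both $k=0$ and $k=-1$, yielding two distinct CS Markov generators.

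Finally, since all the quantities above are continuous in the entries of $M$ on the open locus where the eigenvalues are distinct and $\Delta<0$, the strict inequalities for $k=0$ and $k=-1$ persist on an open neighborhood of $M_0$ inside the full-dimensional set of $4\times4$ CS Markov matrices; this neighborhood has positive Lebesgue measure, and every matrix in it is embeddable with at least two CS Markov generators, proving the claim. The main obstacle is the tension in the middle step: the window conditions on $|\phi(k)|,|\varepsilon(k)|$ pull toward small $|c_k|$ (hence few admissible $k$), yet we need at least two, while the trace conditions simultaneously constrain $\rho$. Resolving this is exactly what the limit $\lambda_1\to0^{+}$ together with the compensating choice $\rho\to0$ achieves, and the symmetric subfamily is chosen precisely so that the upper block (controlling $R$) and the lower block (controlling $\rho,\theta$) can be tuned independently.
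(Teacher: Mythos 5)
Your proof is correct, and it takes a genuinely different route from the paper's. The paper proceeds by exhibiting one explicit matrix via an eigendecomposition $M=P\,\mathrm{diag}(1,e^{-7\pi},e^{-4\pi}i,-e^{-4\pi}i)P^{-1}$ with a hand-picked $P$, constructs the commuting matrix $V=P\,\mathrm{diag}(0,0,2\pi i,-2\pi i)P^{-1}$, and scales the generator to $Qt$ so that for $t$ large the matrices $Qt+mV$, $0\le m\le k$, are all rate matrices; positive measure is then obtained by invoking the external perturbation result (Theorem 3.3 of the cited work of Casanellas et al.), which perturbs eigenvalues and eigenvectors. You instead stay entirely inside the paper's own machinery: you pick the symmetric subfamily $\lambda=\mu$, $\alpha=\beta$, $\alpha'=-\beta'=t$ so that $\Delta=-4t^2<0$, compute that the conditions of Theorem~\ref{theorem:criteria SSM-embeddability} reduce to $|\theta+2k\pi|\le R$ and $\log\rho\le-R$ with $R=-\alpha_1$, and then make $R>2\pi$ by sending $\lambda_1\to0^+$ while shrinking $\rho$ accordingly, so that both $k=0$ and $k=-1$ give generators; I checked the block computations ($\gamma(k)=\delta(k)=\log\rho$, $|\varepsilon(k)|=|\phi(k)|=|\theta+2k\pi|$, and the Markov constraints of Proposition~\ref{prop:block diagonal}) and they are all right. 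Your approach buys self-containedness (positive measure follows from strict inequalities being open in the six-dimensional parameter space, with no appeal to an outside perturbation theorem) and an explicit description by inequalities of the positive-measure set; it also disposes of the ``all generators are CS'' clause cleanly by citing Theorem~\ref{thm:embeddability SSM}, which the paper leaves implicit. The paper's construction, in exchange, yields arbitrarily many generators ($k+1$ of them for any $k$) from a single family by letting $t$ grow, though your argument extends the same way by demanding $R>2N\pi$.
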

\begin{proof}
Given  \begin{center}
    
$P= \begin{small}  \begin{pmatrix}
    1 & -5 & 1-i & 1+i\\
    1 & 2 & -i & i\\
    1 & 2 & i & -i\\
    1 & -5 & -1+i & -1-i\\
    \end{pmatrix} \end{small},$\end{center}
let us consider the following matrices
   $$ M= P \;diag(1, e^{-7\pi}, e^{-4\pi }i, - e^{-4\pi }i)\; P^{-1}, \qquad Q=  P\; diag(0, -7\pi, -4\pi -\frac{3\pi}{2}i,  -4\pi +\frac{3\pi}{2}i)\; P^{-1}. $$
A straightforward computation shows that  $M$ is a CS Markov matrix and $Q$ is a CS rate matrix. Moreover they both have non-zero entries. By applying the exponential series to $Q$, we get that $\exp(Q)=M$. That is $M$ is embeddable and $Q$ is a Markov generator of $M$.

Since $Q$ is a rate matrix, so is $Qt$ for any $t\in \mathbb{R}_{\geq0}$.  {\color{black}Therefore, $\exp(Qt)$ is an embeddable Markov matrix, because the exponential of any rate matrix is necessarily a Markov matrix. See \cite[Theorem 4.19]{pachter2005algebraic} for more details.} Moreover, we have that
$${\color{black}S^{-1}P = \begin{pmatrix}
1 & -5 & 0 &0 \\
1 & 2 &0 & 0\\
 0&0 & -i & i\\
 0 & 0& 1-i & 1+i\\
\end{pmatrix}},$$
so $S^{-1} \exp(Qt)S $  is a 2-block diagonal matrix. Hence, {\color{black}by Proposition~\ref{prop:block diagonal}} we have that $\exp(Qt)$  is an embeddable strand symmetric Markov matrix for all $t\in \mathbb{R}_{>0}$.

Now, let us define $V=P\;diag(0,0,2\pi {\color{black}i}, -2\pi {\color{black}i})\;P^{-1}$. Note that $Q$ and $V$ diagonalize simultaneously via $P$ and hence they commute. Therefore, $$\exp(Q+V)= \exp(Q)\exp(V)=M I_4= M$$ by the Baker-Campbell-Haussdorff formula. Moreover,  $$\exp(Qt+kV)= \exp(Qt)\exp(kV)=\exp(Qt)I_4= \exp(Qt)$$ for all $k \in \mathbb{Z}$. Note that $kV$ is a bounded matrix for any given $k$ and hence, given $t$ large enough, it holds that $Qt+mV$ is a rate matrix for any $m$ between $0$ and $k$.

This shows that, for $t$ large enough, $\exp(Qt)$ is an embeddable CS Markov matrix with at least $k+1$ different CS Markov generators. Moreover, $\exp(Qt)$ and all its generators have no null entries by construction and they can therefore be perturbed as  in Theorem 3.3 in  \cite{casanellas2020open}
 to obtain a positive measure subset of embeddable CS Markov matrices that have $k+1$ CS Markov generators. {\color{black} Such perturbation consists of small enough changes on the real and complex parts of the eigenvalues and eigenvectors of $M$ (other than the  eigenvector $(1,\dots,1)$ and its corresponding eigenvalue $1$.) } 
\end{proof}

\begin{remark}
Using the same notation as in the proposition above and given $C\in GL_2(\mathbb{C})$, let us define
 $$Q(C) = P \;diag(I_2,C)\;diag\left(1, -7\pi, -4\pi -\frac{3\pi}{2}i,  -4\pi +\frac{3\pi}{2}i\right)\;diag(I_2,C^{-1})\;P^{-1}.$$ Since $Q(I_2)=Q$ is a CS rate matrix with no null entries, so is $Q(C)$ for $C\in GL_2(\mathbb{C})$ close enough to $I_2$. Moreover, by construction we have that  $\exp(2tQ(C))=\exp(2tQ)$ for all $t\in \mathbb{N}$. Therefore, for $t\in \mathbb{N}$ we have that $\exp(2tQ)$ has   {\color{black}uncountably} many Markov generators (namely, $2tQ(C)$ with $C$ close to $I_2$) and all of them are CS matrices \cite[Corollary 1]{Culver}. It is worth noting that {\color{black}according to \cite[Corollary 1]{Culver},} if a matrix has {\color{black}uncountably} many logarithms, then it necessarily has repeated real eigenvalues. Therefore, the subset of embeddable CS Markov matrices with {\color{black}uncountably} many generators has measure zero within the set of all  matrices.

\end{remark}

\section{Volumes of $4\times 4$ CS Markov matrices}
\label{sect:volume}

In this section, we compute the relative volumes of embeddable {\color{black}$4\times 4$ CS} Markov matrices within some meaningful subsets of Markov matrices. The aim of this section is to describe how large the different sets of matrices are compared to each other.

Let $V^{Markov}_4$ be the set of all $4\times 4$ {\color{black} CS} Markov matrices\index{$V^{Markov}_4$ set of $4\times 4$ CS Markov matrices}. We use the following description
$$V^{Markov}_4=\{(b,c,d,e,g,h)^T\in \mathbb{R}^6 : b,c,d,e,g,h\geq 0,\quad 1-b-c-d\geq 0,\quad 1-e-g-h\geq 0\}.$$
More explicitly, we identify the $4\times 4$ {\color{black} CS} Markov matrix 
$$\begin{pmatrix}
1-b-c-d&b&c&d\\
e&1-e-g-h&g&h\\
h&g&1-e-g-h&e\\
d&c&b&1-b-c-d\\
\end{pmatrix}$$ with a point $(b,c,d,e,g,h)\in V^{Markov}_4.$
Let $V_+$ \index{$V_+$ set of CS Markov matrices with real positive
eigenvalues} be the set of all {\color{black}CS} Markov matrices having {\color{black} real positive eigenvalues}, where
$$\Delta=((1-e-2g-h)-(1-b-c-2d))^2+4(e-h)(b-c), $$
{\color{black} is the discriminant of the matrix $M_2$} as stated in Section~\ref{section:lower block}. {\color{black}We have $V_+\subseteq V^{Markov}_4$.}
More explicitly,
\begin{small}
\begin{align*}
V_+=\{&(b,c,d,e,g,h)\in \mathbb{R}^6 : b,c,d,e,g,h\geq 0,\qquad  1-b-c-d\geq 0,\qquad 1-e-g-h\geq 0,\qquad 1-b-c-e-h>0,\\&  {\color{black}(2-b-c-2d-e-2g-h)+\Delta>0,\qquad  (2-b-c-2d-e-2g-h)-\Delta>0},\qquad \Delta>0\}.\\
\end{align*}
\end{small}
Let $V_{em+}$ be the set\index{$V_{em+}$ set of embeddable CS Markov matrices with four distinct real positive eigenvalues} of all embeddable {\color{black}$4\times 4$ CS} Markov matrices with {\color{black}four distinct real positive eigenvalues}. {\color{black}We have $V_{em+}\subseteq V_+$.} Therefore, by Theorem~\ref{theorem:criteria SSM-embeddability},
\begin{small}
\begin{align*}
V_{em+}=\{&(b,c,d,e,g,h)\in \mathbb{R}^6 :  b,c,d,e,g,h\geq 0,\qquad 1-b-c-d\geq 0,\qquad  1-e-g-h\geq 0,\qquad 1-b-c-e-h>0,\\&{\color{black}(2-b-c-2d-e-2g-h)+\Delta>0,\qquad  (2-b-c-2d-e-2g-h)-\Delta>0} , \qquad \Delta>0,\\
&|\phi(0)|\leq -\alpha_{{\color{black}1}},\qquad|\varepsilon(0)|\leq -\beta_{{\color{black}1}},\qquad\delta(0)\leq \beta_{{\color{black}1}},\qquad\gamma(0)\leq \alpha_{{\color{black}1}} \}.\\
\end{align*}
\end{small}

Finally, we consider the following two biologically relevant subsets of $V^{Markov}_4$. Let $V_{\rm{DLC}}$\index{$V_{\rm{DLC}}$ set of diagonally largest in column (DLC) Markov matrices} be the set of \textit{diagonally largest in column}\index{diagonally largest in column (DLC) matrix} {\color{black}(DLC) Markov} matrices, which is the subset of $V^{Markov}_4$ containing all {\color{black}CS} Markov matrices such that the diagonal element is the largest element in each column. These matrices are related to matrix parameter identifiability in phylogenetics \cite{chang1996full}. Secondly, we let $V_{\rm{DD}}$\index{$V_{\rm{DD}}$ set of diagonally dominant Markov matrices} be the set of \textit{diagonally dominant\index{diagonally dominant (DD) matrix}} {\color{black}(DD) Markov} matrices, which {\color{black}is} the subset of $V^{Markov}_4$ matrices containing all {\color{black}CS} Markov matrices such that in each row the diagonal element is at least the sum of all the other elements in the row. Biologically, the subspace $V_{\rm{DD}}$ consists of matrices with probability of not mutating at least as large as the probability of mutating. If a diagonally dominant matrix is embeddable, it has an identifiable rate matrix \cite{cuthbert1972uniqueness,Cuthbert}. By the definition of each set, we have the inclusion $V_{\rm{DD}}\subseteq V_{\rm{DLC}}$.

{\color{black}
\begin{remark}
The sets $V_+$, $V_{em+}$, $V_{\rm{DLC}}$, $V_{\rm{DD}}$ that we consider in this section are all subsets of the set $V^{Markov}_4$ of all $4\times 4$ {\color{black} CS} Markov matrices, but we can use the same definition to refer to the equivalent subsets of $n\times n$ {\color{black} CS} Markov matrices. Therefore, we will use the same notation $V_+$, $V_{em+}$, $V_{\rm{DLC}}$, $V_{\rm{DD}}$ to refer to the equivalent subsets of the set $V^{Markov}_n$ of $n\times n$ {\color{black} CS} Markov matrices without confusion in the following sections.
\end{remark}
}

In the rest of this section, the number $v(A)$ denotes the Euclidean volume of the set $A$. By definition, $V^{Markov}_4$, $V_{\rm{DLC}}$ and $V_{\rm{DD}}$ are polytopes, since they are defined by the linear inequalities in $\mathbb{R}^6$. Hence, we can use \texttt{Polymake} \cite{gawrilow2000polymake} to compute their exact volumes and obtain that
$$v(V^{Markov}_4)=\frac{1}{36},\quad v(V_{\rm{DLC}})= \frac{1}{576}\quad\mbox{ and }\quad v(V_{\rm{DD}})=\frac{1}{2304}.$$ Hence, we see that $V_{\rm{DLC}}$ and $V_{\rm{DD}}$ constitute roughly only $6.25\%$ and $1.56\%$ of $V_4^{Markov}$, respectively.

On the other hand, we will estimate the volume of the sets $V_+, V_{em+}, V_{\rm{DLC}}\cap V_+,V_{\rm{DLC}}\cap V_{em+},V_{\rm{DD}}\cap V_+,\mbox{ and }V_{\rm{DD}}\cap V_{em+}$ using the hit-and-miss Monte Carlo integration method \cite{hammersley2013monte} with sufficiently many sample points in \texttt{Mathematica} \cite{Mathematica}. {\color{black}Theoretically, Theorem~\ref{theorem:criteria SSM-embeddability} enables us to compute the exact volume of these relevant sets. For example in the case of K3P matrices, such exact computation of volumes has been feasible in \cite{roca2018embeddability}. However, while for the K3P matrices, the embeddability criterion is given by three quadratic polynomial inequalities, in the case of CS matrices the presence of nonlinear and nonpolynomial constraints imposed on each set, makes the exact computation of the volume of these sets intractable. Therefore, we need to approximate the volume of these sets.} Given a subset $A\subseteq V_4^{Markov}$, the volume estimate of $v(A)$ computed using the hit-and-miss Monte Carlo integration method with $n$ sample points {\color{black} is given by the number of points belonging to $A$ out of $n$ sample points}.
For computational purposes, in the formula of $\phi(0)$ and $\varepsilon(0)$, we use the fact that{\color{black}
\begin{align*}
    y-z&=\log\left(\frac{(2-b-c-2d-e-2g-h)+\sqrt{\Delta}}{(2-b-c-2d-e-2g-h)-\sqrt{\Delta}}\right).\\
    &=\log\left(\frac{((2-b-c-2d-e-2g-h)+\sqrt{\Delta})^2}{(2-b-c-2d-e-2g-h)^2-\Delta}\right).\\
    &=\log\left(\frac{((2-b-c-2d-e-2g-h)+\sqrt{(b+c+2d-e-2g-h)^2+4(e-h)(b-c)})^2}{(2-b-c-2d-e-2g-h)^2-((b+c+2d-e-2g-h)^2+4(e-h)(b-c))}\right)
\end{align*}}
All codes for the computations implemented \texttt{Mathematica} and \texttt{Polymake} can be found at the following address:
\texttt{https://github.com/ardiyam1/Embeddability-and-rate-identifiability-of-\\centrosymmetric-matrices}.

The results of these estimations using the hit-and-miss Monte Carlo integration {\color{black}implemented in \texttt{Mathematica}} with $n$ sample points are presented in Table~\ref{table:estimated-volume-M_+ with Delta>0}, while Table~\ref{table:estimated-ratio-volume with Delta>0} provides an estimated volume ratio between relevant subsets of centrosymmetric Markov matrices using again the hit-and-miss Monte Carlo integration with $n$ sample points. {\color{black}In Table~\ref{table:estimated-volume-M_+ with Delta>0}, we firstly generate $n$ centrosymmetric matrices whose off-diagonal entries were sampled uniformly in $[0, 1]$ and forced the rows of the matrix to sum to one. Out of these $n$ matrices, we test how many of them are actually Markov matrices (i.e. the diagonal entries are non-negative) and then out of these how many have positive eigenvalues.}. In particular, for {\color{black}$n=$}$10^7$ sample points {\color{black}containing 277628 centrosymmetric Markov matrices}, Table~\ref{table:estimated-ratio-volume with Delta>0} suggests that there are approximately {\color{black}$1.7\%$ of centrosymmetric Markov matrices with distinct positive eigenvalues that are embeddable. Moreover, we can see that for $n=10^7$, out of all embeddable centrosymmetric Markov matrices with distinct positive eigenvalues, almost all are diagonally largest in column, while only $28\%$ are diagonally dominant.}

\begin{table}[H]
\centering
\caption{{\color{black}Number of samples in the sets } $V_+, V_{em+}, V_{\rm{DLC}}\cap V_+,V_{\rm{DLC}}\cap V_{em+},V_{\rm{DD}}\cap V_+\mbox{ and }V_{\rm{DD}}\cap V_{em+}$ {\color{black}using hit-and-miss methods} and Theorem~\ref{theorem:criteria SSM-embeddability}.}
\label{table:estimated-volume-M_+ with Delta>0}
\begin{tabular}{|c|c|c|c|c|}
    \hline
    $n$  &$10^4$ & $10^5$ & $10^6$ & $10^7$\\
    \hline
    {\color{black}Samples in $V_4^{Markov}$} & {\color{black}280} & {\color{black}2767}& {\color{black}27829} & {\color{black}277628}\\
     \hline
   {\color{black}Samples $V_+$} & {\color{black}23}&{\color{black}192}&{\color{black}1999}&{\color{black}20601}\\
   \hline 
    {\color{black}Samples in $V_{em+}$} & {\color{black}3}&{\color{black}34}&{\color{black}359}&{\color{black}3511}\\
    \hline
    {\color{black}Samples in $V_{\rm{DLC}}\cap V_+$}& {\color{black}19}& {\color{black}154} & {\color{black}1541}& {\color{black}15830}\\\hline
    {\color{black}Samples in $V_{\rm{DD}}\cap V_+$}&{\color{black}3}&{\color{black}31}&{\color{black}262}&{\color{black}2889}\\\hline
    {\color{black}Samples in $V_{\rm{DLC}}\cap V_{em+}$}& {\color{black}3}&{\color{black}34}& {\color{black}357}& {\color{black}3503}\\\hline
    {\color{black}Samples in $V_{\rm{DD}}\cap V_{em+}$}&{\color{black}1}&{\color{black}15}&{\color{black}105}&{\color{black}1011}\\
   \hline
\end{tabular}
\end{table}

\begin{table}[H]
\centering
\caption{Relative volumes ratio between the relevant subsets {\color{black} obtained using hit-and-miss method and Theorem \ref{theorem:criteria SSM-embeddability}. The volumes were estimated as the quotient of the sample sizes in Table \ref{table:estimated-volume-M_+ with Delta>0}.} }
\label{table:estimated-ratio-volume with Delta>0}
\begin{tabular}{|c|c|c|c|c|c|}
    \hline
    $n$ &$10^4$ & $10^5$ & $10^6$ & $10^7$\\
     \hline
   $\frac{v(V_{em+})}{v(V_+)}$&{\color{black}0.130435}&{\color{black}0.177083}&{\color{black}0.17959}&{\color{black}0.170429}\\
   \hline
    $\frac{v(V_{\rm{DLC}}\cap V_{em+})}{v(V_{\rm{DLC}}\cap V_+)}$&{\color{black}0.157895}& {\color{black}0.220779}&{\color{black}0.231668} & {\color{black}0.221289}\\
   \hline
   $\frac{v(V_{\rm{DLC}}\cap V_{em+})}{v(V_+)}$&{\color{black}0.130435}& {\color{black}0.177083}& {\color{black}0.178589}&{\color{black}0.17004}\\
   \hline
   $\frac{v(V_{\rm{DLC}}\cap V_{em+})}{v(V_{em+})}$&{\color{black}1}&{\color{black}1}&{\color{black}0.994429}&{\color{black}0.997721}\\
   \hline
   $\frac{v(V_{\rm{DD}}\cap V_{em+})}{v(V_{\rm{DD}}\cap V_+)}$&{\color{black}0.333333}&{\color{black}0.483871}&{\color{black}0.400763}&{\color{black}0.349948}\\
   \hline
   $\frac{v(V_{\rm{DD}}\cap V_{em+})}{v(V_+)}$&{\color{black}0.0434783}&{\color{black}0.078125}&{\color{black}0.052563}&{\color{black}0.0490753}\\
   \hline
    $\frac{v(V_{\rm{DD}}\cap V_{em+})}{v(V_{em+})}$&{\color{black}0.333333}&{\color{black}0.441176}&{\color{black}0.292479}&{\color{black}0.287952}\\
   \hline
\end{tabular}
\end{table}

{\color{black}
An alternative approach for approximating the number of embeddable matrices within the model is to use Algorithm 5.8 in \cite{casanellas2020embedding} to test the embeddability of the sample points. Tables  \ref{table:python sampleset2} and \ref{table:python estimated-ratio-volumes} below are  analogous to Tables \ref{table:estimated-volume-M_+ with Delta>0} and \ref{table:estimated-ratio-volume with Delta>0}, but Table~\ref{table:python sampleset2} was obtained using the sampling method in \cite[Appendix A]{Roca}, while using either Algorithm 5.8 in \cite{casanellas2020embedding} or the inequalities in Theorem \ref{theorem:criteria SSM-embeddability} yields identical results which are provided in Table~\ref{table:python sampleset2} and Table~\ref{table:python estimated-ratio-volumes}.

We used the python implementation of Algorithm 5.8 in \cite{casanellas2020embedding} provided in \cite[Appendix A]{Roca} and modified it to sample on the set of $4 \times 4$ CS Markov matrices with positive eigenvalues.
The original sampling method used in \cite[Appendix A]{Roca} consisted of sampling uniformly on the set of $4\times4$ centrosymmetric-Markov matrices and what we did is keep sampling until we got $n$ samples  (or as many samples as we require) with positive eigenvalues.

Despite the fact that Theorem \ref{theorem:criteria SSM-embeddability} and Algorithm 5.8 in \cite{casanellas2020embedding} were originally implemented using different programming languages (Wolfram Mathematica and Python respectively) and were tested with different sample sets, the results obtained are quite similar as illustrated by Tables \ref{table:estimated-ratio-volume with Delta>0} and \ref{table:python estimated-ratio-volumes}. In fact, when we are applying both Algorithm 5.8 in \cite{casanellas2020embedding} and Theorem \ref{theorem:criteria SSM-embeddability} on the same sample set in Table \ref{table:python sampleset}, we are obtaining identical results which are displayed in Tables \ref{table:python sampleset2} and \ref{table:python estimated-ratio-volumes}.

}

\begin{table}[H]
\caption{\color{black} Number of samples in $V_+, V_{\rm{DLC}}\cap V_+,\mbox{ and }V_{\rm{DD}}\cap V_+$ obtained by using the sampling method in \cite[Appendix A]{Roca}.
}
\centering
\label{table:python sampleset}
{\color{black}
\begin{tabular}{|c|c|c|c|c|}

    \hline
    Samples in $V_+$ &$10^4$ & $10^5$ & $10^6$ & $10^7$\\
   
    \hline
    Samples in $V_{\rm{DLC}}\cap V_+$& 8531 & 85446 & 854709& 8549100\\\hline
    Samples in $V_{\rm{DD}}\cap V_+$&1464& 14538 &144546 &1448720\\\hline
   
\end{tabular}
}
\end{table}

\begin{table}[H]
\centering
\caption{\color{black} Number of samples in $V_{em+}, \ V_{\rm{DLC}}\cap V_{em+}\mbox{ and }V_{\rm{DD}}\cap V_{em+}$ obtained by applying either Theorem \ref{theorem:criteria SSM-embeddability} or the results in \cite{casanellas2020embedding} on the sample set in Table \ref{table:python sampleset}.
}
\label{table:python sampleset2}
{\color{black}
\begin{tabular}{|c|c|c|c|c|}

\hline
        Samples in $V_{em+}$ & 1877 & 18663 &185357&1862413\\
    \hline
    
    Samples in $V_{\rm{DLC}}\cap V_{em+}$& 1869 & 18586& 184555& 1854592\\\hline
    Samples in $V_{\rm{DD}}\cap V_{em+}$& 516 & 5164 &50058& 504304\\
   \hline
   
\end{tabular}}
\end{table}

\begin{table}[H]
\centering
\caption{{\color{black}Relative volumes ratio between the relevant subsets obtained using hit-and-miss method and either Algorithm 5.8 in \cite{casanellas2020embedding} or Theorem \ref{theorem:criteria SSM-embeddability}. The volumes were estimated as the quotient of the sample sizes in Tables \ref{table:python sampleset} and \ref{table:python sampleset2}.}}
\label{table:python estimated-ratio-volumes}
{\color{black}
\begin{tabular}{|c|c|c|c|c|c|}
    \hline
    $n$ &$10^4$ & $10^5$ & $10^6$ & $10^7$\\
     \hline
   $\frac{v(V_{em+})}{v(V_+)}$&$0.1877$&$0.18663$&$0.185357$&$0.1862413$\\
   \hline
    $\frac{v(V_{\rm{DLC}}\cap V_{em+})}{v(V_{\rm{DLC}}\cap V_+)}$& $0.2191$&$0.2175$&$0.2159$&$0.2169$\\
   \hline
   $\frac{v(V_{\rm{DLC}}\cap V_{em+})}{v(V_+)}$& $0.1869$&$0.18586$&$0.184555$&$0.1854592$\\
   \hline
   $\frac{v(V_{\rm{DLC}}\cap V_{em+})}{v(V_{em+})}$&$0.9957$&$0.9959$&$0.99567$& $0.99580$ \\
   \hline
   $\frac{v(V_{\rm{DD}}\cap V_{em+})}{v(V_{\rm{DD}}\cap V_+)}$&$0.3524$&$0.3552$&$0.3463$&$0.3481$\\
   \hline
   $\frac{v(V_{\rm{DD}}\cap V_{em+})}{v(V_+)}$&$0.0516$&$0.05164$&$0.050058$&$0.0504$\\
   \hline
    $\frac{v(V_{\rm{DD}}\cap V_{em+})}{v(V_{em+})}$&$0.2749$&$0.2767$&$0.2701$&$0.2708$\\
   \hline
\end{tabular}}
\end{table}
\renewcommand{\arraystretch}{1}

{\color{black} 
 }

{\color{black} 
It is worth noting that the embeddability criteria given in Theorem \ref{theorem:criteria SSM-embeddability} use inequalities depending on the entries of the matrix, whereas Algorithm 5.8 in \cite{casanellas2020embedding} relies on the computation of its principal logarithm and its eigenvalues and eigenvector, which may cause numerical issues when working with matrices with determinant close to $0$. What is more the computation of logarithms can be computationally expensive. As a consequence, the algorithm implementing the criterion for embeddability arising from Theorem \ref{theorem:criteria SSM-embeddability} is faster. Table \ref{table:python_times} shows the running times for the implementation of both embeddability criteria used to obtain Table \ref{table:python estimated-ratio-volumes}.

\begin{table}[H]
\centering
\caption{\color{black} Running times for the Python implementation of the embeddability criterion arising from Theorem \ref{theorem:criteria SSM-embeddability} and from Algorithm 5.8 in \cite{casanellas2020embedding}. The simulations were run using a computer with 8GB of memory.
}
\label{table:python_times}

{\color{black}
\begin{tabular}{|c|c|c|c|c|}
    \hline
    &$10^4$ & $10^5$ & $10^6$ & $10^7$\\
   \hline 
    Sampling time & 12.5s &121.5s (2 min) & 1222s (20min) & 12141.8s (3h 22min)\\
    Embedding criteria (Theorem \ref{theorem:criteria SSM-embeddability}) &  28.3s & 273.2s (4min 30s)&2703s (45min) & 27413s (7h 37min)\\
    Embedding criteria (Algorithm 5.8) &84.2s &840.5s (15 min) & 8358 (2h 19min) &83786s (23h 16min)\\
   \hline
\end{tabular}}
\end{table}

The Python implementation of Algorithm 5.8 in \cite{casanellas2020embedding} provided in \cite[Appendix A]{Roca}  can also be used to test the embeddability of any $4\times 4$ CS Markov matrix (including those with non-real eigenvalues) without modifying the embeddability criteria. All it takes is a suitable sample set. As hinted in Remark \ref{rk:k_bound}, this would also be possible using the embedability criterion in Theorem~\ref{theorem:criteria SSM-embeddability} together with the boundaries for $k$ provided in \cite[Theorem 5.5]{casanellas2020embedding}. Table \ref{tab:SSvolume} shows the results obtained when applying Algorithm 5.8 in \cite{casanellas2020embedding} to a set of $10^7$ $4\times4$ CS Markov matrices sampled uniformly.}

\begin{table}[H]
  \centering
    \begin{tabular}{|c|c c c |}
    \hline
   & Samples & Embeddable samples & Proportion of embeddable\\  
    \hline
$V_4^{Markov}$ & $10^7$ &  $173455$ &  $0.0173455$  \\
  $V_{\rm{DLC}}$  & $1021195$ & $172380$ &  $0.1688022$\\
  $V_{\rm{DD}}$ & $156637  $ & $49471$ & $0.3158321$    \\
 \hline
  \end{tabular}
  \caption{\label{tab:SSvolume} Embeddable matrices within {\color{black} $4\times 4$ CS Markov matrices and its intersection with DLC matrices and DD matrices.}}
\end{table}

{\color{black}

As most DLC and DD matrices have positive eigenvalues, the proportion of embedabbile matrices within these subsets is almost the same when admitting matrices with non-positive eigenvalues (as in Table \ref{tab:SSvolume} instead of only considering matrices with positive eigenvalues as we did in Tables \ref{table:estimated-ratio-volume with Delta>0} and \ref{table:python estimated-ratio-volumes}. On the other hand, the proportion of $4\times4$ embeddable CS matrices is much smaller in this case.}

\section{Centrosymmetric matrices and generalized Fourier transformation}
\label{sect:centrosymmetric}

In Section~\ref{sect:embed} and \ref{sect:volume} we have seen the embeddability criteria for $4\times 4$ centrosymmetric Markov matrices and the volume of their relevant subsets.
In this section, we are extending this framework to larger matrices. The importance of this extension is relevant to the goal of synthetic biology which aims to expand the genetic alphabet. For several decades, scientists have been cultivating ways to create novel forms of life with basic biochemical components and properties far removed from anything found in nature. In particular, they {\color{black}are} working to expand the number of amino acids which is only possible if they are able to expand the genetic alphabet (see for example \cite{hachimoji}).

\subsection{Properties of centrosymmetric matrices}

For a fixed $n\in\mathbb{N}$, let $V_n$\index{$V_n$ set of $n\times n$ CS matrices} denote the set of all centrosymmetric matrices of order $n.$ Moreover, let $V_n^{Markov}$\index{$V_n^{Markov}$ set of $n\times n$ CS Markov matrices} and $V_n^{rate}$\index{$V_n^{rate}$ set of $n\times n$ CS rate matrices} denote the set of all centrosymmetric Markov and rate matrices of order $n$, respectively.  As a subspace of the set of all $n\times n$ real matrices, for $n$ even, dim$(V_n)=\frac{n^2}{2}$ while for $n$ odd, dim$(V_n)=\lfloor \frac{n}{2}\rfloor(n+1)+1$. We will now mention some geometric properties of the sets $V_n^{Markov}$ and $V_n^{rate}.$  {\color{black}Furthermore, for any real number $x$, $\lfloor x\rfloor$ and  $\lceil x\rceil$ denote the floor and the ceiling function of $x$, respectively.}

\begin{proposition}
 \begin{enumerate}
     \item For $n$ even, $V_n^{Markov}\subseteq \mathbb{R}^{\frac{n(n-1)}{2}}_{\geq 0}$ is a Cartesian product of $\frac{n}{2}$ standard $(n-1)$-simplices and its volume is $\frac{1}{(n-1)!^{\frac{n}{2}}}.$ For $n$ odd,  $V_n^{Markov}\subseteq \mathbb{R}^{\lfloor\frac{n}{2}\rfloor n}_{\geq 0}$ is a Cartesian product of $\lfloor\frac{n}{2}\rfloor$ standard $(n-1)$-simplices and {\color{black} the $\lfloor \frac{n}{2}\rfloor$-simplex with vertices $\{0, \frac{e_i}{2}\}_{1\leq i\leq \lfloor \frac{n}{2}\rfloor}\cup\{e_{\lfloor \frac{n}{2}\rfloor+1}\}$, where $e_i$ is the $i$-th standard unit vector in $\mathbb{R}^n$, which is the vector that has 1 as the $i$-th component and zeros elsewhere}. Hence,  the volume of $V_n^{Markov}$ is {\color{black}$\frac{1}{2^{\lfloor \frac{n}{2}\rfloor}(\lfloor \frac{n}{2}\rfloor)!(n-1)!^{\lfloor\frac{n}{2}\rfloor}}$}.
    \item For $n$ even, $V_n^{rate}=\mathbb{R}_{\geq 0}^{\frac{n(n-1)}{2}}$ and for $n$ odd,  $V_n^{rate}=\mathbb{R}_{\geq 0}^{\lfloor\frac{n}{2}\rfloor n}$. 
 \end{enumerate}
 \end{proposition}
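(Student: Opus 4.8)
The plan is to use that a centrosymmetric matrix is determined by its top $\lceil n/2\rceil$ rows, because row $n+1-i$ is the entrywise reversal of row $i$. I would write $m=\lfloor n/2\rfloor$ and treat the two parities separately. For $n=2m$ the rows $1,\dots,m$ are unconstrained by centrosymmetry, since each is paired with a \emph{distinct} lower row; they form a free block of $mn=\tfrac{n^2}{2}$ real entries, recovering $\dim(V_n)$. For $n=2m+1$ the same holds for rows $1,\dots,m$, while the central row $m+1$ is forced to be palindromic, $a_{m+1,j}=a_{m+1,\,n+1-j}$, so it retains only the $m+1$ independent entries $a_{m+1,1},\dots,a_{m+1,m+1}$; adding up gives $mn+(m+1)=\lfloor n/2\rfloor(n+1)+1=\dim(V_n)$. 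In all cases the defining inequalities (row sums and sign conditions) involve the entries of a single row at a time, so each of $V_n^{Markov}$ and $V_n^{rate}$ splits as a Cartesian product, over rows, of regions lying in mutually orthogonal coordinate blocks. By Fubini the Euclidean volume is then the product of the volumes of these factors, which is the engine of the whole computation.

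For the Markov claim I would, in each free row $i\le m$, eliminate the last entry via $a_{i,n}=1-\sum_{j<n}a_{ij}$; nonnegativity of all entries together with the row-sum-one condition then identifies the row with the corner simplex $\{x\in\mathbb{R}^{n-1}_{\ge0}:\sum_j x_j\le1\}$, of $(n-1)$-dimensional volume $1/(n-1)!$. Taking the product over the $m$ free rows already yields the even case, $v(V_n^{Markov})=1/(n-1)!^{\,n/2}$. For odd $n$ the extra factor is the central row: parametrizing it by $y=(a_{m+1,1},\dots,a_{m+1,m})$ and eliminating the centre entry $a_{m+1,m+1}=1-2\sum_i y_i$ turns the palindrome-plus-sum constraint $2\sum_{i\le m}a_{m+1,i}+a_{m+1,m+1}=1$ into the region $\{y\in\mathbb{R}^m_{\ge0}:\sum_i y_i\le\tfrac12\}$. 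This is the simplex whose vertices in the coordinates $y$ are the origin and the points $\tfrac12 e_i$ for $1\le i\le m$; the vertex written $e_{m+1}$ in the statement is exactly this origin with the eliminated centre coordinate restored, so the listed set really describes an $m$-simplex. The rescaling $z_i=2y_i$, of Jacobian $2^m$, gives its volume as $\tfrac{1}{2^m m!}$, and multiplying the factors produces $v(V_n^{Markov})=\tfrac{1}{2^{m}\,m!\,(n-1)!^{\,m}}$.

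For the rate claim the factors are unbounded rather than simplices. In a free row $i\le m$ the $n-1$ off-diagonal entries are the independent coordinates, constrained only to be nonnegative, while the diagonal entry is pinned down by $q_{ii}=-\sum_{j\ne i}q_{ij}$; since the involution $(i,j)\mapsto(n+1-i,n+1-j)$ sends off-diagonal positions to off-diagonal positions, no sign condition is lost on the mirrored lower rows. Thus each free row contributes a full orthant $\mathbb{R}^{n-1}_{\ge0}$, and for odd $n$ the palindromic central row contributes its $m$ free nonnegative off-diagonal entries $q_{m+1,1},\dots,q_{m+1,m}$, its diagonal again absorbing the row-sum-zero condition. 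Counting coordinates yields $V_n^{rate}=\mathbb{R}^{n(n-1)/2}_{\ge0}$ for $n$ even and $V_n^{rate}=\mathbb{R}^{\lfloor n/2\rfloor n}_{\ge0}$ for $n$ odd.

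The one genuinely delicate point, and where I would spend the most care, is the central row in the odd case: one must verify that the palindrome makes every off-centre variable appear with coefficient $2$ in the row sum, which is precisely what produces the factor $2^{-m}$ separating the odd-$n$ volume from a naive product of standard simplices, and one must check that the apparent overcounting of vertices is only the centre vertex described in two coordinate systems. Everything else reduces to index bookkeeping, the elementary volume $1/k!$ of a $k$-dimensional corner simplex, and the orthogonality of the per-row coordinate blocks that licenses the product formula.
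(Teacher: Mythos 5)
Your proof is correct and follows essentially the same route as the paper: identify the matrix with its independent rows, observe that the Markov/rate constraints act row by row so the set is a Cartesian product, and multiply the volume $1/(n-1)!$ of each standard $(n-1)$-simplex with, in the odd case, the volume $\tfrac{1}{2^{\lfloor n/2\rfloor}(\lfloor n/2\rfloor)!}$ of the halved simplex coming from the palindromic central row. Your added care about the coefficient $2$ in the central row sum and the reconciliation of the vertex list is a welcome elaboration of a step the paper treats tersely, but it is not a different method.
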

 \begin{proof}
 {\color{black}Here we consider the following identification for an $n\times n$ centrosymmetric matrix $M$. For $n$ even, $M$ can be thought as a point $(M_1,\dots, M_{\frac{n}{2}})\in (\mathbb{R}^n_{\geq 0})^{\frac{n}{2}}$ where the point $M_i\in \mathbb{R}^n_{\geq 0}$ corresponds to the $i$-th row of $M$. Similarly, for $n$ odd, we identify $M$ as a point in $(\mathbb{R}^n_{\geq 0})^{\lfloor\frac{n}{2}\rfloor}\times \mathbb{R}^{\lfloor\frac{n}{2}\rfloor+1}_{\geq 0}$. Since $M$ is a Markov matrix, under this identification, each point $M_i$ lies in some simplices. Therefore, $V^{Markov}_n$ is a Cartesian product of some simplices. For $n$ even, these simplices are the standard $(n-1)$-dimensional simplex:
 \begin{equation}\label{Eq:standard}
    \left\{
\begin{array}{ll}
      x_1+\cdots+x_n=1,\\
      x_i\geq 0,\quad 1\leq i\leq n
\end{array} 
\right.
\quad \Leftrightarrow \quad
\left\{
\begin{array}{ll}
     x_1+\cdots+x_{n-1}\leq 1,\\
     x_i\geq 0,\quad 1\leq i\leq n-1
\end{array}
\right.  
 \end{equation}
 \noindent For $n$ odd and $1\leq i\leq \lfloor \frac{n}{2}\rfloor$, the point $M_i$ belongs to standard $(n-1)$-simplex above and the point $M_{\lfloor\frac{n}{2}\rfloor+1}$ belongs to the simplex 
 \begin{equation}\label{eq:scaled simplex}
 \left\{
\begin{array}{ll}
      2x_1+\cdots+2x_{\lfloor \frac{n}{2}\rfloor}+x_{\lfloor \frac{n}{2}\rfloor+1}=1\\
      x_i\geq 0,\quad 1\leq i\leq \lfloor \frac{n}{2}\rfloor+1
\end{array} 
\right.
\quad \Leftrightarrow \quad
\left\{
\begin{array}{ll}
     x_1+\cdots+x_{\lfloor\frac{n}{2}\rfloor}\leq \frac{1}{2},\\
     x_i\geq 0,\quad 1\leq i\leq \lfloor\frac{n}{2}\rfloor
\end{array}
\right.
\end{equation}
 
 We now compute the volume of $V^{Markov}_n$. Let us recall the fact that the volume of the Cartesian product of spaces is equal to the product of volumes of each factor space if the volume of each factor space is bounded. Moreover, the $(n-1)$-dimensional volume of the standard simplex in Equation~(\ref{Eq:standard}) in $\mathbb{R}^{n-1}$ is $\frac{1}{(n-1)!}.$ For $n$ even, the statement follows immediately. For $n$ odd, we use the fact that the  $\lfloor\frac{n}{2}\rfloor$-dimensional volume of the simplex in Equation~(\ref{eq:scaled simplex}) is $\frac{1}{2^{\lfloor\frac{n}{2}\rfloor}(\lfloor\frac{n}{2}\rfloor)!}.$
 
 For the second statement, we use the fact that if $Q$ is a rate matrix, then $q_{ii}=-\sum_{j\neq i}q_{ij}$ where $q_{ij}\geq 0$ for $i\neq j$. }
 \end{proof}

In the rest of this section, let $J_n$ be the $n\times n$ anti-diagonal matrix, i.e. the $(i,j)$-entries are one if  $i+j= n+1$ and zero otherwise. The following proposition provides some properties of the matrix $J_n$ that can be checked easily.

\begin{proposition}
   Let $A=(a_{ij})\in M_n(\mathbb{R})$. Then
  \begin{enumerate}
      \item  
 $(AJ_n)_{ij}=a_{i,n+1-j}\mbox{ and }(J_nA)_{ij}=a_{n+1-i,j}.$
 \item $A$ is a centrosymmetric matrix if only if $J_nAJ_n=A.$
  \end{enumerate}
 \end{proposition}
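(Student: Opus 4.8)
The plan is to verify both statements by a direct entrywise computation, exploiting the fact that $J_n$ has the single nonzero entry $(J_n)_{ij}=1$ precisely when $j=n+1-i$ (equivalently $i+j=n+1$), and is zero otherwise. Everything reduces to collapsing the single surviving term in each matrix-product sum, so the argument is elementary and I do not anticipate a genuine obstacle; the only thing to be careful about is keeping track of which index gets reflected to $n+1-(\cdot)$ when $J_n$ acts on the left versus the right.

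For the first statement I would write $(AJ_n)_{ij}=\sum_{k=1}^n a_{ik}(J_n)_{kj}$ and observe that $(J_n)_{kj}=1$ only for $k=n+1-j$, so the sum collapses to the single term $a_{i,\,n+1-j}$. Symmetrically, $(J_nA)_{ij}=\sum_{k=1}^n (J_n)_{ik}a_{kj}$, where $(J_n)_{ik}=1$ only for $k=n+1-i$, yielding $a_{n+1-i,\,j}$. Thus left-multiplication by $J_n$ reverses the row ordering and right-multiplication reverses the column ordering, which is the content of part (1).

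For the second statement I would apply part (1) twice in succession. Setting $B:=J_nA$, part (1) gives $(B J_n)_{ij}=b_{i,\,n+1-j}=(J_nA)_{i,\,n+1-j}=a_{\,n+1-i,\,n+1-j}$, so that
\begin{equation*}
(J_nAJ_n)_{ij}=a_{\,n+1-i,\,n+1-j}\qquad\text{for all }1\le i,j\le n.
\end{equation*}
Comparing this with Definition~\ref{Definition: cs matrix}, the matrix $A$ is centrosymmetric if and only if $a_{ij}=a_{\,n+1-i,\,n+1-j}$ for every $i,j$, which is exactly the condition $(J_nAJ_n)_{ij}=a_{ij}$ for all $i,j$, i.e.\ $J_nAJ_n=A$. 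This establishes the equivalence in part (2), and the proof is complete. (One could also record the harmless auxiliary fact $J_n^2=I_n$, since $J_n$ permutes basis vectors by the involution $i\mapsto n+1-i$, though it is not strictly needed for either claim.)
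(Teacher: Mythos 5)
Your proof is correct, and since the paper states this proposition without proof (remarking only that the properties "can be checked easily"), your entrywise computation is exactly the routine verification the authors intended: collapsing the product sums using $(J_n)_{kj}=1$ iff $k=n+1-j$, then applying part (1) twice to get $(J_nAJ_n)_{ij}=a_{n+1-i,n+1-j}$ and comparing with Definition~\ref{Definition: cs matrix}. No gaps.
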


In Section~\ref{sect:embed}, we have seen that $4\times 4$ CS matrices can be block-diagonalized through the matrix $S$. Now we will present a construction of generalized Fourier matrices to block-diagonalize any centrosymmetric matrices. Let us consider the following recursive construction of the $n\times n$ matrix $S_n$:
\begin{align}
\label{Sn}
    S_1=\begin{pmatrix}
    1\\
    \end{pmatrix},
    S_2=\begin{pmatrix}
    1&1\\
    1&-1\\
    \end{pmatrix}\mbox{ and }
    S_n:=\begin{pmatrix}
    1&0&1\\
    0&S_{n-2}&0\\
    1&0&-1\\
    \end{pmatrix}, \mbox{ for }n\geq 3.
\end{align}
 \begin{proposition}
 For each natural number $n\geq 3$, $S_n$ is invertible and its inverse is given by
 $$S_n^{-1}=\begin{pmatrix}
 \frac{1}{2}&0& \frac{1}{2}\\
 0&S_{n-2}^{-1}&0\\
  \frac{1}{2}&0& -\frac{1}{2}\\
 \end{pmatrix}.$$
 \end{proposition}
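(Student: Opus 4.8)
The plan is to argue by induction on $n$, stepping in increments of two so as to match the recursion in~\eqref{Sn}, which preserves the parity of $n$ and terminates at either $S_1$ or $S_2$. The base cases are thus $n=1$ and $n=2$: here $S_1=(1)$ is trivially invertible, and $\det S_2 = -2 \neq 0$ so $S_2$ is invertible as well. This ensures that $S_{n-2}^{-1}$ is well defined at the first two values $n=3,4$ of the inductive step, and more generally the inductive hypothesis will supply invertibility of $S_{n-2}$.

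For the inductive step, assume $S_{n-2}$ is invertible, so that the claimed matrix
$$T_n := \begin{pmatrix} \tfrac{1}{2} & 0 & \tfrac{1}{2} \\ 0 & S_{n-2}^{-1} & 0 \\ \tfrac{1}{2} & 0 & -\tfrac{1}{2} \end{pmatrix}$$
is well defined. I would then verify directly that $S_n T_n = I_n$ by block multiplication, using the $3\times 3$ block decomposition with block sizes $1$, $n-2$, and $1$ along both rows and columns. The four corner scalars reduce to the computations $\tfrac{1}{2}+\tfrac{1}{2}=1$ for the two diagonal corners and $\tfrac{1}{2}-\tfrac{1}{2}=0$ for the two off-diagonal corners; the central $(n-2)\times(n-2)$ block evaluates to $S_{n-2}S_{n-2}^{-1}=I_{n-2}$; and every mixed block (a corner entry times the middle block, or the middle block times a corner entry) vanishes because it is multiplied against a zero block. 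Collecting these contributions yields $S_n T_n = I_n$.

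Since $S_n$ is square and $T_n$ is a right inverse of it, $S_n$ is invertible with $S_n^{-1}=T_n$, which is exactly the claimed formula; this closes the induction. I do not expect any genuine obstacle here: the only point requiring care is the block bookkeeping, namely checking that the $1\times(n-2)$ and $(n-2)\times 1$ off-diagonal blocks are annihilated by the zero blocks of $S_n$ and $T_n$. One could equally verify $T_n S_n = I_n$ directly, but for square matrices this is automatic once $S_n T_n = I_n$ has been established.
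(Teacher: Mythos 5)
Your proof is correct and takes essentially the same approach as the paper, which simply verifies the block product (the paper checks $T_nS_n=I_n$ rather than $S_nT_n=I_n$, an immaterial difference for square matrices). Your explicit induction with base cases $S_1$ and $S_2$ makes precise the well-definedness of $S_{n-2}^{-1}$, which the paper leaves implicit.
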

 \begin{proof}
The proposition easily follows from the definition of $S_n$. Namely,
 $$\begin{pmatrix}
 \frac{1}{2}&0& \frac{1}{2}\\
 0&S_{n-2}^{-1}&0\\
  \frac{1}{2}&0& -\frac{1}{2}\\
 \end{pmatrix}S_n=\begin{pmatrix}
 \frac{1}{2}&0& \frac{1}{2}\\
 0&S_{n-2}^{-1}&0\\
  \frac{1}{2}&0& -\frac{1}{2}\\
 \end{pmatrix}\begin{pmatrix}
    1&0&1\\
    0&S_{n-2}&0\\
    1&0&-1\\
    \end{pmatrix}=I_n.$$ 
    {\color{black}}
 \end{proof}
 
  The following proposition provides another block decomposition of the matrix $S_n$ and its inverse.
 \begin{proposition}\label{prop:partition S_n}
  Let $n\geq 2.$
 \begin{enumerate}
     \item For $n$ even, $S_n=\begin{pmatrix}
     I_{\frac{n}{2}}&J_{\frac{n}{2}}\\
     J_{\frac{n}{2}}&-I_{\frac{n}{2}}\\
     \end{pmatrix}$,
     while for $n$ odd, $S_n=\begin{pmatrix}
     I_{\lfloor\frac{n}{2}\rfloor}&0&J_{\lfloor\frac{n}{2}\rfloor}\\
     0&1&0\\
     J_{\lfloor\frac{n}{2}\rfloor}&0&-I_{\lfloor\frac{n}{2}\rfloor}\\
     \end{pmatrix}.$
     \item Using these block partitions, $S_n^{-1}=\frac{1}{2}S_n$ for $n$ even, while 
      $S_n^{-1}=\begin{pmatrix}
     \frac{1}{2}I_{\lfloor\frac{n}{2}\rfloor}&0& \frac{1}{2}J_{\lfloor\frac{n}{2}\rfloor}\\
     0&1&0\\
      \frac{1}{2}J_{\lfloor\frac{n}{2}\rfloor}&0&- \frac{1}{2}I_{\lfloor\frac{n}{2}\rfloor}\\
     \end{pmatrix}$  for $n$ odd.
 \end{enumerate}
 \end{proposition}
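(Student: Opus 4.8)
The plan is to prove part (1) by induction on $n$ in steps of two, working directly from the recursion \eqref{Sn}, and then to obtain part (2) by a short block multiplication relying on the involution identity $J_m^2=I_m$ (with $m=\lfloor n/2\rfloor$), which is immediate since $J_m$ is the permutation matrix of the order-reversing permutation, an involution.

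For part (1) I would run the even and odd cases in parallel. The base cases $n=2$ and $n=3$ follow by reading off $S_2$ and $S_3$ from \eqref{Sn} and comparing them with the claimed block matrices, using $I_1=J_1=(1)$; for instance the $n=2$ claim is $\left(\begin{smallmatrix} I_1 & J_1\\ J_1 & -I_1\end{smallmatrix}\right)=\left(\begin{smallmatrix}1&1\\1&-1\end{smallmatrix}\right)=S_2$. For the inductive step I would assume the block form for $S_{n-2}$ and substitute it into $S_n=\left(\begin{smallmatrix}1&0&1\\0&S_{n-2}&0\\1&0&-1\end{smallmatrix}\right)$. The entry $(S_{n-2})_{i',j'}$ sits in position $(i'+1,j'+1)$ of $S_n$, so the border adds exactly one new first and one new last index on each side. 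The crux is that the recursion nests $S_{n-2}$ about the \emph{center} (rows and columns $2,\dots,n-1$), whereas the target decomposition splits at the \emph{midpoint} $m$; tracking this offset, the top-left $I_{m-1}$ of $S_{n-2}$ together with the corner entry $(S_n)_{1,1}=1$ assembles into $I_m$, the two copies of $J_{m-1}$ together with the anti-diagonal corners $(S_n)_{1,n}=(S_n)_{n,1}=1$ assemble into the off-diagonal $J_m$ blocks, and the block $-I_{m-1}$ together with $(S_n)_{n,n}=-1$ gives $-I_m$; for $n$ odd the central row and column of $S_{n-2}$ become the central row and column of $S_n$, producing the middle $1$. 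This bookkeeping is the only real obstacle, and it is cleanest to phrase it through the explicit entrywise description $(S_n)_{ii}=1$ for $i\le\lceil n/2\rceil$, $(S_n)_{ii}=-1$ for $i>\lceil n/2\rceil$, $(S_n)_{i,n+1-i}=1$ for $i\neq n+1-i$, and $0$ otherwise, which the induction preserves and which one then matches term by term against the two block forms.

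For part (2) I would take the block matrix claimed to be $S_n^{-1}$ and multiply it by $S_n$ using the $2\times2$ (even) or $3\times3$ (odd) block forms just established. In the even case this is $\tfrac12\left(\begin{smallmatrix}I_m&J_m\\J_m&-I_m\end{smallmatrix}\right)\left(\begin{smallmatrix}I_m&J_m\\J_m&-I_m\end{smallmatrix}\right)=\tfrac12\left(\begin{smallmatrix}I_m+J_m^2&0\\0&J_m^2+I_m\end{smallmatrix}\right)=I_n$, where the off-diagonal blocks cancel as $J_m-J_m=0$ and the diagonal blocks use $J_m^2=I_m$. The odd case is the analogous $3\times3$ block product: the diagonal blocks give $\tfrac12(I_m+J_m^2)=I_m$ and $1$, while every off-diagonal block vanishes because it either contains a factor $0$ coming from the central row or column, or it is of the form $\tfrac12 J_m-\tfrac12 J_m=0$. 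Verifying that the product equals $I_n$ in both parities would then establish that the displayed matrices are indeed $S_n^{-1}$, and no step beyond the index bookkeeping in part (1) would present real difficulty.
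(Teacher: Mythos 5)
Your proof is correct and follows essentially the same route as the paper's (which is a one-line appeal to induction on $n$ and the identity $J_m^2=I_m$): induction in steps of two for the block form in part (1), and a direct block multiplication using $J_m^2=I_m$ for part (2). Your explicit entrywise description of $S_n$ and the careful tracking of the offset between the center-nested recursion and the midpoint block split simply fill in the details the paper leaves implicit.
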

 \begin{proof}
 The proof follows from induction on $n$ and the fact that $J_n^2=I_n$.
 \end{proof}
 
 We will call a vector $v\in \mathbb{R}^n$ \textit{symmetric}
 if $v_i=v_{n+1-i}$ for every $1\leq i\leq n$, i.e. $J_n v=v.$ Moreover, we call a vector $w\in \mathbb{R}^n$ \textit{anti-symmetric}
 if $v_i=-v_{n+1-i}$ for every $1\leq i\leq n$, i.e. $J_n v=-v.$ The following technical proposition will be used in what follows in order to simplify a centrosymmetric matrix.
 \begin{proposition}\label{prop:property of symmetric and anti-symmetric}
 Let $n\geq 2$. Let $v\in \mathbb{R}^n$ be a symmetric vector and  $w\in \mathbb{R}^n$ be an anti-symmetric vector.
 \begin{enumerate}
     \item The last $\lfloor\frac{n}{2}\rfloor$ entries of $S_nv$ and $v^TS_n$ are zero. Similarly,  the last $\lfloor\frac{n}{2}\rfloor$ entries of $S_n^{-1}v$ and $v^TS_n^{-1}$ are zero.
     \item The first $\lfloor\frac{n}{2}\rfloor$ entries of $S_nw$ and $w^TS_n$ are zero. Similarly,  the first $\lfloor\frac{n}{2}\rfloor$ entries of $S_n^{-1}w$ and $w^TS_n^{-1}$ are zero.
     \item Then the sum of the entries of $S_nv$ {\color{black}and} $v^TS_n$ is the sum of the entries of $v$.
     \item Then the sum of the entries of $S_n^{-1}v$ {\color{black}and} $v^TS_n^{-1}$ is the sum of the first $\lceil\frac{n}{2}\rceil$ entries of $v.$
 \end{enumerate}
 \end{proposition}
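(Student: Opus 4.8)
The plan is to reduce every assertion to a single explicit block computation, using the partition of $S_n$ and $S_n^{-1}$ recorded in Proposition~\ref{prop:partition S_n}. First I would observe that both $S_n$ and $S_n^{-1}$ are \emph{symmetric} matrices: the diagonal blocks $I$ and $-I$ are symmetric, the off-diagonal blocks are copies of $J_{\lfloor n/2\rfloor}$ which is symmetric, and the whole block arrangement is symmetric. Consequently $v^{T}S_n=(S_nv)^{T}$ and $v^{T}S_n^{-1}=(S_n^{-1}v)^{T}$, and likewise for $w$. Thus each row-vector claim has exactly the same entries (hence the same zero pattern and the same coordinate sum) as the corresponding column-vector claim, and it suffices to verify the four statements for $S_nv,\ S_n^{-1}v,\ S_nw,\ S_n^{-1}w$ only.

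Next I would write the vectors in blocks adapted to Proposition~\ref{prop:partition S_n}, setting $m:=\lfloor n/2\rfloor$. For $n$ even, put $v=(v^{(1)},v^{(2)})$ with $v^{(1)},v^{(2)}\in\mathbb{R}^{m}$; then $J_nv=v$ is equivalent to $v^{(2)}=J_{m}v^{(1)}$, and $J_nw=-w$ is equivalent to $w^{(2)}=-J_{m}w^{(1)}$. For $n$ odd, write $v=(v^{(1)},v_0,v^{(2)})$ with a single middle coordinate $v_0$; symmetry forces $v^{(2)}=J_mv^{(1)}$ while leaving $v_0$ free, whereas anti-symmetry forces $w^{(2)}=-J_mw^{(1)}$ \emph{and} $w_0=0$.

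With these substitutions the statements fall out of a one-line matrix product in which the identity $J_m^2=I_m$ does all the work. For example, for $n$ even,
\begin{equation*}
S_nv=\begin{pmatrix} I & J_m\\ J_m & -I\end{pmatrix}\begin{pmatrix} v^{(1)}\\ J_mv^{(1)}\end{pmatrix}=\begin{pmatrix} 2v^{(1)}\\ 0\end{pmatrix},
\end{equation*}
which gives the vanishing of the last $m$ entries (item (1)) and, since $J_m$ only permutes coordinates, the equality of coordinate sums in item (3); multiplying by $S_n^{-1}=\tfrac12 S_n$ yields $S_n^{-1}v=(v^{(1)},0)$, giving item (1) and item (4). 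Applying the same product to an anti-symmetric $w$ produces the vanishing top block $w^{(1)}+J_m(-J_mw^{(1)})=0$, which is item (2).

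The only genuine bookkeeping is the odd case, where the extra middle coordinate must be tracked. There the products read $S_nv=(2v^{(1)},v_0,0)$ and $S_n^{-1}v=(v^{(1)},v_0,0)$, so the last $m=\lfloor n/2\rfloor$ entries vanish, while the coordinate sum of $S_n^{-1}v$ equals the sum of the entries of $v^{(1)}$ plus $v_0$, which is precisely the sum of the first $m+1=\lceil n/2\rceil$ entries of $v$; this is exactly where the ceiling in item (4) replaces the floor of item (1). For anti-symmetric $w$ the forced relation $w_0=0$ together with the vanishing top block shows the first $\lfloor n/2\rfloor$ entries of $S_nw$ and $S_n^{-1}w$ vanish, which is item (2). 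I expect no real obstacle; the main point to get right is this asymmetry between floor and ceiling in the odd case, together with the fact that the middle coordinate is retained unchanged by both $S_n$ and $S_n^{-1}$.
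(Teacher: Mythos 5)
Your proof is correct and complete, but it takes a different route from the paper. The paper proves (only the first part of) item (1) by induction on $n$, using the recursive definition
$S_n=\left(\begin{smallmatrix}1&0&1\\ 0&S_{n-2}&0\\ 1&0&-1\end{smallmatrix}\right)$
to peel off the outermost coordinate of the symmetric vector $v=(v_1,v',v_1)^T$ and reduce to $S_{n-2}v'$; the remaining items are dismissed as ``analogous.'' You instead work with the closed-form block partition of $S_n$ and $S_n^{-1}$ from Proposition~\ref{prop:partition S_n}, characterize symmetric and anti-symmetric vectors by $v^{(2)}=\pm J_mv^{(1)}$ (with $w_0=0$ in the odd anti-symmetric case), and obtain all four items from a single block product via $J_m^2=I_m$. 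Two things your version buys: the preliminary observation that $S_n$ and $S_n^{-1}$ are symmetric matrices genuinely disposes of the row-vector claims $v^TS_n$, $v^TS_n^{-1}$, etc., which the paper never addresses explicitly; and the explicit formulas $S_n^{-1}v=(v^{(1)},v_0,0)$ in the odd case make the floor/ceiling discrepancy between items (1) and (4) transparent rather than implicit. The induction argument is slightly more economical notationally but defers most of the statement to ``analogous'' arguments; your computation is uniform and verifies every clause. Both are valid; yours is arguably the more self-contained write-up.
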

 \begin{proof}
 We will only prove the first part of item (1) in the proposition using mathematical induction on $n.$ The base case for $n=2$ can be easily obtained. Suppose {\color{black}now that} the proposition holds for all $k<n.$ Let $v=\begin{pmatrix}v_1\\v'\\v_1
 \end{pmatrix}\in \mathbb{R}^n$ be a symmetric element. Then {\color{black}$v'\in \mathbb{R}^{n-2}$} is also symmetric. By direct computation we obtain
 $$S_nv=\begin{pmatrix}
    1&0&1\\
    0&S_{n-2}&0\\
    1&0&-1\\
    \end{pmatrix}\begin{pmatrix}
    v_1\\
    v'\\
    v_1\\
    \end{pmatrix}=\begin{pmatrix}
    2v_1\\S_{n-2}v'\\0\\
    \end{pmatrix}.$$
    The last $\lfloor \frac{n-2}{2}\rfloor$ entries of $S_{n-2}v'$ are zero. Thus, the last $\lfloor \frac{n-2}{2}\rfloor+1=\lfloor \frac{n}{2}\rfloor$ entries of $S_nv$ are zero as well. The proof of the other statements can be obtained analogously using induction.  {\color{black}In particular, let us note that the proof given for item (1) directly implies item (3).} {\color{black}}
 \end{proof}
  
  For a fixed number $n$, let us define the following  map:
  \begin{eqnarray*}
      F_n:& M_n(\mathbb{R}) &\rightarrow M_n(\mathbb{R})\\
      &A &\mapsto F_n(A):=S_n^{-1}AS_n.
  \end{eqnarray*}
   For $n=4,$ we have seen that if $A$ is a CS matrix, then $F_4(A)$ is a block-diagonal matrix where each block is of size $2\times 2$ and is given by $A_1$ and $A_2$. Moreover, the upper block is a Markov matrix. The following lemma provides a generalization to these results.

 \begin{lemma}\label{lemma:block diagonalization}
 Let $n\geq 2.$ Given an $n\times n$ CS matrix $A,$ $F_n(A)$ is the following block-diagonal matrix
 $$F_n(A)=\textup{diag}(A_1,A_2),$$
 where $A_1$ is a matrix of size $\lceil\frac{n}{2}\rceil\times \lceil\frac{n}{2}\rceil$. Furthermore, if $A$ is a Markov (rate) matrix, then $A_1$ is also a Markov (rate) matrix. 
 \end{lemma}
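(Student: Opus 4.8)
The plan is to exploit the fact that centrosymmetry is equivalent to commuting with the flip $J_n$. Since $J_n^2=I_n$ and, by the recalled proposition on $J_n$, centrosymmetry of $A$ means $J_nAJ_n=A$, this is the same as the commutation $AJ_n=J_nA$. Using the block form of $S_n$ from Proposition~\ref{prop:partition S_n} one checks directly the identity $J_nS_n=S_nD$ with $D:=\mathrm{diag}(I_{\lceil n/2\rceil},-I_{\lfloor n/2\rfloor})$, that is, $S_n^{-1}J_nS_n=D$ (this is just the statement that the first $\lceil n/2\rceil$ columns of $S_n$ are symmetric vectors and the last $\lfloor n/2\rfloor$ are anti-symmetric). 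Conjugating $AJ_n=J_nA$ by $S_n$ then yields $F_n(A)D=DF_n(A)$; since $D$ has the two distinct scalar blocks $+1$ and $-1$, every matrix commuting with it is block-diagonal with blocks of sizes $\lceil n/2\rceil$ and $\lfloor n/2\rfloor$. Hence $F_n(A)=\mathrm{diag}(A_1,A_2)$ with $A_1$ of size $\lceil n/2\rceil\times\lceil n/2\rceil$, which is the first assertion.

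To transfer the Markov (rate) structure to $A_1$ I would handle the row sums and the sign conditions separately. For the row sums, note that the all-ones vector $\mathbf 1_n$ is symmetric, so by Proposition~\ref{prop:property of symmetric and anti-symmetric}(1) the vector $S_n^{-1}\mathbf 1_n$ has vanishing last $\lfloor n/2\rfloor$ entries; a one-line computation with the block form of $S_n^{-1}$ gives $S_n^{-1}\mathbf 1_n=(\mathbf 1_{\lceil n/2\rceil},0)^T$. If $A$ is Markov then $A\mathbf 1_n=\mathbf 1_n$, so $F_n(A)\,S_n^{-1}\mathbf 1_n=S_n^{-1}A\mathbf 1_n=S_n^{-1}\mathbf 1_n$; reading off the top block and using block-diagonality forces $A_1\mathbf 1_{\lceil n/2\rceil}=\mathbf 1_{\lceil n/2\rceil}$, i.e.\ the rows of $A_1$ sum to $1$. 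The identical argument with $A\mathbf 1_n=0$ shows the rows of $A_1$ sum to $0$ when $A$ is a rate matrix.

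The remaining, and most delicate, point is nonnegativity, which is genuinely entrywise and so needs the explicit form of $A_1$. Writing $A$ in the block pattern matching $S_n$ (two blocks of size $\lfloor n/2\rfloor$ and, when $n$ is odd, a central row and column) and multiplying out $S_n^{-1}AS_n$ using the centrosymmetry relations, I expect $A_1$ to have entries that are nonnegative-coefficient combinations of entries of $A$: for $n$ even, $(A_1)_{ij}=a_{ij}+a_{i,n+1-j}$, and for $n$ odd the same folding together with a middle column $a_{i,\lceil n/2\rceil}$, a middle row of the form $2a_{\lceil n/2\rceil,\,n+1-j}$, and the central entry $a_{\lceil n/2\rceil,\lceil n/2\rceil}$. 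The Markov case is then immediate, since every entry of $A_1$ is a sum of nonnegative entries of $A$. The hard part is the rate case, where I must verify that every \emph{off-diagonal} entry of $A_1$ combines only off-diagonal entries of $A$. This requires careful index bookkeeping---checking, for example, that for $i\neq j$ in the folded block one always has $i+j\neq n+1$ (which holds because the relevant indices never exceed $\lfloor n/2\rfloor$, so $i+j\leq n-1$), and that the middle-row and middle-column contributions land strictly off the diagonal. Once this is established, all off-diagonal entries of $A_1$ are nonnegative, and combined with the row-sum computation this shows $A_1$ is a rate matrix whenever $A$ is.
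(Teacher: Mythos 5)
Your proof is correct, but it reaches the block-diagonalization by a genuinely different route than the paper. The paper partitions $A$ into the Cantoni--Butler block form $\bigl(\begin{smallmatrix} B_1 & B_2 \\ J B_2 J & J B_1 J \end{smallmatrix}\bigr)$ (citing \cite{cantoni1976eigenvalues}) and multiplies out $S_n^{-1}AS_n$ explicitly, reading off $A_1=B_1+B_2J_{\lfloor n/2\rfloor}$ (plus the middle row/column when $n$ is odd); the row-sum and nonnegativity claims are then checked entry by entry. You instead derive block-diagonality structurally from $AJ_n=J_nA$ and $S_n^{-1}J_nS_n=\mathrm{diag}(I_{\lceil n/2\rceil},-I_{\lfloor n/2\rfloor})$ (i.e.\ $S_n$ diagonalizes the involution $J_n$, its columns being the symmetric and anti-symmetric eigenvectors), and you get the row sums from the eigenvector identity $S_n^{-1}\mathbf 1_n=(\mathbf 1_{\lceil n/2\rceil},0)^T$ rather than by summing entries. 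Your approach is cleaner and explains \emph{why} the transform block-diagonalizes exactly the centrosymmetric matrices, whereas the paper's computation has the advantage of producing the explicit entries of $A_1$ as a by-product --- and, as you correctly observe, those explicit entries are unavoidable for the sign conditions, so both proofs converge there. Your formula $(A_1)_{ij}=a_{ij}+a_{i,n+1-j}$ is the correct one (consistent with $\lambda=m_{11}+m_{14}$, $\mu=m_{22}+m_{23}$ in the $4\times4$ case of Section~3), and your index check $i+j\le 2\lfloor n/2\rfloor<n+1$ is exactly what is needed to see that off-diagonal entries of $A_1$ only fold together off-diagonal entries of $A$, which settles the rate case; the paper leaves that case to the reader with a ``proceed similarly''. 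The only cosmetic issue is that the nonnegativity part of your write-up is phrased as an expectation rather than a completed computation, but the formulas you state are right, so nothing is missing mathematically.
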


 \begin{proof}
 
  First suppose that $n$ is even. By \cite[Lemma 2]{cantoni1976eigenvalues}, we can partition $A$ into the following block matrices:
  $$A=\begin{pmatrix}
  B_1&B_2\\
  J_{\frac{n}{2}}B_2J_{\frac{n}{2}}&J_{\frac{n}{2}}B_1J_{\frac{n}{2}}\\
  \end{pmatrix},$$
  where $B_1$ and $B_2$ are of size $\lfloor \frac{n}{2}\rfloor \times \lfloor \frac{n}{2}\rfloor.$ By Proposition~\ref{prop:partition S_n}, we have 
  \begin{align*}
      S_n^{-1}AS_n&=\frac{1}{2}\begin{pmatrix}
     I_{\frac{n}{2}}&J_{\frac{n}{2}}\\
     J_{\frac{n}{2}}&-I_{\frac{n}{2}}\\
     \end{pmatrix}
     \begin{pmatrix}
  B_1&B_2\\
  J_{\frac{n}{2}}B_2J_{ \frac{n}{2}}&J_{\frac{n}{2}}B_1J_{ \frac{n}{2}}\\
  \end{pmatrix}
     \begin{pmatrix}
     I_{\frac{n}{2}}&J_{\frac{n}{2}}\\
     J_{\frac{n}{2}}&-I_{\frac{n}{2}}\\
     \end{pmatrix}\\
     &=\begin{pmatrix}
     B_1+B_2J_{\frac{n}{2}}&0\\
     0&J_{\frac{n}{2}}B_1J_{\frac{n}{2}}-J_{\frac{n}{2}}B_2\\
     \end{pmatrix}.\\
  \end{align*}
 Choose $A_1=B_1+B_2J_{\frac{n}{2}}$. Now suppose that $A$ is a Markov matrix. This means that each row of $A$ sums to 1 and $A$ has non-negative entries. Therefore, for $1\leq k\leq \frac{n}{2}$, we have 
  $$\sum_{j=1}^{\frac{n}{2}}(A_1)_{kj}=\sum_{j=1}^{\frac{n}{2}}(B_1+B_2J_{\frac{n}{2}})_{kj}=\sum_{j=1}^{\frac{n}{2}}(a_{kj}+a_{k,\frac{n}{2}+j})=\sum_{j=1}^{n}a_{kj}=1$$
  and for $1\leq j\leq \frac{n}{2}$,
  $(B_1+B_2J_{\frac{n}{2}})_{kj}=a_{kj}+a_{k,\frac{n}{2}+j}\geq 0.$

  Now we {\color{black}consider} the case when $n$ is odd. Again by \cite[Lemma 2]{cantoni1976eigenvalues}, we can  partition $A$ into the following block matrices:
  $$A=\begin{pmatrix}
  B_1&p&B_2\\
  q&r&q J_{\lfloor\frac{n}{2}\rfloor}\\
  J_{\lfloor\frac{n}{2}\rfloor}B_2J_{\lfloor\frac{n}{2}\rfloor}&J_{\lfloor\frac{n}{2}\rfloor}p&J_{\lfloor\frac{n}{2}\rfloor}B_1J_{\lfloor\frac{n}{2}\rfloor}\\
  \end{pmatrix},$$
  where $B_1, B_2 \in M_{\lfloor \frac{n}{2}\rfloor \times \lfloor \frac{n}{2}\rfloor}(\mathbb{R}) $, $p\mbox{ and }q\in M_{1\times\lfloor \frac{n}{2}\rfloor }(\mathbb{R})$ and $r\in  M_{1 \times 1}(\mathbb{R}).$
 By Proposition~\ref{prop:partition S_n}, we have 
 \begin{align*}
     S_n^{-1}AS_n&=\begin{pmatrix}
     \frac{1}{2}I_{\lfloor\frac{n}{2}\rfloor}&0& \frac{1}{2}J_{\lfloor\frac{n}{2}\rfloor}\\
     0&1&0\\
      \frac{1}{2}J_{\lfloor\frac{n}{2}\rfloor}&0&- \frac{1}{2}I_{\lfloor\frac{n}{2}\rfloor}\\
     \end{pmatrix}\begin{pmatrix}
  B_1&p&B_2\\
  q&r&q J_{\lfloor\frac{n}{2}\rfloor}\\
  J_{\lfloor\frac{n}{2}\rfloor}B_2J_{\lfloor\frac{n}{2}\rfloor}&J_{\lfloor\frac{n}{2}\rfloor}p&J_{\lfloor\frac{n}{2}\rfloor}B_1J_{\lfloor\frac{n}{2}\rfloor}\\
  \end{pmatrix}\begin{pmatrix}
     I_{\lfloor\frac{n}{2}\rfloor}&0&J_{\lfloor\frac{n}{2}\rfloor}\\
     0&1&0\\
     J_{\lfloor\frac{n}{2}\rfloor}&0&-I_{\lfloor\frac{n}{2}\rfloor}\\
     \end{pmatrix}\\
     &=\begin{pmatrix}
     B_1+B_2J_{\lfloor\frac{n}{2}\rfloor}&p&0\\
     2q&r&0\\
     0&0&J_{\lfloor\frac{n}{2}\rfloor}B_1J_{\lfloor\frac{n}{2}\rfloor}-J_{\lfloor\frac{n}{2}\rfloor}B_2\\
     \end{pmatrix}.\\
 \end{align*}
 In this case, choose
  $A_1=\begin{pmatrix}
  B_1+B_2J_{\lfloor\frac{n}{2}\rfloor}&p\\
  2q&r\\
  \end{pmatrix}.$ Suppose that $A$ is a Markov matrix. Since each row of $A$ sums to 1, we have
  $$\sum_{j=1}^{\lfloor\frac{n}{2}\rfloor}2q_{1j}+r=\sum_{j=1}^{n}a_{\lfloor\frac{n}{2}\rfloor+1,j}=1$$
  and for $1\leq k\leq \lfloor\frac{n}{2}\rfloor$,
  $$\sum_{j=1}^{\lfloor\frac{n}{2}\rfloor}(B_1+B_2J_{\lfloor\frac{n}{2}\rfloor})_{kj}+p_{k1}=\sum_{j=1}^{\lfloor\frac{n}{2}\rfloor}(a_{kj}+a_{k,\lfloor\frac{n}{2}\rfloor+j+1})+a_{k,\lfloor\frac{n}{2}\rfloor+1}=\sum_{j=1}^{n}a_{kj}=1.$$
 From the fact that the entries of $A$ are non-negative, for $1\leq k, j\leq \lfloor\frac{n}{2}\rfloor$, we obtain that 
 $$(B_1+B_2J_{\lfloor\frac{n}{2}\rfloor})_{kj}=a_{k,j}+a_{k, \lfloor\frac{n}{2}\rfloor+j}\geq 0.$$ 
  Therefore, all entries of $A_1$ sum to 1 and are non-negative meaning that $A_1$ is a Markov matrix as well. We can proceed similarly for the case when $A$ is a rate matrix.
  {\color{black}}
 \end{proof}

 \begin{lemma}\label{lemma:inverse for even}
For any natural number $n$, {\color{black}let} $A_1=(\alpha_{i,j}),$ $A_2=(\beta_{i,j}) \in M_{\lceil\frac{n}{2}\rceil\times \lceil\frac{n}{2}\rceil}(\mathbb{R})$. Suppose that $Q=diag(A_1,A_2)$ is a block diagonal matrix. Then
 \begin{enumerate}
     \item $F_n^{-1}(Q):=S_nQS_n^{-1}$ is a CS matrix.
     \item $F_n^{-1}(Q)$ is a Markov matrix if and only if $A_1$ is a {\color{black}Markov} matrix and for any $1\leq i,j\leq \lfloor\frac{n}{2}\rfloor,$
     $$\alpha_{ij}+\beta_{\lfloor\frac{n}{2}\rfloor+1-i,\lfloor\frac{n}{2}\rfloor+1-j}\geq 0\mbox{ and }\alpha_{i,\lfloor\frac{n}{2}\rfloor+1-j}-\beta_{\lfloor\frac{n}{2}\rfloor+1-i,j}\geq 0.$$
     
     \item $F_n^{-1}(Q)$ is a rate matrix if and only if $A_1$ is a rate matrix and  for any $1\leq i,j\leq \lfloor\frac{n}{2}\rfloor,$ such that for $i=j$,
     $\alpha_{ii}+\beta_{\lfloor\frac{n}{2}\rfloor+1-i,\lfloor\frac{n}{2}\rfloor+1-i}\leq 0$  and for $i\neq j$,
     $$\alpha_{ij}+\beta_{\lfloor\frac{n}{2}\rfloor+1-i,\lfloor\frac{n}{2}\rfloor+1-j}\geq 0\mbox{ and }\alpha_{i,\lfloor\frac{n}{2}\rfloor+1-j}-\beta_{\lfloor\frac{n}{2}\rfloor+1-i,j}\geq 0.$$
     
 \end{enumerate}
 \end{lemma}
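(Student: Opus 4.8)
The plan is to follow the same strategy as in the proof of Proposition~\ref{prop:block diagonal}, which is exactly the case $n=4$: compute $F_n^{-1}(Q)=S_nQS_n^{-1}$ explicitly in block form using the block decompositions of $S_n$ and $S_n^{-1}$ from Proposition~\ref{prop:partition S_n}, read off its entries, and then translate the Markov (resp.\ rate) conditions into sign inequalities on the entries of $A_1$ and $A_2$. I would treat $n$ even and $n$ odd separately. For part (1) with $n$ even, writing $J:=J_{\lfloor n/2\rfloor}$ and using $S_n^{-1}=\tfrac12 S_n$, a direct block multiplication gives
\[
F_n^{-1}(Q)=S_nQS_n^{-1}=\frac12\begin{pmatrix} A_1+JA_2J & A_1J-JA_2\\ JA_1-A_2J & JA_1J+A_2\end{pmatrix}.
\]
Setting $B_1:=\tfrac12(A_1+JA_2J)$ and $B_2:=\tfrac12(A_1J-JA_2)$ and using $J^2=I$, one checks that the lower-left block equals $JB_2J$ and the lower-right block equals $JB_1J$. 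By the block characterisation of centrosymmetric matrices used in Lemma~\ref{lemma:block diagonalization} (from \cite[Lemma 2]{cantoni1976eigenvalues}), any matrix of the form $\begin{pmatrix} B_1 & B_2\\ JB_2J & JB_1J\end{pmatrix}$ is centrosymmetric, so $F_n^{-1}(Q)$ is CS.

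For $n$ odd I would repeat the computation with the odd block form of $S_n$, which produces an extra middle row and column; the resulting matrix again has the same CS block shape. In either parity I would then record the entries of the two top blocks: for $1\le i,j\le \lfloor n/2\rfloor$ the $(i,j)$ entry of the top-left block is $\frac12(\alpha_{ij}+\beta_{\lfloor n/2\rfloor+1-i,\,\lfloor n/2\rfloor+1-j})$, while the corresponding entry of the top-right block is $\frac12(\alpha_{i,\,\lfloor n/2\rfloor+1-j}-\beta_{\lfloor n/2\rfloor+1-i,\,j})$. These two families are exactly the expressions appearing in the statement.

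For parts (2) and (3) the key clean step is a row-sum computation: summing the $k$-th row of $F_n^{-1}(Q)$, the $A_2$-contributions from the top-left and top-right blocks cancel after the reindexing $j\mapsto \lfloor n/2\rfloor+1-j$, leaving precisely the sum of the $k$-th row of $A_1$. Hence the rows of $F_n^{-1}(Q)$ sum to $1$ (resp.\ $0$) if and only if those of $A_1$ do. Since $F_n^{-1}(Q)$ is CS, all its diagonal entries lie in the top-left block and its symmetric image, whereas every entry of the top-right block is off-diagonal in the full matrix; so it suffices to impose sign conditions on the top blocks. For (2), requiring all entries non-negative yields exactly $\alpha_{ij}+\beta_{\lfloor n/2\rfloor+1-i,\,\lfloor n/2\rfloor+1-j}\ge 0$ and $\alpha_{i,\,\lfloor n/2\rfloor+1-j}-\beta_{\lfloor n/2\rfloor+1-i,\,j}\ge 0$; combining the two inequalities forces $\alpha_{ij}\ge |\beta_{\lfloor n/2\rfloor+1-i,\,\lfloor n/2\rfloor+1-j}|\ge 0$, so together with the row sums $A_1$ is Markov. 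For (3) the off-diagonal entries must be non-negative and the diagonal entries non-positive, which gives the same off-diagonal inequalities together with $\alpha_{ii}+\beta_{\lfloor n/2\rfloor+1-i,\,\lfloor n/2\rfloor+1-i}\le 0$ on the diagonal, and $A_1$ is then a rate matrix.

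The main obstacle is the index bookkeeping rather than any conceptual difficulty: one must track how $J$ reverses indices so that the $A_2$-entries pair up correctly, both in the cancellation producing the row sums and in matching each block entry to the stated inequality. The odd case is the more delicate one, since the central row $(2q\mid r)$ and column $p$ carry the ``leftover'' entries that the inequalities on $1\le i,j\le\lfloor n/2\rfloor$ do not reach, and whose correct signs are exactly what is encoded by the requirement that $A_1$ itself be Markov (resp.\ a rate matrix). The even case $n=4$ already appears as Proposition~\ref{prop:block diagonal}, which serves as a template for the entire argument.
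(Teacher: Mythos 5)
Your proposal follows essentially the same route as the paper's proof: block-multiplying with the decomposition of $S_n$ and $S_n^{-1}$ from Proposition~\ref{prop:partition S_n}, verifying centrosymmetry via the block characterisation from \cite[Lemma 2]{cantoni1976eigenvalues}, using the row-sum cancellation to match the rows of $F_n^{-1}(Q)$ with those of $A_1$, and reading the sign conditions off the entries of the two top blocks. If anything you supply slightly more detail than the paper, which writes out only the even case and leaves the odd case and the final translation into inequalities as ``immediate.''
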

 \begin{proof}
 We will only prove the lemma for $n$ even. Similar arguments will work for $n$ odd as well. By Proposition~\ref{prop:partition S_n},
     \begin{align*}
         F_n^{-1}(Q)=\frac{1}{2}
         \begin{pmatrix}
         I_{\frac{n}{2}}&J_{\frac{n}{2}}\\
         J_{\frac{n}{2}}&-I_{\frac{n}{2}}\\
         \end{pmatrix}
         \begin{pmatrix}
         A_1&0\\0&A_2\\
         \end{pmatrix}
         \begin{pmatrix}
         I_{\frac{n}{2}}&J_{\frac{n}{2}}\\
         J_{\frac{n}{2}}&-I_{\frac{n}{2}}\\
         \end{pmatrix}
         &=\frac{1}{2}
         \begin{pmatrix}
         A_1+J_{\frac{n}{2}}A_2J_{\frac{n}{2}}&A_1J_{\frac{n}{2}}-J_{\frac{n}{2}}A_2\\
         J_{\frac{n}{2}}A_1-A_2J_{\frac{n}{2}}&J_{\frac{n}{2}}A_1J_{\frac{n}{2}}+A_2\\
         \end{pmatrix}.
     \end{align*}
     Since $J_{\frac{n}{2}}(A_1+J_{\frac{n}{2}}A_2J_{\frac{n}{2}})J_{\frac{n}{2}}=J_{\frac{n}{2}}A_1J_{\frac{n}{2}}+A_2$ and $J_{\frac{n}{2}}(A_1J_{\frac{n}{2}}-J_{\frac{n}{2}}A_2)J_{\frac{n}{2}}= J_{\frac{n}{2}}A_1-A_2J_{\frac{n}{2}}$, then by \cite[Lemma 2]{cantoni1976eigenvalues}, $F_n^{-1}(Q)$ is centrosymmetric which proves (1). 
     For $1\leq i\leq \frac{n}{2}$, 
     \begin{align*}
        \sum_{j=1}^n(F_n^{-1}(Q))_{ij}&=\frac{1}{2} \sum_{j=1}^n(\alpha_{i,j}+\beta_{\frac{n}{2}+1-i,\frac{n}{2}+1-j}+\alpha_{i,\frac{n}{2}+1-j}-\beta_{\frac{n}{2}+1-i,j})=\sum_{j=1}^n\alpha_{ij}.
     \end{align*}
{\color{black}The above equality means that for $1\leq i\leq \frac{n}{2}$, the $i$-th row sum of $F_n^{-1}(Q)$ and $A_1$ coincide. This implies that if $F_n^{-1}(Q)$ is a Markov (rate) matrix, then $A_1$ is a Markov (rate) matrix as well. Additionally, note that $$( A_1+J_{\frac{n}{2}}A_2J_{\frac{n}{2}})_{ij}=\alpha_{i,j}+\beta_{\frac{n}{2}+1-i,\frac{n}{2}+1-j}\mbox{ and }(A_1J_{\frac{n}{2}}-J_{\frac{n}{2}}A_2)_{ij} =\alpha_{i,\frac{n}{2}+1-j}-\beta_{\frac{n}{2}+1-i,j}.$$ }  Hence, (2) and (3) will follow immediately.{\color{black}}
 \end{proof}

 \subsection{Logarithms of centrosymmetric matrices}

For the special structure encoded by the centrosymmetric matrices, one may ask whether they have logarithms which are also centrosymmetric. In this section, we provide some answers to this question. 

\begin{theorem}\label{thm:necessary and sufficient conditions for centrosymmetric embeddability}
 Let $A\in M_n(\mathbb{R})$ be a CS matrix. Then $A$ has a CS logarithm if and only if both {\color{black}the upper block matrix $A_1$ and the lower block matrix $A_2$} in Lemma~\ref{lemma:block diagonalization} admit a logarithm.
 \end{theorem}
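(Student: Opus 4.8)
The plan is to exploit the fact that the block-diagonalizing map $F_n(X)=S_n^{-1}XS_n$ is a similarity transformation, so it intertwines with the matrix exponential: for any $X\in M_n(\mathbb{R})$ the power-series definition gives $\exp(F_n(X))=S_n^{-1}\exp(X)S_n=F_n(\exp(X))$, and likewise $\exp(F_n^{-1}(X))=S_n\exp(X)S_n^{-1}=F_n^{-1}(\exp(X))$. I would record this intertwining identity at the outset, since both directions of the equivalence reduce to it together with the two structural lemmas already established.

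For the forward direction, I would suppose $B$ is a CS matrix with $\exp(B)=A$. By Lemma~\ref{lemma:block diagonalization}, $F_n(B)=\mathrm{diag}(B_1,B_2)$ is block-diagonal, and $F_n(A)=\mathrm{diag}(A_1,A_2)$. Applying the intertwining identity, $\mathrm{diag}(A_1,A_2)=F_n(A)=F_n(\exp(B))=\exp(F_n(B))=\exp(\mathrm{diag}(B_1,B_2))=\mathrm{diag}(\exp(B_1),\exp(B_2))$, where the final equality is immediate because the $k$-th power of a block-diagonal matrix is the block-diagonal matrix of the $k$-th powers. Matching the two diagonal blocks yields $A_1=\exp(B_1)$ and $A_2=\exp(B_2)$, so both $A_1$ and $A_2$ admit logarithms.

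For the converse, I would take logarithms $B_1,B_2$ of $A_1,A_2$ respectively and set $B:=F_n^{-1}(\mathrm{diag}(B_1,B_2))=S_n\,\mathrm{diag}(B_1,B_2)\,S_n^{-1}$. Part~(1) of Lemma~\ref{lemma:inverse for even} guarantees that $B$ is centrosymmetric. Then $\exp(B)=S_n\exp(\mathrm{diag}(B_1,B_2))S_n^{-1}=S_n\,\mathrm{diag}(A_1,A_2)\,S_n^{-1}=S_n F_n(A) S_n^{-1}=A$, using $F_n^{-1}\circ F_n=\mathrm{id}$. Hence $B$ is a CS logarithm of $A$, completing the equivalence.

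All the computations above are routine; the genuine content comes entirely from the two structural lemmas, namely Lemma~\ref{lemma:block diagonalization} (conjugation by $S_n$ carries centrosymmetric matrices to block-diagonal ones, used forward) and Lemma~\ref{lemma:inverse for even}(1) (the reverse passage preserves centrosymmetry, used in the converse). The only point I would flag as requiring a moment of care, rather than a real obstacle, is verifying that $F_n$ and $F_n^{-1}$ are genuinely two-sided inverse conjugations and that the exponential respects the block-diagonal decomposition blockwise; once these are in place the argument is a direct application of the intertwining identity.
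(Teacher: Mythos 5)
Your proposal is correct and takes essentially the same route as the paper's proof: block-diagonalize via $F_n$ and use Lemma~\ref{lemma:block diagonalization} for the forward direction and Lemma~\ref{lemma:inverse for even}(1) for the converse. The only difference is that you make the intertwining identity $\exp(F_n(X))=F_n(\exp(X))$ explicit, which the paper leaves implicit.
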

 \begin{proof}
 Suppose that $A$ has a centrosymmetric logarithm $Q$. By Lemma~\ref{lemma:block diagonalization}, 
    $F_n(A)=\mbox{diag}(A_1,A_2)$ and $F_n(Q)=\mbox{diag}(Q_1,Q_2).$
  Then $\exp(Q)=A$ implies that $\exp(Q_1)=A_1$ and $\exp(Q_2)=A_2.$ Hence, $A_1$ and $A_2$ admit a logarithm. Conversely, suppose that $A_1$ and $A_2$ admit a logarithm $Q_1$ and $Q_2$, respectively. Then the matrix $\mbox{diag}(Q_1,Q_2)$ is a logarithm of the matrix $\mbox{diag}(A_1,A_2)$. By Lemma~\ref{lemma:inverse for even}, the matrix $F_n^{-1}(\mbox{diag}(Q_1,Q_2))$ is a centrosymmetric logarithm of $A$. {\color{black}}
 \end{proof}

 \begin{proposition}
 Let $A\in M_n(\mathbb{R})$ be a CS matrix. If $A$ is invertible, then it has infinitely many CS logarithms.
 \end{proposition}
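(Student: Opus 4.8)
The plan is to reduce the problem to the two diagonal blocks produced by the generalized Fourier transform $F_n$ and then to exhibit infinitely many logarithms of one of them. By Lemma~\ref{lemma:block diagonalization}, $F_n(A)=S_n^{-1}AS_n=\mathrm{diag}(A_1,A_2)$, where $A_1$ has size $\lceil\frac n2\rceil$ and $A_2$ has size $\lfloor\frac n2\rfloor$. Since $F_n$ is conjugation by the invertible matrix $S_n$, it preserves determinants, so $\det A=\det A_1\cdot\det A_2$; hence $A$ invertible forces both $A_1$ and $A_2$ to be invertible.

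Next I would record exactly which matrices are CS logarithms of $A$. If $Q$ is any CS logarithm of $A$, then by Lemma~\ref{lemma:block diagonalization} its transform is block-diagonal, $F_n(Q)=\mathrm{diag}(Q_1,Q_2)$, and applying $\exp$ gives $\exp(Q_i)=A_i$; conversely, given logarithms $Q_1,Q_2$ of $A_1,A_2$, the matrix $F_n^{-1}(\mathrm{diag}(Q_1,Q_2))=S_n\,\mathrm{diag}(Q_1,Q_2)\,S_n^{-1}$ is a CS logarithm of $A$ by Lemma~\ref{lemma:inverse for even}(1) together with $\exp(S_nXS_n^{-1})=S_n\exp(X)S_n^{-1}$ and the block multiplicativity of $\exp$ (this is precisely the content of the proof of Theorem~\ref{thm:necessary and sufficient conditions for centrosymmetric embeddability}). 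Because $F_n^{-1}$ is a bijection of $M_n(\mathbb{C})$, distinct pairs $(Q_1,Q_2)$ yield distinct CS logarithms. Thus it suffices to produce infinitely many such pairs, i.e. to show that one block has infinitely many logarithms while the other has at least one.

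The key step is therefore the elementary fact that an invertible matrix $B$ of size at least $1$ has infinitely many logarithms over $\mathbb{C}$. Existence follows from the complex solution of the embedding problem \cite{Higham}, or directly from the Jordan form, since every eigenvalue is nonzero. For infinitude, fix one logarithm $L$ of $B$ and let $\Pi$ be the nonzero spectral projector onto the generalized eigenspace of a single eigenvalue; $\Pi$ is a polynomial in $B=\exp(L)$ and hence commutes with $L$. Since $\Pi^2=\Pi$, one computes $\exp(2\pi i k\,\Pi)=I+(e^{2\pi i k}-1)\Pi=I$, so $\exp(L+2\pi i k\,\Pi)=\exp(L)\exp(2\pi i k\,\Pi)=B$ for every $k\in\mathbb{Z}$, and as $\Pi\neq 0$ these matrices are pairwise distinct. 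Applying this to $B=A_1$ (which has size $\lceil\frac n2\rceil\ge 1$) and choosing any logarithm of $A_2$ (one exists since $A_2$ is invertible, the case $n=1$ being vacuous as $A_2$ is then empty), the dictionary from the previous paragraph yields infinitely many distinct CS logarithms of $A$.

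The only real obstacle, or rather subtlety, is that these logarithms must be taken over $\mathbb{C}$: the centrosymmetry relation $a_{ij}=a_{n+1-i,n+1-j}$ is preserved by $F_n^{-1}$ regardless of the field, but an invertible real CS matrix need not admit any real logarithm (already the $1\times 1$ matrix $(-1)$ has none), so the statement is genuinely one about complex centrosymmetric logarithms, consistent with the complex characterization in \cite{Higham}. Everything else is a routine verification using the block dictionary established in Lemmas~\ref{lemma:block diagonalization} and~\ref{lemma:inverse for even}.
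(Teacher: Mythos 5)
Your proof is correct and follows essentially the same route as the paper: block-diagonalize via $F_n$, observe that invertibility of $A$ forces $A_1$ and $A_2$ to be invertible, use the fact that an invertible complex matrix has infinitely many logarithms, and lift back through $F_n^{-1}$ using the block dictionary. The only differences are cosmetic — you prove the infinitude of logarithms directly with the spectral-projector shift $L+2\pi i k\,\Pi$ where the paper simply cites \cite[Theorem 1.28]{Higham}, and you make explicit the (correct) point that the logarithms in question are complex CS matrices.
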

 \begin{proof}
 The assumptions imply that the matrices $A_1$ and $A_2$ in Lemma~\ref{lemma:block diagonalization} are invertible. By \cite[Theorem 1.28]{Higham}, each $A_1$ and $A_2$ has infinitely many logarithms. Hence, Theorem~\ref{thm:necessary and sufficient conditions for centrosymmetric embeddability} {\color{black}implies}  that $A$ has infinitely many centrosymmetric logarithms. {\color{black}}
 \end{proof}

 \begin{proposition}
 Let $A\in M_n(\mathbb{R})$ be a CS matrix such that $Log(A)$ is well-defined. Then $Log(A)$ is again centrosymmetric.
 \end{proposition}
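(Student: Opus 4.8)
The plan is to exploit the block-diagonalization $F_n$ from Lemma~\ref{lemma:block diagonalization} together with the uniqueness of the principal logarithm. The crucial structural fact is that $F_n$ is simply conjugation by the invertible real matrix $S_n$, and conjugation by any invertible matrix commutes with the principal logarithm; once this is in place, the result reduces to observing that the principal logarithm of a block-diagonal matrix is block-diagonal, and then inverting $F_n$.

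First I would establish that conjugation commutes with $Log$. For any $P\in GL_n(\mathbb{C})$ and any $B$ for which $Log(B)$ is defined, the matrix $P^{-1}Log(B)P$ is itself a logarithm of $P^{-1}BP$, since $\exp(P^{-1}Log(B)P)=P^{-1}\exp(Log(B))P=P^{-1}BP$. Moreover its eigenvalues coincide with those of $Log(B)$, namely the principal logarithms of the eigenvalues of $B$, which are exactly the principal logarithms of the eigenvalues of $P^{-1}BP$. By the uniqueness of the principal logarithm \cite[Theorem 1.31]{Higham}, this forces $P^{-1}Log(B)P=Log(P^{-1}BP)$. Applying this with $P=S_n$ and $B=A$ (noting that $S_n^{-1}AS_n$ has the same spectrum as $A$, hence also no eigenvalue in $\mathbb{R}_{\leq 0}$, so $Log(S_n^{-1}AS_n)$ is well-defined) yields
$$F_n(Log(A))=S_n^{-1}Log(A)S_n=Log(S_n^{-1}AS_n)=Log(F_n(A)).$$

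Next I would observe that $F_n(A)=\mathrm{diag}(A_1,A_2)$ by Lemma~\ref{lemma:block diagonalization}, and that the principal logarithm of such a block-diagonal matrix is again block-diagonal. Since the spectrum of $A$ equals that of $F_n(A)$, which in turn is the union of the spectra of $A_1$ and $A_2$, and since $Log(A)$ is well-defined, each block $A_i$ has no eigenvalue in $\mathbb{R}_{\leq 0}$; hence $Log(A_1)$ and $Log(A_2)$ are both defined. The matrix $\mathrm{diag}(Log(A_1),Log(A_2))$ is a logarithm of $\mathrm{diag}(A_1,A_2)$ whose eigenvalues are the principal logarithms of the eigenvalues of $\mathrm{diag}(A_1,A_2)$, so by the same uniqueness argument $Log(F_n(A))=\mathrm{diag}(Log(A_1),Log(A_2))$ is block-diagonal.

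Finally, combining the two displays shows that $F_n(Log(A))$ is block-diagonal, so by Lemma~\ref{lemma:inverse for even}(1) its image under $F_n^{-1}$ is centrosymmetric. Since $Log(A)=S_n\,F_n(Log(A))\,S_n^{-1}=F_n^{-1}(F_n(Log(A)))$, we conclude that $Log(A)$ is centrosymmetric. The main point requiring care is the well-definedness of the principal logarithm on each block: one must verify that the absence of nonpositive real eigenvalues of $A$ passes to each $A_i$, which is immediate once one notes that the spectrum of $A$ is the union (as a multiset) of the spectra of $A_1$ and $A_2$.
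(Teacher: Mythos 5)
Your proof is correct, but it takes a genuinely different route from the paper's. The paper argues in two lines via the anti-diagonal matrix $J_n$: since $A$ is CS we have $J_nAJ_n=A$, so $J_n\,Log(A)\,J_n$ is a logarithm of $A$ with the same eigenvalues as $Log(A)$ (because $J_n^2=I_n$), hence all eigenvalues lie in the principal strip, and uniqueness of the principal logarithm forces $J_n\,Log(A)\,J_n=Log(A)$, i.e.\ $Log(A)$ is CS. You instead push $A$ through the block-diagonalization $F_n$ of Lemma~\ref{lemma:block diagonalization}, show that $Log$ commutes with conjugation and preserves block-diagonality (both again via uniqueness of the principal logarithm, \cite[Theorem 1.31]{Higham}), and pull back with Lemma~\ref{lemma:inverse for even}(1). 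Both arguments rest on exactly the same uniqueness theorem; the paper's is shorter and needs no structure beyond the characterization of centrosymmetry by $J_n$-conjugation, while yours requires more machinery but delivers strictly more information, namely the explicit formula $Log(A)=F_n^{-1}\bigl(\mathrm{diag}(Log(A_1),Log(A_2))\bigr)$, which is in the spirit of Theorem~\ref{thm:LogDecomp} and is what one actually uses when testing embeddability blockwise. Your attention to the well-definedness of $Log(A_i)$ (the spectrum of $A$ being the multiset union of the spectra of $A_1$ and $A_2$) and to the realness of the blocks is exactly the care the longer route demands.
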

\begin{proof}
Let us suppose that $Log(A)$ is not centrosymmetric matrix. Define the matrix $Q=J_n(Log(A))J_n$. Then $Q\neq Log(A)$ since $Log(A)$ is not centrosymmetric. It is also clear that $\exp(Q)=A$. Moreover, since $J_n^2=I_n$, the matrices $Log(A)$ and $Q$ have the same eigenvalues. Therefore, $Q$ is also a principal logarithm of $A$, a contradiction to the uniqueness of principal logarithm. Hence, $Log(A)$ must be centrosymmetric. {\color{black}}
\end{proof}

The following theorem characterizes the logarithms of any invertible CS Markov matrices.
\begin{theorem}\label{thm:LogDecomp}
 Let $A\in M_n(\mathbb{R})$ be an invertible CS Markov matrix.  {\color{black}Let $A_1=N_1D_1N_1^{-1}$ where $D_1=diag(R_1,R_2,\dots, R_l)$ is a Jordan form of $A_1$, where $A_1$ is the upper block matrix in Lemma~\ref{lemma:block diagonalization}. Similarly, let $A_2=N_2D_2N_2^{-1}$ where $D_2=diag(T_1,T_2,\dots, T_l)$ is a Jordan form of $A_2$ where $A_1$ is the lower block matrix in Lemma~\ref{lemma:block diagonalization}}. Then $A$ has a countable infinitely many logarithms given by
 $$Q:=S_n N D N^{-1}S_n^{-1},$$
 where 
 $$N:=\textup{diag}(N_1,N_2)\quad\mbox{ and }\quad D:=\textup{diag}(D_1',D_2'),$$
 {\color{black}and} $D_i'$ denotes a logarithm of $D_i$. {\color{black} In particular, these logarithms of $A$ are primary functions of $A$.}
\end{theorem}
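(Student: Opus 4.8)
The plan is to reduce everything to the block-diagonalization already established in Lemma~\ref{lemma:block diagonalization} and then invoke the standard theory of matrix logarithms as primary functions. First I would record that, since $A$ is centrosymmetric, $F_n(A)=S_n^{-1}AS_n=\mathrm{diag}(A_1,A_2)$, equivalently $A=S_n\,\mathrm{diag}(A_1,A_2)\,S_n^{-1}$. Substituting the Jordan decompositions $A_1=N_1D_1N_1^{-1}$ and $A_2=N_2D_2N_2^{-1}$ and collecting the conjugating matrices into $N=\mathrm{diag}(N_1,N_2)$ gives
$$A=S_nN\,\mathrm{diag}(D_1,D_2)\,N^{-1}S_n^{-1}=(S_nN)\,\mathrm{diag}(D_1,D_2)\,(S_nN)^{-1}.$$
Because $S_n$ is invertible (shown above) and each $N_i$ is invertible, $S_nN$ is invertible; and since $\mathrm{diag}(D_1,D_2)$ is block-diagonal with Jordan blocks on the diagonal, this is a Jordan decomposition of $A$, its Jordan form being $\mathrm{diag}(D_1,D_2)$ up to a reordering of the blocks.

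Next I would verify that each proposed $Q$ is a logarithm. Using the conjugation invariance $\exp(PXP^{-1})=P\exp(X)P^{-1}$ with $P=S_nN$, together with the block-diagonal form of $D$, one obtains
$$\exp(Q)=S_nN\,\mathrm{diag}(\exp(D_1'),\exp(D_2'))\,N^{-1}S_n^{-1}.$$
Since $D_i'$ is a logarithm of $D_i$, i.e.\ $\exp(D_i')=D_i$, the middle factor equals $\mathrm{diag}(D_1,D_2)$, and the first display then yields $\exp(Q)=A$. This shows every such $Q$ is a logarithm of $A$.

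For the counting and the primary-function assertion I would appeal to the characterization of the logarithms of a nonsingular matrix via its Jordan form \cite[Theorem 1.28]{Higham}. Invertibility of $A$ forces $A_1,A_2$, and hence every Jordan block $R_j,T_j$, to be invertible, so each block admits a logarithm obtained by selecting a branch of $\log$ on its eigenvalue; the \emph{primary} logarithms of $A$ are exactly those $D=\mathrm{diag}(D_1',D_2')$ in which equal eigenvalues of $\mathrm{diag}(D_1,D_2)$ are assigned the same branch, and these are transported back to $A$ by conjugation with $S_nN$ to produce $Q=S_nND N^{-1}S_n^{-1}$. As $A$ has finitely many distinct eigenvalues and each may be assigned any of the countably many branches indexed by $\mathbb{Z}$, there are countably infinitely many such primary logarithms, which is the asserted count.

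The step I expect to require the most care is the primary-function bookkeeping when $A_1$ and $A_2$ share an eigenvalue: then that eigenvalue occurs in Jordan blocks lying in both $D_1$ and $D_2$, and for $Q$ to be a primary function of $A$ one must choose the \emph{same} branch of $\log$ in those two blocks. Thus the phrase ``$D_i'$ is a logarithm of $D_i$'' must be read with this cross-block consistency imposed; without it one would still obtain a genuine logarithm, but possibly a non-primary one. Verifying that imposing this consistency yields precisely the primary logarithms, and that distinct admissible branch assignments yield distinct matrices $Q$, is the technical heart of the argument, and it is exactly what \cite[Theorem 1.28]{Higham} supplies once the Jordan decomposition of $A$ displayed above is in hand.
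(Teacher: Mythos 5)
Your proof is correct and follows essentially the same route as the paper, whose own proof is just a one-line appeal to \cite[Theorem 1.28]{Higham}: you reduce $A$ to the Jordan decomposition $A=(S_nN)\,\mathrm{diag}(D_1,D_2)\,(S_nN)^{-1}$ via Lemma~\ref{lemma:block diagonalization}, verify $\exp(Q)=A$ by conjugation invariance, and invoke Higham's classification for the countability and primary-function claims. Your added remark about enforcing the same branch of $\log$ on Jordan blocks of $D_1$ and $D_2$ that share an eigenvalue is a worthwhile clarification that the paper leaves implicit.
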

\begin{proof}
The theorem follows immediately from \cite[Theorem 1.28]{Higham}. {\color{black}}
\end{proof}
{\color{black}For the definition of primary function of a matrix, we refer the reader to \cite{Higham}. The above theorem says that the logarithms of a nonsingular centrosymmetric matrix contains a countable infinitely many primary logarithms and they are centrosymmetric matrices as well. }

 Finally, we will present a necessary condition for embeddability of CS Markov matrices in higher dimensions.
 \begin{lemma}\label{lemma:necessary condition}
 Let $n\geq 2$. Suppose that $A=(a_{ij})$ is {\color{black}an embeddable CS Markov matrix of size $n\times n$ with a CS logarithm}. Then for $n$ even,
 $$\sum_{j=1}^{\frac{n}{2}}(a_{jj}+a_{j,n-j+1})>1, $$
 while for $n$ odd,
 $$\sum_{j=1}^{\lfloor\frac{n}{2}\rfloor}(a_{jj}+a_{j,n-j+1})+a_{\lfloor\frac{n}{2}\rfloor+1,\lfloor\frac{n}{2}\rfloor+1}>1.$$
 \end{lemma}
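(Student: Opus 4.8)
The plan is to mirror the proof of Lemma~\ref{lemma:necessary condition 4 x 4}, pushing that $4\times 4$ argument through the generalized Fourier transform $F_n$. Writing $A=\exp(Q)$ for a CS rate matrix $Q$, I would first conjugate by $S_n$ to obtain $F_n(A)=S_n^{-1}\exp(Q)S_n=\exp(S_n^{-1}QS_n)=\exp(F_n(Q))$. By Lemma~\ref{lemma:block diagonalization}, both $F_n(A)=\mathrm{diag}(A_1,A_2)$ and $F_n(Q)=\mathrm{diag}(Q_1,Q_2)$ are block diagonal with upper blocks of size $\lceil n/2\rceil$, and $Q_1$ is again a rate matrix. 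Comparing the upper blocks gives $A_1=\exp(Q_1)$, so $A_1$ is an embeddable $\lceil n/2\rceil\times\lceil n/2\rceil$ Markov matrix. This reduces the whole statement to a claim about $A_1$ alone.

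Next I would identify the left-hand side of the claimed inequality with $tr(A_1)$. Using the explicit description of $A_1$ from Lemma~\ref{lemma:block diagonalization} (for $n$ even, $A_1=B_1+B_2J_{n/2}$, so $(A_1)_{jj}=a_{jj}+a_{j,n+1-j}$; for $n$ odd the same holds for $1\le j\le \lfloor n/2\rfloor$, while the central entry $a_{\lfloor n/2\rfloor+1,\lfloor n/2\rfloor+1}$ sits on the diagonal of $A_1$), summing the diagonal of $A_1$ produces exactly the expressions in the two cases of the statement. Hence the lemma is equivalent to the single inequality $tr(A_1)>1$ for the embeddable Markov matrix $A_1=\exp(Q_1)$, exactly as the condition $tr(M_1)>1$ in Lemma~\ref{lemma:necessary condition 4 x 4} came from Kingman's $2\times 2$ criterion.

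It then remains to bound the trace of an embeddable Markov matrix from below, which is the heart of the argument. Since $Q_1$ is a rate matrix, $Q_1\mathbf{1}=0$, so $0$ is an eigenvalue of $Q_1$ and $A_1$ has the Perron eigenvalue $e^0=1$; by Gershgorin every other eigenvalue $\mu$ of $Q_1$ satisfies $\mathrm{Re}(\mu)\le 0$, so the remaining eigenvalues of $A_1$ are the numbers $e^{\mu}$ with $|e^{\mu}|\le 1$. Writing $tr(A_1)=1+\sum_{\mu\neq 0}e^{\mu}$ (a real sum, since the non-real $\mu$ occur in conjugate pairs), when all these eigenvalues are real they lie in $(0,1]$, the sum is strictly positive, and $tr(A_1)>1$ follows, reproducing the $n=4$ situation where the single nontrivial eigenvalue of the upper block is automatically real.

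I expect the genuine obstacle to be the case where $A_1$ has non-real eigenvalues. There the contribution $\sum_{\mu\neq 0}e^{\mu}$ splits over conjugate pairs into terms of the form $2e^{\mathrm{Re}\,\mu}\cos(\mathrm{Im}\,\mu)$, which need not be positive once $|\mathrm{Im}\,\mu|$ exceeds $\pi/2$, so the crude spectral estimate no longer forces $tr(A_1)>1$. Closing this gap is the delicate point that the two-dimensional argument of Lemma~\ref{lemma:necessary condition 4 x 4} sidesteps; to handle it I would either invoke finer necessary conditions on the spectrum of an embeddable matrix (Runnenberg-type bounds constraining the arguments of the eigenvalues) or restrict to the real-spectrum regime that is the focus of the preceding sections, where the estimate above is conclusive.
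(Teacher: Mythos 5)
Your argument is, up to the point where you stop, exactly the paper's: conjugate by $S_n$, invoke Lemma~\ref{lemma:block diagonalization} to write $F_n(A)=\mathrm{diag}(A_1,A_2)$ and $F_n(Q)=\mathrm{diag}(Q_1,Q_2)$ with $A_1=\exp(Q_1)$ and $Q_1$ a rate matrix of size $\lceil n/2\rceil$, identify the left-hand side of the claimed inequality with $tr(A_1)$, and conclude via $tr(A_1)>1$. The only difference is that the paper simply writes $tr(A_1)=\sum_j e^{\lambda_j}>1$ ``since one of the $\lambda_i$'s is zero,'' with no further justification --- that is, it silently assumes $\sum_{\lambda_j\neq 0}e^{\lambda_j}>0$, which is precisely the step you flag. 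So your proposal is not missing anything relative to the paper; you have instead located a real gap in the paper's own proof.

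Your worry is in fact substantiated: the inequality $tr(A_1)>1$ can fail once the upper block has size at least $3$. Take $n=6$, let $P$ be the $3$-cycle permutation matrix, set $A_1=2(P-I_3)$ and $Q=\mathrm{diag}(A_1,\,J_3A_1J_3)$. One checks directly (or via Lemma~\ref{lemma:inverse for even} with lower block $J_3A_1J_3$) that $Q$ is a $6\times6$ centrosymmetric rate matrix whose Fourier upper block is $A_1$, so $A=\exp(Q)$ is an embeddable CS Markov matrix with a CS generator. Yet the nonzero eigenvalues of $Q_1=A_1$ are $-3\pm i\sqrt{3}$, and
$$\sum_{j=1}^{3}\bigl(a_{jj}+a_{j,7-j}\bigr)=tr\bigl(\exp(A_1)\bigr)=1+2e^{-3}\cos(\sqrt{3})\approx 0.984<1,$$
since $\sqrt{3}>\pi/2$ makes the cosine negative. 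So the complex-eigenvalue case cannot be closed by any refinement of the spectral estimate: the Runnenberg cone for blocks of size $\geq 3$ does admit eigenvalues with $|\mathrm{Im}\,\lambda|>\pi/2$, and the lemma as stated is false for $n\geq 6$. Your fallback --- restricting to the regime where $Q_1$ has real spectrum (or assuming $\mathrm{Re}(e^{\lambda_j})>0$ for every eigenvalue) --- is exactly the extra hypothesis needed to make the argument, and the paper's, go through; for $n=4$ the upper block is $2\times2$ with automatically real spectrum, which is why Lemma~\ref{lemma:necessary condition 4 x 4} is unaffected.
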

 
 \begin{proof}
 Since $A$ is  {\color{black}an embeddable matrix with CS logarithm}, we write $A=\exp(Q)$ for some CS rate matrix $Q$, and then 
 $${\color{black}F_n(A)}=F_n(\exp(Q))=\exp(F_n(Q)).$$
 By Lemma~\ref{lemma:block diagonalization}, for the centrosymmetric matrices $A, Q$, we have  $F_n(A)=\textup{diag}(A_1,A_2)$
 and 
 $F_n(Q)=\textup{diag}(Q_1,Q_2)$
 where $A_1$ is a Markov matrix and {\color{black}$Q_1$} is a rate matrix of size $\lceil\frac{n}{2}\rceil\times \lceil\frac{n}{2}\rceil.$ Therefore, $A_1=\exp(Q_1).$ If $\lambda_1,\cdots, \lambda_{\lceil\frac{n}{2}\rceil}$ are the eigenvalues, perhaps not distinct, of $Q_1,$ then the eigenvalues of $A_1$ are $e^{\lambda_1},\cdots,e^{\lambda_{\lceil\frac{n}{2}\rceil}}$. Since one of $\lambda_i$'s is zero, then the trace of $A_1$ which is the sum of its eigenvalues is equal to
 $$tr(A_1)=\sum_{j=1}^{\lceil\frac{n}{2}\rceil}e^{\lambda_j}>1.$$
 We now need to show that trace of $A_1$ has the form written in the lemma. Suppose that $n$ is even. By the proof of Lemma~\ref{lemma:block diagonalization}, then
 $$tr(A_1)=\sum_{j=1}^{\frac{n}{2}}(B_1+B_2J_{\frac{n}{2}})_{jj}=\sum_{j=1}^{\frac{n}{2}}(a_{jj}+a_{j,\frac{n}{2}+j})=\sum_{j=1}^{\frac{n}{2}}(a_{jj}+a_{j,n-j+1}).$$
The proof for odd $n$ can be obtained similarly.{\color{black}}
 \end{proof}
 
 Let $X_n\subseteq V_n^{Markov}$ be the subset containing all centrosymmetric-embeddable Markov matrices. We want to obtain an upper bound of the volume of $X_n$ using Lemma~\ref{lemma:necessary condition}. Let $Y_n\subseteq V_n^{Markov}$ be the subset containing all centrosymmetric Markov matrices such that after applying the generalized Fourier transformation, the trace of the upper block matrix is greater than 1. The previous lemma {\color{black}implies} that $X_n\subseteq Y_n$ and hence, $v(X_n)\leq v(Y_n)$. Moreover, the upper bound $v(Y_n)$ is easy to compute as $Y_n$ is a polytope and for some values of $n$, these volumes are presented in Table~\ref{table:exact volume Y_n}. 
 We see from Table~\ref{table:exact volume Y_n}, there are at most $50\%$ of matrices in $V_4^{Markov}$ that are centrosymmetically-embeddable and hence this upper bound $v(Y_4)$ is not good. For $n=5$,  approximately, there are at most $62\%$ in $V_4^{Markov}$ that are centrosymmetrically-embeddable but for $n=6$, this upper bound gives a better proportion, which is approximately $0.1\%$.
  \begin{table}[H]
      \centering
      \begin{tabular}{|c|c|c|c|}
      \hline
         & dimension of $V_n^{Markov}$  &$v(Y_n)$& $v(V_n^{Markov})$\\
         \hline
           $n=4$&6&$\frac{1}{72}\approx 1.39\times 10^{-2}$&$\frac{1}{36}\approx 2.78\times 10^{-2}$\\
           \hline
           $n=5$&10&$\frac{653}{4838400}\approx 1.35\times 10^{-4}$&$\frac{1}{4608}\approx 2.17\times 10^{-4}$\\
           \hline
           $n=6$&15&$\frac{433}{653837184000}\approx 6.22 \times 10^{-10}$&$\frac{1}{1728000}\approx 5.79\times 10^{-7}$\\
           \hline
      \end{tabular}
      \caption{The exact volume $v(Y_n), n\in\{4,5,6\}$ computed using \texttt{Polymake}.}
      \label{table:exact volume Y_n}
  \end{table}

\section{Embeddability of $6\times 6$ centrosymmetric matrices.}
\label{sect:6x6}


Throughout this section we shall consider $A$ to be a $6 \times 6$ centrosymmetric Markov matrix with {\color{black} distinct} eigenvalues. 
In particular, the matrices considered in this section are diagonalizable and are a dense subset of all $6 \times 6$ centrosymetric Markov matrices. Note that this notation differs from the notation for Markov matrices used in previous sections in order to make it consistent with the notation used in the results presented for generic centrosymmetric matrices.\\

In the previous section, we {\color{black}showed}  that $F(A)$ is a block-diagonal real matrix composed of two $3\times 3$ blocks denoted by $A_1$ and $A_2$. Since both $A_1$ and $A_2$ have real entries, each of these matrices has at most one conjugate pair of eigenvalues. Adapting the notation introduced in Theorem~\ref{thm:LogDecomp} to diagonalizable matrices we have  $N_1,N_2 \in GL_3(\mathbb{C})$ such that $A_1=N_1 diag(1,\lambda_1, \lambda_2) N_1^{-1}$ and $A_2=N_2 diag(\mu,\gamma_1, \gamma_2) N_2^{-1}$ with $\mu \in \mathbb{R}_{>0}$ and $\lambda_i,\gamma_i \in \mathbb{C}\setminus \mathbb{R}_{\geq0}$. Moreover, we can assume that $Im(\lambda_1)>0$ without loss of generality (this can be achieved by permuting the second and third columns of $N_1$ if necessary). For ease of reading, we will define as $P:= S_6 diag(N_1,N_2)$, where $S_6$ is the matrix used to obtain the Fourier transform $F(A)$ and was introduced in Section \eqref{Sn}.\\

Next we give a criterion for the embeddability of $A$ for each of the following cases:
\begin{equation}
    \begin{tabular}{c|cc} \label{tab:cases}
      & $\gamma_i \in \mathbb{R}_{>0}$ & $\gamma_i \in \mathbb{C}\setminus \mathbb{R}$\\
      \hline
$\lambda_i \in \mathbb{R}_{>0}$ & case 1 & case 2\\
$\lambda_i \in \mathbb{C}\setminus \mathbb{R}$ & case 3 & case 4\\
    \end{tabular}
\end{equation}

{\color{black}
\begin{proposition}
If a $6\times6$ cetrosymmetric Markov matrix $A$ does not belong to any of the cases in Table \ref{tab:cases}, then it is not embeddable.\end{proposition}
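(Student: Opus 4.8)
The plan is to prove the contrapositive: I will show that if $A$ is embeddable then it must fall into one of the four cases of Table~\ref{tab:cases}. So I assume $A=\exp(Q)$ for some rate matrix $Q$; in fact I will only use that $Q$ is a real matrix, which means the argument actually establishes the stronger fact that the four cases are already forced by the existence of \emph{any} real logarithm of $A$, not merely a rate-matrix logarithm.

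First I would record the elementary spectral constraint imposed by a real logarithm. Since $Q$ is real, its non-real eigenvalues occur in complex-conjugate pairs, and $\operatorname{spec}(A)=\exp(\operatorname{spec}(Q))$. A real eigenvalue of $Q$ produces a positive real eigenvalue of $A$, and $e^{z}$ never vanishes, so $A$ is invertible (consistent with Kingman's necessary condition $\det A>0$). The key step is to rule out negative real eigenvalues: if $\rho<0$ were an eigenvalue of $A$, then $\rho=e^{\nu}$ forces $\nu=a+bi$ with $b\neq 0$; but then $\bar\nu$ is also an eigenvalue of $Q$, and $e^{\bar\nu}=\overline{e^{\nu}}=\overline{\rho}=\rho$, so $\rho$ would be an eigenvalue of $A$ of multiplicity at least two, contradicting the standing assumption that $A$ has distinct eigenvalues. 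Hence every real eigenvalue of $A$ is strictly positive and the non-real ones come in conjugate pairs. This multiplicity argument is the crux of the proof; everything else is bookkeeping.

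Next I would transfer this to the block structure. By Lemma~\ref{lemma:block diagonalization}, $F(A)=\operatorname{diag}(A_1,A_2)$ with $A_1,A_2$ real $3\times 3$ matrices, and $A$ is similar to $F(A)$, so $\operatorname{spec}(A)=\operatorname{spec}(A_1)\cup\operatorname{spec}(A_2)$. The upper block $A_1$ is Markov and thus carries the eigenvalue $1$; its remaining eigenvalues $\lambda_1,\lambda_2$ are either both real or a complex-conjugate pair. By the previous paragraph, in the real case they are both positive ($\lambda_i\in\mathbb{R}_{>0}$) and otherwise non-real ($\lambda_i\in\mathbb{C}\setminus\mathbb{R}$). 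The lower block $A_2$, being real and odd-dimensional, has at least one real eigenvalue $\mu$, which must be positive, and its other two eigenvalues $\gamma_1,\gamma_2$ are again either both positive real or a non-real conjugate pair. Combining the two dichotomies, the pair $(\lambda_i,\gamma_i)$ realizes exactly one of the four entries of Table~\ref{tab:cases}.

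Finally I would assemble the contrapositive: an embeddable $A$ lies in one of cases~1--4, so a matrix lying in none of them cannot be embeddable, which is the claim. The only point requiring mild care is that, when a block happens to have three real eigenvalues, one must check that the labelling can indeed be chosen with $\mu\in\mathbb{R}_{>0}$; this is automatic, since embeddability already forces all three to be positive by the spectral constraint above. I do not expect any genuine obstacle beyond the distinct-eigenvalue multiplicity argument highlighted in the second paragraph.
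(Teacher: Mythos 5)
Your proof is correct and rests on exactly the same two observations as the paper's: a zero eigenvalue makes $A$ singular and hence without any logarithm, and a simple negative eigenvalue of $A$ would force any real logarithm to contribute a conjugate pair of non-real eigenvalues mapping to the same value, contradicting the standing assumption that $A$ has distinct eigenvalues. Phrasing it as a contrapositive and spelling out the block-by-block bookkeeping is just a presentational difference; the mathematical content matches the paper's argument.
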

\begin{proof}
   If $A$ satifies the hypothesis of the proposition then either it has a null eigenvalue or it has a simple negative eigenvalue. In the former case $A$ is a singular matrix and hence it has no logarithm.  If $A$ had a simple negative eigenvalue, then all its logarithms would have a non-real eigenvalue whose complementary pair is not an eigenvalue of $A$ (otherwise $M$ would have a repeated eigenvalues). Therefore, $A$ has no real logarithm.
\end{proof}
}

\begin{remark} All the results in this section can be adapted to $5\times 5$ centrosymmetric Markov matrices by not considering the eigenvalue $\mu$ and modifying the forthcoming definitions of the matrices $Log_{-1}(A)$ and $V$ accordingly (i.e. removing the fourth row and column in the corresponding diagonal matrix). In addition, these results still hold if the eigenvalue $1$ of the Markov matrix has multiplicity $2$.
\end{remark}

\subsection*{Case 1}

{\color{black}
The results for this case are not restricted to centrosymmetric matrices but can be applied to decide the embeddability of any suitable Markov.
}

\begin{proposition}
{\color{black}
If all the eigenvalues of a Markov matrix $A$ are distinct and positive, then} 
$A$ is embeddable if and only if $Log(A)$ is a rate matrix.
\end{proposition}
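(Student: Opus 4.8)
The plan is to prove the two implications separately, with the reverse direction resting entirely on the logarithm enumeration of Proposition~\ref{Prop:Logenum}.

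First I would dispatch the easy direction. If $Log(A)$ happens to be a rate matrix, then since $\exp(Log(A)) = A$ by the defining property of the principal logarithm, the matrix $Log(A)$ is by definition a Markov generator of $A$, so $A$ is embeddable.

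For the converse, suppose $A$ is embeddable, i.e.\ $A = \exp(Q)$ for some rate matrix $Q$. The key observation is that any rate matrix is a real matrix whose rows sum to zero, so $Q$ is in particular a real logarithm of $A$ with rows summing to zero. I would then invoke Proposition~\ref{Prop:Logenum}, whose hypotheses are all met: $A$ is a Markov matrix, so $1$ is among its eigenvalues; the eigenvalues are distinct by assumption; and since they are all positive, $\det A > 0$ as it is their product. Crucially, because every eigenvalue is a positive real number, in the notation of Proposition~\ref{Prop:Logenum} we are in the case $s = 0$ (there are no complex-conjugate pairs $\mu_j, \overline{\mu_j}$), so there remain no branch parameters $k_1,\dots,k_s$ to choose. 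The enumeration therefore collapses to the single matrix $P\,\mathrm{diag}(0,\log(\lambda_1),\dots,\log(\lambda_t))\,P^{-1} = Log(A)$, exactly as recorded in the remark following that proposition. Hence the real logarithm of $A$ with zero row sums is unique and equal to $Log(A)$, which forces $Q = Log(A)$. Since $Q$ was assumed to be a rate matrix, $Log(A)$ is a rate matrix, and the converse is complete.

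The argument is short because all the real work is packaged inside Proposition~\ref{Prop:Logenum}; the only point requiring care is verifying that the distinctness-and-positivity hypothesis places us precisely in the branch-free $s=0$ regime of that proposition, which is what rules out any competing real logarithm. I do not anticipate a genuine obstacle, but the subtle step to check is that distinct positive eigenvalues simultaneously guarantee $\det A > 0$ and the total absence of non-real eigenvalues, so that uniqueness of the real zero-row-sum logarithm is automatic. The positivity assumption (as opposed to mere non-negativity, i.e.\ the exclusion of a zero eigenvalue) is also what ensures $Log(A)$ is well defined through the principal branch in the first place.
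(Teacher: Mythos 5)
Your proof is correct and follows essentially the same route as the paper: both arguments reduce to the fact that a Markov matrix with distinct positive eigenvalues has a unique real (zero-row-sum) logarithm, namely $Log(A)$, so any Markov generator must coincide with it. The only cosmetic difference is that the paper obtains this uniqueness by citing Culver's theorem on real logarithms, whereas you derive it from the $s=0$ case of Proposition~\ref{Prop:Logenum}; both are valid.
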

{\color{black}
\begin{proof}
    If $A$ has distinct real eigenvalues then it has only one real logarithm, namely $Log(A)$ (see \cite{Culver}). 
\end{proof}
}
\subsection*{Case 2}

In this case $A$ has exactly one conjugate pair of complex eigenvalues and we obtain the following criterion by adapting Corollary 5.6 in \cite{casanellas2020embedding} to our framework:

\begin{proposition} \label{Prop:Case2}
Given the matrix $V:=  P \;
 diag(0,0,0,0,2\pi i , -2\pi i) \;P^{-1}$ define:
 $$\mathcal{L}:= \displaystyle \max_{(i,j):\ i\neq j,\  V_{i,j} > 0} \left\lceil -\frac{Log(A)_{i,j}}{V_{i,j}}  \right\rceil,  \qquad \mathcal{U}:= \displaystyle \min_{(i,j):\ i\neq j, \  V_{i,j} < 0} \left\lfloor -\frac{Log(A)_{i,j}}{V_{i,j}}
  \right\rfloor
$$

and set  $ \ \mathcal{N}:= \{(i,j): i\neq j, \  V_{i,j}=0 \text{ and  } Log({\color{black}A})_{i,j}<0\}.$ Then, 
\begin{enumerate}
    \item $A$ is embeddable if and only if  $\mathcal{N} = \emptyset$ and $  \mathcal{L}  \leq  \mathcal{U}$.
    \item the set of Markov generators for $A$ is $\left\lbrace 
Q=Log(A) + k V: k\in \mathbb{Z} \text{ such that } \mathcal{L}  \leq k \leq  \mathcal{U} \right\rbrace$.
\end{enumerate}
\end{proposition}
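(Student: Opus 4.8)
The plan is to reduce the statement to a direct application of Proposition~\ref{Prop:Logenum}, which enumerates all real logarithms with zero row sums of a Markov matrix with distinct eigenvalues. In Case~2 the eigenvalues of $A$ are $1,\lambda_1,\lambda_2,\mu\in\mathbb{R}_{>0}$ together with a single conjugate pair $\gamma_1,\gamma_2=\overline{\gamma_1}\in\mathbb{C}\setminus\mathbb{R}$, all distinct, and $A=P\,diag(1,\lambda_1,\lambda_2,\mu,\gamma_1,\gamma_2)\,P^{-1}$. First I would apply Proposition~\ref{Prop:Logenum} with $t=3$ and $s=1$: every real logarithm of $A$ with rows summing to zero is obtained by assigning the principal logarithm to each positive real eigenvalue and the $k$-th branch $\log_k(\gamma_1)=\log(\gamma_1)+2\pi k i$ (respectively its conjugate) to $\gamma_1$ (respectively $\gamma_2$), for some $k\in\mathbb{Z}$. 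Since $\log_k(\gamma_1)=\log(\gamma_1)+2\pi k i$, such a logarithm is precisely $Q=Log(A)+kV$ with $V=P\,diag(0,0,0,0,2\pi i,-2\pi i)\,P^{-1}$, so the family $\{Log(A)+kV\}_{k\in\mathbb{Z}}$ is exactly the set of real logarithms of $A$ with zero row sums. As $Log(A)$ and $Log(A)+V$ are both real, $V$ itself is real.

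Next I would characterise which of these logarithms are rate matrices. Because every $Log(A)+kV$ already has rows summing to zero, it is a rate matrix if and only if all its off-diagonal entries are non-negative, that is $Log(A)_{i,j}+kV_{i,j}\ge 0$ for all $i\ne j$. I would then split this system of linear inequalities in the single integer unknown $k$ according to the sign of $V_{i,j}$. Entries with $V_{i,j}>0$ impose the lower bounds $k\ge -Log(A)_{i,j}/V_{i,j}$, whose tightest integer consequence is $k\ge\mathcal{L}$; entries with $V_{i,j}<0$ impose $k\le -Log(A)_{i,j}/V_{i,j}$, tightened to $k\le\mathcal{U}$; and entries with $V_{i,j}=0$ impose the $k$-independent condition $Log(A)_{i,j}\ge 0$, which fails exactly on the index set $\mathcal{N}$.

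Finally I would assemble the conclusion: a Markov generator exists if and only if the constraint coming from $\mathcal{N}$ is vacuous, i.e. $\mathcal{N}=\emptyset$, and there is an integer in $[\mathcal{L},\mathcal{U}]$; since $\mathcal{L}$ and $\mathcal{U}$ are themselves integers (being a ceiling and a floor), the latter is equivalent to $\mathcal{L}\le\mathcal{U}$. This yields part~(1), and reading off the admissible values of $k$ gives part~(2), namely that the Markov generators are exactly $Log(A)+kV$ for $\mathcal{L}\le k\le\mathcal{U}$. I do not expect a serious obstacle, as the argument is essentially bookkeeping once Proposition~\ref{Prop:Logenum} is available; the one point requiring care is the eigenvalue-ordering bookkeeping needed to identify the branch-shifted logarithm with $Log(A)+kV$ and to confirm that $V$ is real, together with the observation that the diagonal entries never enter the rate-matrix condition because the zero-row-sum property is automatic.
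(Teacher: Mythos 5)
Your proposal is correct and follows essentially the same route as the paper: both invoke Proposition~\ref{Prop:Logenum} to identify the candidate real logarithms with the family $Log(A)+kV$, $k\in\mathbb{Z}$, and then reduce the rate-matrix condition to the sign-split system of inequalities on $k$ that yields $\mathcal{N}=\emptyset$ and $\mathcal{L}\le k\le\mathcal{U}$. Your added remarks — that $V$ is real and that $\mathcal{L},\mathcal{U}$ being integers makes nonemptiness of $[\mathcal{L},\mathcal{U}]\cap\mathbb{Z}$ equivalent to $\mathcal{L}\le\mathcal{U}$ — are correct details the paper leaves implicit.
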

\begin{proof}
 The proof of this theorem is analogous to the proof of Theorem 5.5 in \cite{casanellas2020embeddability} but considering the matrix $V$ as defined here. {\color{black}According to Proposition~\ref{Prop:Logenum}, any Markov generator of $A$ is of the form 
 \begin{align*}
     Log_k(A)&= P diag(0, \log(\lambda_1),\log(\lambda_2), \log(\mu), \log_k(\gamma_1), \overline{\log_k(\gamma_1)} P^{-1}\\
     &= P diag(0, \log(\lambda_1),\log(\lambda_2), \log(\mu), \log_k(\gamma_1)+2\pi ki, \overline{\log_k(\gamma_1)}-2\pi k i) P^{-1}.\\
 \end{align*}
  Such a logarithm can be rewritten as $Log(A)+kV$. Using this, we will prove that $Log_k(A)=Log(A)+kV$ is a rate matrix if and only if $\mathcal{N} = \emptyset$ and $\mathcal{L}  \leq k \leq  \mathcal{U}$.
  
  Suppose that there exists $k\in \mathbb{Z}$ such that $Log_k(A)$ is a rate matrix. Hence, $Log(A)_{i,j}+kV_{i,j}\geq 0$ for all $i\neq j$. For $i\neq j$, we have:
  \begin{itemize}
      \item[(a)] $Log(A)_{i,j}\geq 0$ for all $i\neq j$ such that $V_{i,j}=0$. This means that $\mathcal{N}=\emptyset.$
      \item[(b)]$-\frac{Log(A)_{i,j}}{V_{i,j}}\leq k$ for all $i\neq j$ such that $V_{i,j}>0$. This means that $\mathcal{L}\leq k$.
      \item[(c)]$-\frac{Log(A)_{i,j}}{V_{i,j}}\geq k$ for all $i\neq j$ such that $V_{i,j}<0$. This means that $k\leq \mathcal{U}$.
  \end{itemize} 
  Conversely, suppose that $\mathcal{N} = \emptyset$ and  and that there is $k\in \mathbb{Z}$ such that $\mathcal{L}  \leq k \leq  \mathcal{U}$. We want to check that $Log_k(A)$ is a rate matrix. According to Proposition~\ref{Prop:Logenum}, each row of $Log_k(A)$ sums to $0$. Moreover, for $i\neq j$, we have:
  \begin{itemize}
      \item[(a)] if $V_{i,j}=0$, then $Log_k(A)_{i,j}=Log(A)_{i,j}$. Since $\mathcal{N}=\emptyset$, $Log_k(A)_{i,j}=Log(A)_{i,j}\geq 0$.
      \item[(b)]if $V_{i,j}>0$, then 
      $Log_k(A)_{i,j}=Log(A)_{i,j}+k V_{i,j}\geq Log(A)_{i,j}+\mathcal{L} V_{i,j}\geq Log(A)_{i,j}+(-\frac{Log(A)_{i,j}}{V_{i,j}}) V_{i,j}=0.$
      \item[(c)]if $V_{i,j}<0$, then 
      $-Log_k(A)_{i,j}=-Log(A)_{i,j}-k V_{i,j}\leq -Log(A)_{i,j}-\mathcal{U} V_{i,j}\leq -Log(A)_{i,j}-(-\frac{Log(A)_{i,j}}{V_{i,j}}) V_{i,j}=0.$
  \end{itemize}
  The proof is now complete.
  }
 \end{proof}
\subsection*{Case 3}
As in Case 2, $A$ has exactly one conjugate pair of eigenvalues and hence its embeddability (and all its generators) can be determined by using Proposition~\ref{Prop:Case2} but defining the matrix $V$ as $V=P \;
 diag(0,0,0,0,2\pi i , -2\pi i) \;P^{-1}$. However in Case $3$ the conjugate pair of eigenvalues lie in $A_1$ which is a Markov matrix. This allows us to use the results regarding the embeddability of $3\times 3$ Markov matrices to obtain an alternative criterion  to test the embeddability of $A$. To this end we define \begin{equation}\label{eq:Log-1}
     Log_{-1}(A):= P \; diag(0, z, \overline{z},\log(\mu), \log(\gamma_1)\log(\gamma_2)) \;P^{-1}
 \end{equation} where $z:= \log_{-1}(\lambda_1)$.

 \begin{proposition} \label{prop:Case3}
 {\color{black} The matrix } $A$ is embeddable if and only if $Log(A)$ or $Log_{-1}(A)$ are rate matrices.
  \end{proposition}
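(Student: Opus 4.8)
The plan is to list all real logarithms of $A$, use the block-diagonalisation of the Fourier transform to reduce the rate-matrix condition to its $3\times 3$ blocks, and then invoke the solution of the embedding problem for $3\times 3$ Markov matrices to cut the infinitely many candidate branches down to the two logarithms $Log(A)$ and $Log_{-1}(A)$.

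First I would enumerate the candidates. Since $A$ is diagonalisable with distinct eigenvalues and a single conjugate pair $\lambda_1,\overline{\lambda_1}$ (which in this case lies in the Markov block $A_1$), Proposition~\ref{Prop:Logenum} shows that every real logarithm of $A$ with zero row sums equals $Q_k:=Log(A)+kV$ for some $k\in\mathbb{Z}$, where $V$ is the matrix attached to the pair $\lambda_1,\overline{\lambda_1}$ exactly as in Proposition~\ref{Prop:Case2}. In particular $Q_0=Log(A)$ and $Q_{-1}=Log_{-1}(A)$, the latter being the matrix defined in \eqref{eq:Log-1}. The backward implication is then immediate, since $\exp(Log(A))=\exp(Log_{-1}(A))=A$, so either matrix, if it is a rate matrix, is a Markov generator of $A$.

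For the forward implication, assume $A=\exp(Q_{k_0})$ with $Q_{k_0}$ a rate matrix. Applying $F_6$ and Lemma~\ref{lemma:block diagonalization} gives $F_6(Q_{k_0})=\mathrm{diag}\big(Q_1^{(k_0)},Q_2\big)$, where $Q_2=Log(A_2)$ is the unique real logarithm of the lower block (all of whose eigenvalues $\mu,\gamma_1,\gamma_2$ are positive), and $Q_1^{(k_0)}$ is the logarithm of the $3\times 3$ Markov block $A_1$ using branch $k_0$ for the pair $\lambda_1,\overline{\lambda_1}$. Since $Q_{k_0}$ is a rate matrix, Lemma~\ref{lemma:block diagonalization} forces its upper block $Q_1^{(k_0)}$ to be a $3\times 3$ rate matrix, so $A_1$ is an embeddable $3\times 3$ Markov matrix whose generator lies on branch $k_0$. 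At this point I would invoke the complete solution of the embedding problem for $3\times 3$ Markov matrices \cite{Cuthbert,Johansen,Carette1995CharacterizationsOE,ChenChen}: for a $3\times 3$ Markov matrix with a non-real conjugate pair of eigenvalues, the admissible branches never avoid both $0$ and $-1$, so that $A_1$ is embeddable if and only if $Q_1^{(0)}=Log(A_1)$ or $Q_1^{(-1)}=Log_{-1}(A_1)$ is a rate matrix. Heuristically this reflects that the non-zero eigenvalue of a $3\times 3$ rate matrix lies in the cone $|\mathrm{Im}|\le|\mathrm{Re}|/\sqrt{3}$ about the negative real axis, whose pull-back under the branch shift is an interval of admissible $k$ centred at $-\mathrm{Arg}(\lambda_1)/2\pi\in(-\tfrac12,0)$, the two integers nearest the centre being $0$ and $-1$.

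It then remains to transfer this branch restriction from $A_1$ to $A$ by means of Lemma~\ref{lemma:inverse for even}: for $k\in\{0,-1\}$ the matrix $Q_k=F_6^{-1}(\mathrm{diag}(Q_1^{(k)},Q_2))$ is a rate matrix precisely when $Q_1^{(k)}$ is a rate matrix and the compatibility inequalities relating the entries of $Q_1^{(k)}$ to the fixed block $Q_2$ hold, and these are exactly the conditions for $Log(A)$ (resp.\ $Log_{-1}(A)$) to be a rate matrix. I expect the main obstacle to lie precisely in this last transfer: the $3\times 3$ theory controls only the upper block $Q_1^{(k)}$, whereas a rate matrix for the full $A$ must also satisfy the cross-block inequalities of Lemma~\ref{lemma:inverse for even}, which themselves depend on $k$ through $Q_1^{(k)}$. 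The delicate point is therefore to show that the branches admissible for the whole matrix $A$ — cut out jointly by the rate condition on $Q_1^{(k)}$ and by these cross-block inequalities with the lower block $Q_2$ — still form an interval meeting $\{0,-1\}$ whenever it is non-empty; this is where the $3\times 3$ input, applied to the genuinely stochastic block $A_1$, is indispensable.
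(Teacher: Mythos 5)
Your overall route coincides with the paper's: block-diagonalise via the Fourier transform, observe that the lower block $A_2$ has distinct positive eigenvalues and hence a unique real logarithm $Log(A_2)$, deduce that the upper block of any Markov generator of $A$ is a Markov generator of the stochastic block $A_1$, and invoke Theorem 4 of \cite{Cuthbert} for $3\times 3$ matrices. The backward implication and all of these reduction steps are fine and match the paper.

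The issue is that you do not actually close the forward implication: your final paragraph leaves the ``delicate point'' --- transferring the branch restriction from $A_1$ back to $A$ through the cross-block inequalities of Lemma~\ref{lemma:inverse for even} --- as an open problem. That difficulty is self-inflicted and disappears once the $3\times 3$ input is used in its correct (stronger) form. The content of Cuthbert's theorem that the paper relies on is that for a $3\times 3$ Markov matrix with a non-real conjugate pair, the \emph{only} branches of the logarithm that can be rate matrices are $k=0$ and $k=-1$; it is not merely the statement that the set of admissible branches, when non-empty, meets $\{0,-1\}$. Applied to the upper block $Q_1^{(k_0)}=Log_{k_0}(A_1)$ of the given generator $Q_{k_0}$ --- which is a rate matrix by Lemma~\ref{lemma:block diagonalization} --- this forces $k_0\in\{0,-1\}$, and hence the generator you started with is \emph{literally} $Log(A)$ or $Log_{-1}(A)$ as defined in \eqref{eq:Log-1}. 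There is nothing to transfer: you never need to check that some \emph{other} logarithm of $A$, assembled from a good branch of $A_1$ together with $Log(A_2)$, satisfies the compatibility inequalities, because the rate matrix already in hand is one of the two named matrices. If one only knew the weak form of the $3\times 3$ result, the gap you point to would be genuine and the argument would not close; so you have correctly located where the pressure lies, but resolved it in the wrong direction instead of strengthening the cited input.
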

 \begin{proof}
 Note that $\exp(Log(A))=\exp(Log_{-1}(A))=A$ so one of the implications is immediate to prove. To prove the other implication, we assume that $A$ is embeddable and let $Q$ be a Markov generator for it. Proposition~\ref{Prop:Logenum} yields that 
 $$Q=P diag(0,\log_{k_1}(\lambda_1),\log_{k_2}(\lambda_2),\log_{k_3}(\mu),\log_{k_4}(\gamma_1),\log_{k_5}(\gamma_2)) \; P^{-1},$$ 
 for some integers $k_1,\dots,k_5 \in \mathbb{Z}$. Therefore, $F(Q)=\begin{pmatrix}
 Q_1 & 0\\
 0 & Q_2\\
 \end{pmatrix}$ where $Q_1$ and $Q_2$ are real {\color{black}logarithms} of $A_1$ and $A_2$ respectively. 

Since $A_2$ is a real matrix with {\color{black} distinct} positive eigenvalues, its only real logarithm is its principal logarithm. This implies that $k_3=k_4=k_5=0$ (so that  $Q_2=Log(A_2)$ ).
 
Now, recall that $A_1$ is a Markov matrix (see Lemma~\ref{lemma:block diagonalization}). Using Proposition~\ref{Prop:Logenum} again, we obtain that $Q_1$ is a rate matrix, thus $A_1$ is embeddable. To conclude the proof it is enough to recall Theorem 4 in \cite{Cuthbert}, which yields that $A_1$ is embeddable if and only if $Log(A_1)$ or $P_1\;diag(0,z,\overline{z}) \;P_1^{-1}$ is a rate matrix. {\color{black}}
 \end{proof}

\subsection*{Case 4}
 
 In this case, the solution to the embedding problem can be obtained as a byproduct of the results for the previous cases:
 
 \begin{proposition} 
 Let $Log_{0,0}(A)$ denote the principal logarithm of $A$ and $Log_{-1,0}(A)$ denote the matrix in \eqref{eq:Log-1}.
Given the matrix $V:=  P \;
 diag(0,0,0,0,2\pi i , -2\pi i) \;P^{-1}$ and $k\in \{0,-1\}$ define:
 $$\mathcal{L}_k:= \displaystyle \max_{(i,j):\ i\neq j,\  V_{i,j} > 0} \left\lceil -\frac{Log_{k,0}(A)_{i,j}}{V_{i,j}}  \right\rceil,  \qquad 
 \mathcal{U}_k:= \displaystyle \min_{(i,j):\ i\neq j, \  V_{i,j} < 0} \left\lfloor -\frac{Log_{k,0}(A)_{i,j}}{V_{i,j}}
  \right\rfloor
$$
and set  $ \ \mathcal{N}_k:= \{(i,j): i\neq j, \  V_{i,j}=0 \text{ and  } Log_{k,0}({\color{black}A})_{i,j}<0\}.$ Then, \begin{enumerate}
    \item 
$A$ is embeddable if and only if  $\mathcal{N}_k = \emptyset$ and $  \mathcal{L}_k  \leq  \mathcal{U}_k$ for $k=0$ or $k=-1$.\\
    
\item If $A$ is embeddable, then at least one of its Markov generator  can be written as {\color{black}$$Log_{k,k_2}(A):=P\;  diag(0, \log_{k}(\lambda_1),\overline{\log_{k}(\lambda_1)}, \log(\mu), \log_{k_2}(\gamma_1), \overline{\log_{k_2}(\gamma_1)} \;P^{-1}$$ with $k\in \{0,-1\}$ and $k_2\in \mathbb{Z}$ such that $\mathcal{L}_{k} \leq k_2 \leq \mathcal{U}_{k}$}.

\end{enumerate}
\end{proposition}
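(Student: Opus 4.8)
The plan is to reduce this case to a combination of the analyses already carried out in Case~2 and Case~3. First I would invoke Proposition~\ref{Prop:Logenum}, applied to the two conjugate pairs $\lambda_1,\overline{\lambda_1}$ (living in $A_1$) and $\gamma_1,\overline{\gamma_1}$ (living in $A_2$): every real logarithm of $A$ with zero row sums is exactly one of the matrices
$$Log_{k,k_2}(A)=P\,diag\big(0,\log_k(\lambda_1),\overline{\log_k(\lambda_1)},\log(\mu),\log_{k_2}(\gamma_1),\overline{\log_{k_2}(\gamma_1)}\big)\,P^{-1},\qquad k,k_2\in\mathbb{Z}.$$
Since $V$ only shifts the two diagonal slots carrying $\gamma_1,\overline{\gamma_1}$ by $\pm 2\pi i$, the key bookkeeping identity $Log_{k,k_2}(A)=Log_{k,0}(A)+k_2V$ holds, so for a \emph{fixed} $k$ the family of candidate generators is an integer-translate family governed by $V$, exactly as in Case~2.

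The main step, and the one I expect to be the real obstacle, is to show that $k$ is forced to lie in $\{0,-1\}$. For this I would look at the upper block of the Fourier transform: conjugating $Log_{k,k_2}(A)$ by $S_6$ yields $diag(Q_1^{(k)},Q_2^{(k_2)})$ with $Q_1^{(k)}=N_1\,diag(0,\log_k(\lambda_1),\overline{\log_k(\lambda_1)})\,N_1^{-1}$ a real logarithm of the $3\times3$ Markov matrix $A_1$ that depends only on $k$. By Lemma~\ref{lemma:inverse for even}(3), if $Log_{k,k_2}(A)$ is a rate matrix then its upper block $Q_1^{(k)}$ must itself be a rate matrix, i.e. $A_1$ must be embeddable through the branch $k$. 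Because $A_1$ carries a genuine conjugate pair of eigenvalues, Theorem~4 in \cite{Cuthbert} (exactly as used in the proof of Proposition~\ref{prop:Case3}) tells us that the only branches of $A_1$ that can produce a rate matrix are the principal one and the $(-1)$ one; hence no generator of $A$ can have $k\notin\{0,-1\}$, and it suffices to examine $k=0$ and $k=-1$.

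With $k$ pinned down, I would finish by running the Case~2 argument of Proposition~\ref{Prop:Case2} verbatim, but with the base logarithm $Log_{k,0}(A)$ in place of the principal logarithm and $V$ as the shift. Writing $Log_{k,k_2}(A)=Log_{k,0}(A)+k_2V$ and imposing nonnegativity of every off-diagonal entry, the entries with $V_{i,j}=0$ give the condition $\mathcal{N}_k=\emptyset$, those with $V_{i,j}>0$ give $k_2\ge\mathcal{L}_k$, and those with $V_{i,j}<0$ give $k_2\le\mathcal{U}_k$; conversely these inequalities make every off-diagonal entry nonnegative, while the zero row sums are automatic from Proposition~\ref{Prop:Logenum}. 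Thus for a fixed $k\in\{0,-1\}$ a generator of the form $Log_{k,k_2}(A)$ exists if and only if $\mathcal{N}_k=\emptyset$ and $\mathcal{L}_k\le k_2\le\mathcal{U}_k$; since $\mathcal{L}_k,\mathcal{U}_k$ are integers, solvability in $k_2$ is equivalent to $\mathcal{L}_k\le\mathcal{U}_k$. Combining the two admissible values of $k$ with the fact that every generator has $k\in\{0,-1\}$ yields statement~(1), and exhibiting the integer $k_2$ in the nonempty range $[\mathcal{L}_k,\mathcal{U}_k]$ yields the explicit generator in statement~(2).
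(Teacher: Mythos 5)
Your proposal is correct and follows essentially the same route as the paper: block-diagonalize via $S_6$ so that the upper block is a Markov generator of the $3\times 3$ Markov matrix $A_1$, invoke Theorem~4 of Cuthbert to force the branch $k$ of the $\lambda_1$-pair into $\{0,-1\}$, and then rerun the Case~2 translation argument $Log_{k,k_2}(A)=Log_{k,0}(A)+k_2V$ to characterize the admissible $k_2$. The only cosmetic difference is that you quote Lemma~\ref{lemma:inverse for even}(3) where the paper uses Lemma~\ref{lemma:block diagonalization} for the same fact that the upper block of a rate-matrix logarithm is itself a rate matrix.
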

\begin{proof}
{\color{black} The matrix} $A$ is embeddable if and only if it admits a Markov generator. According to Proposition~\ref{Prop:Logenum}, if such a generator {\color{black}$Q$} exists then it can be written as $Log_{k_1,k_2}(A)$ for some $k_1,k_2 \in \mathbb{Z}$.  Therefore, {\color{black}Lemma~\ref{lemma:block diagonalization} implies that 
$F(A)=\begin{pmatrix}
 A_1 & 0\\
 0 & A_2\\
 \end{pmatrix}$ for some matrices $A_1$ and $A_2$}. Moreover,
$F(Q)=\begin{pmatrix}
 Q_1 & 0\\
 0 & Q_2\\
 \end{pmatrix}$ where $Q_1$ and $Q_2$ are real logarithms of $A_1$ and $A_2$ respectively.

 As shown in the proof of Proposition~\ref{prop:Case3}, $A_1$ is actually a Markov matrix and $Q_1$ is a Markov generator for it (see also Lemma~\ref{lemma:block diagonalization}). Moreover, by Theorem 4 in \cite{Cuthbert}, $A_1$ is embeddable if and only if $Log(A_1)$ or $Log_{-1}(A_1)$ are rate matrices. This implies that $Log_{k_1,k_2}(A)$ is a rate matrix if and only if $Log_{0,k_2}(A)$ or $Log_{-1,k_2}$ {\color{black}are rate matrices}. To conclude the proof we proceed as in the proof of Proposition~\ref{Prop:Case2}. Indeed, note that {\color{black}for $k\in \{0,-1\}$,} $Log_{k,k_2}(A) = Log_{k,0}(A) + k_2 V$. Using this, {\color{black}it is immediate to check that $Log_{k,k_2}(A)$ is a rate matrix if and only if $\mathcal{N}_{k} = \emptyset$ and $  \mathcal{L}_{k}  \leq k_2\leq  \mathcal{U}_{k}$}. {\color{black}}
\end{proof}

\section{Discussion}\label{discussion}

 The central symmetry is motivated by the complementarity between both strands of the DNA. When a nucleotide substitution occurs in one strand, there is also a substitution between the corresponding complementary nucleotides on the other strand. Therefore, working with centrosymmetric Markov matrices is the most general approach when considering both DNA strands.
 
 In this paper, we have discussed the embedding problem for centrosymmetric Markov matrices. In Theorem~\ref{theorem:criteria SSM-embeddability}, we have obtained a characterization of the embeddabilty of $4\times 4$ centrosymmetric Markov matrices which are exactly the {\color{black}s}trand {\color{black}s}ymmetric Markov matrices. In particular, we have also shown that if a  $4\times 4$ CS Markov matrix is embeddable, then any of its Markov generators is also a CS matrix. Furthermore, In Section~\ref{sect:6x6}, we have discussed the embeddability criteria for larger centrosymmetric matrices. 
 
 As a consequence of the characterization of Theorem~\ref{theorem:criteria SSM-embeddability}, we have been able to compute and compare the volume of the embeddable $4\times 4$ CS Markov matrices within some subspaces of $4\times 4$ CS Markov matrices. These volume comparisons can be seen in Table~\ref{table:estimated-ratio-volume with Delta>0} and Table~\ref{tab:SSvolume}. For larger matrices, using the results in Section~\ref{sect:6x6}, we have estimated the  proportion of embeddable matrices within the set of all $6\times6$ centrosymmetric Markov matrices and within the subsets of DLC and DD matrices. This is summarized in Table~\ref{Tab:6x6Embed} below. The computations were repeated several times obtaining results with small differences in the values but the same order of magnitude and starting digits.

\begin{table}[H]
  \centering
    \begin{tabular}{|c|c c c |}
    \hline
    Set & Sample points & Embeddable sample points &   Rel. vol. of embeddable matrices \\
   \hline 
   $V^{Markov}_6$& $10^8$& $1370$ &$0.0000137$\\
   $V_{\rm{DLC}}$& $1034607$ & $1362$ & $0.0013164$\\
   $V_{\rm{DD}}$&$3048$&$84$&$0.0275590$\\
    \hline
 
  \end{tabular}
  \caption{\label{Tab:6x6Embed} Relative volume of embeddable matrices within relevant subsets of $6\times6$ centrosymmetric Markov matrices. The results were obtained {\color{black}using}  the hit-and-miss Monte Carlo integration with $10^7$ sample points.}
\end{table}

As we have seen in Section~\ref{sect:embed} and \ref{sect:6x6}, we have only considered in detail the embeddability of CS Markov matrices of size $n=4$ and $n=6$. We expect that the proportion of the embeddable CS Markov matrices within the subset of Markov matrices in larger dimension tends to zero as $n$ grows larger as indicated by Table~\ref{table:estimated-ratio-volume with Delta>0}, \ref{tab:SSvolume}, \ref{table:exact volume Y_n}, and \ref{Tab:6x6Embed}.

These results together with the results obtained for the strand symmetric model (see Table~\ref{tab:SSvolume}) indicate that restricting to homogeneous Markov processes in continuous-time is a very strong restriction because non-embeddable matrices are discarded and their proportion is much larger than that of embeddable matrices. For instance, in the $2\times 2$ case exactly $50\%$ of the matrices are discarded \cite[Table 5]{ardiyansyah2021model}, while in the case of $4\times4$ matrices up to $98.26545\%$ of the matrices are discarded (see Table~\ref{tab:SSvolume}) and in the case of $6\times6$ matrices the amount of discarded matrices is about $99.99863\%$ as indicated in Table~\ref{Tab:6x6Embed}. However, when restricting to  subsets of Markov matrices which are mathematically more meaningful in biological terms, such as DD or DLC matrices, the proportion of embeddable matrices is much higher so that we are discarding less matrices (e.g. for DD we discard $68.41679\%$ of $4\times4$ matrices and $97.2441\%$ of $6\times6$ matrices). This is not to say that it makes no sense to use continuous-time models but to highlight that one should take the above restrictions into consideration when working with these models. Conversely, when working with the whole set of Markov matrices one has to be aware that they might end up considering lots of non-meaningful matrices.
 
\subsection*{Acknowledgements.}
Dimitra Kosta was partially supported by a Royal Society Dorothy Hodgkin Research Fellowship DHF$\backslash$R1$\backslash$201246. Jordi Roca-Lacostena was partially funded by Secretaria d’Universitats i Recerca de la Generalitat de
Catalunya (AGAUR 2018FI\_B\_00947). Muhammad Ardiyansyah is {\color{black}
partially supported by the Academy of Finland Grant No. 323416}.


\bibliographystyle{plain}
\bibliography{references.bib}
\smallskip

\printindex

    
 


\smallskip
\end{document}